\theoremstyle{plain}
\newtheorem{definition}{Definition}[section]
\newtheorem{theorem}{Theorem}[section]
\newtheorem{lemma}{Lemma}[section]
\newtheorem{fact}{Fact}[section]
\theoremstyle{remark}
\newtheorem{example}{Example}
\newcommand{\ie}{\textit{i.e.}\@\xspace}
\newcommand{\wrt}{w.r.t.\@\xspace}
\newcommand{\eg}{\textit{e.g.}\@\xspace}
\newcommand{\cf}{\textit{cf.}\@\xspace}
\newcommand{\Reals}{\ensuremath{\mathbb R}\xspace}
\newcommand{\PosReals}{\ensuremath{\mathbb{R}_{\geq 0}}\xspace}
\newcommand{\Nats}{\ensuremath{\mathbb N}\xspace}
\newcommand{\SD}[1]{\ensuremath{\mathcal{D}\!\left( #1 \right)}\xspace}
\newcommand{\BEX}{\ensuremath{\mathbb{E}_{\leq 1}}\xspace}
\newcommand{\UEX}{\ensuremath{\mathbb{E}}\xspace}
\newcommand{\lfpsymbol}{\mathit{lfp}}
\newcommand{\gfpsymbol}{\mathit{gfp}}
\newcommand{\lfp}[2]{\ensuremath{\lfpsymbol_{#1}\left( #2 \right)}\xspace}
\newcommand{\gfp}[2]{\ensuremath{\gfpsymbol_{#1}\left( #2 \right)}\xspace}
\newcommand{\eqdef}{\triangleq}
\newcommand{\mydot}{\raisebox{-0.5pt}{\scalebox{0.4}{$\, \bullet$}~}}
\newcommand{\CteFun}[1]{\ensuremath{\mathbf{\boldsymbol{#1}}}\xspace}
\newcommand{\tuple}[1]{\langle #1 \rangle}
\newcommand{\EV}[2]{\mathbf{E}_{#1} \left( #2 \right)}
\newcommand{\CharFun}[1]{\ensuremath{ \left[ #1 \right] }\xspace}
\newcommand{\To}{\rightarrow}
\newcommand{\chain}[2]{\ensuremath{{#1}_0 #2 {#1}_1 #2 \cdots}}
\newcommand{\gr}{\varphi}
\newcommand{\ucont}[2]{#1 \overset{\text{\tiny upp-cont}}{\rightarrow} #2}
\newcommand{\lcont}[2]{#1 \overset{\text{\tiny low-cont}}{\rightarrow} #2}
\newcommand{\leftsuper}[2]{{\vphantom{\left\langle #2\right \rangle}}^{#1}{\left\langle #2 \right\rangle}}
\newcommand{\true}{\ensuremath{\mathsf{true}}\xspace}
\newcommand{\false}{\ensuremath{\mathsf{false}}\xspace}
\newcommand{\sem}[1]{\llbracket #1 \rrbracket}
\newcommand{\wpsymbol}{\textnormal{\textsf{wp}}\xspace}
\newcommand{\boldwpsymbol}{\ensuremath{\textnormal{\textsf{\textbf{wp}}}\xspace}}
\renewcommand{\wp}[1]{\ensuremath{\wpsymbol \!\left[ #1 \right] }\xspace}
\newcommand{\wpd}[2]{\wpsymbol \!\left[ #1, #2 \right] \xspace}
\newcommand{\wlpsymbol}{\textnormal{\textsf{wlp}}\xspace}
\newcommand{\wlp}[1]{\ensuremath{\wlpsymbol \!\left[ #1 \right] }\xspace}
\newcommand{\wlpd}[2]{\wlpsymbol \!\left[ #1, #2 \right] \xspace}
\newcommand{\wllpsymbol}{\textnormal{\textsf{w}(\textsf{l})\textsf{p}}\xspace}
\newcommand{\wllp}[1]{\ensuremath{\wllpsymbol \!\left[ #1 \right] }\xspace}
\newcommand{\wllpd}[2]{\wllpsymbol \!\left[ #1, #2 \right] \xspace}
\newcommand{\ewp}[2]{\ensuremath{\wpsymbol \!\left[ #1 \right]_{#2}^{\sharp} }\xspace}
\newcommand{\ewlp}[2]{\ensuremath{\wlpsymbol \!\left[ #1 \right]_{#2}^{\sharp} }\xspace}
\newcommand{\ewllp}[2]{\ensuremath{\wllpsymbol \!\left[ #1 \right]_{#2}^{\sharp} }\xspace}
\newcommand{\subst}[2]{\!\left[{#1}/{#2}\right]}
\newcommand{\decl}{\scaleobj{0.8}{\mathcal{D}}}
\newcommand{\ToExp}[1]{\CharFun{#1}}
\newcommand\restr[2]{{
  \left.\kern-\nulldelimiterspace 
  #1 
  \vphantom{\big|} 
  \right|_{#2} 
  }}
\newcommand{\Var}{\ensuremath{\mathcal{V}}\xspace}
\newcommand{\PName}{\ensuremath{\mathit{P}}\xspace}
\newcommand{\State}{\ensuremath{\mathcal{S}}\xspace}
\newcommand{\Expr}{\ensuremath{\mathcal{E}}\xspace}
\newcommand{\Cmd}{\ensuremath{\mathcal{C}}\xspace}
\newcommand{\Skip}{{\sf skip}\xspace}
\newcommand{\Abort}{{\sf abort}\xspace}
\newcommand{\Ass}[2]{\ensuremath{{#1} \mathrel{\coloneqq} {#2}}}
\newcommand{\If}{{\sf if}\xspace}
\newcommand{\Else}{{\sf else}\xspace}
\newcommand{\Cond}[3]{\If \, (#1) \,\{#2\} \,\Else \, \{#3\}}
\newcommand{\Call}[1]{{\sf call} \, #1}
\newcommand{\Calln}[3]{{\sf call}_{#2}^{#3} \, #1}
\newcommand{\While}{{\sf while}\xspace}
\newcommand{\Do}{{\sf do}\xspace}
\newcommand{\WhileDo}[2]{\While \, (#1) \, \Do \, \{#2\}}
\newcommand{\PChoiceSym}[1]{\ensuremath{[#1]}}
\newcommand{\PChoice}[3]{ \{ #1 \} \: \PChoiceSym{#2} \: \{#3\}}
\newcommand{\SEnv}{\ensuremath{\mathsf{SEnv}}\xspace}
\newcommand{\LSEnv}{\ensuremath{\mathsf{LSEnv}}\xspace}
\newcommand{\RtEnv}{\ensuremath{\mathsf{RtEnv}}\xspace}
\newcommand{\prog}[2]{\tuple{#1,#2}}
\newcommand{\GCL}{\textsf{GCL}\xspace}
\newcommand{\pGCL}{\textsf{pGCL}\xspace}
\newcommand{\pRGCL}{\textsf{pRGCL}\xspace}
\newcommand{\lrule}[1]{\textnormal{\small \textsf{[#1]}}\xspace}
\newcommand{\derivsymbol}{\:\Vdash\:}
\newcommand{\deriv}[2]{\ensuremath{#1 \derivsymbol  #2}}
\newcommand{\by}[1]{\text{\small \{#1{}\}}}
\newcommand{\Automaton}{\ensuremath{\mathfrak P}\xspace}
\newcommand{\rewardsymbol}{\ensuremath{\mathsf{rew}}\xspace}
\newcommand{\rew}[1]{\ensuremath{\rewardsymbol\left(#1\right)}\xspace}
\newcommand{\Paths}[1]{\ensuremath{\mathsf{Paths}^{#1}}\xspace}
\newcommand{\PathsP}{\ensuremath{\Paths{\Automaton}}\xspace}
\newcommand{\ExpRew}[2]{\ensuremath{\mathsf{ExpRew}^{#1}\left({#2}\right)}\xspace}
\newcommand{\Prob}[2]{\ensuremath{\mathsf{Prob^{#1}\left({#2}\right)}}\xspace}
\newcommand{\LabUsed}{\ensuremath{\mathsf{Lab}_{\ast}}\xspace}
\newcommand{\Term}{\ensuremath{{\downarrow}}\xspace}
\newcommand{\Sink}{\ensuremath{\mathsf{Term}}\xspace}
\newcommand{\StmtOfLabelSymbol}{\ensuremath{\mathsf{stmt}}\xspace}
\newcommand{\StmtOfLabel}[1]{\ensuremath{\StmtOfLabelSymbol\left(#1\right)}\xspace}
\newcommand{\SuccOneSymbol}{\ensuremath{\mathsf{succ}_{1}}\xspace}
\newcommand{\SuccTwoSymbol}{\ensuremath{\mathsf{succ}_{2}}\xspace}
\newcommand{\SuccOne}[1]{\ensuremath{\SuccOneSymbol\left(#1\right)}\xspace}
\newcommand{\SuccTwo}[1]{\ensuremath{\SuccTwoSymbol\left(#1\right)}\xspace}
\newcommand{\OpState}[2]{\ensuremath{\left\langle {#1},\, {#2} \right\rangle}\xspace}
\newcommand{\OpTrans}[3]{\ensuremath{{#1} \xrightarrow{#2} #3}}
\newcommand{\Init}{\ensuremath{\mathsf{init}}}
\newcommand{\OPRMC}[3]{\ensuremath{\Automaton_{#2}^{#3}\left\llbracket #1 \right\rrbracket}\xspace}
\newcommand{\eetsymbol}{\ensuremath{\textnormal{\textsf{ert}}}}
\newcommand{\boldeetsymbol}{\ensuremath{\textnormal{\textsf{\textbf{ert}}}}}
\newcommand{\eet}[1]{\ensuremath{\eetsymbol\left[{#1}\right]}}
\newcommand{\eetd}[2]{\ensuremath{\eetsymbol\left[{#1},{#2}\right]}}
\newcommand{\ctert}[1]{\ensuremath{\mathbf{#1}}}
\newcommand{\rt}{t}
\newcommand{\Runtimes}{\ensuremath{\mathbb{T}}}
\newcommand{\eeet}[2]{\ensuremath{\eetsymbol\left[{#1}\right]_{#2}^{\sharp}}}
\newcommand{\eeetd}[3]{\ensuremath{\eetsymbol\left[{#1},{#2}\right]_{#3}^{\sharp}}}
\newcommand{\ctertenv}[1]{\ensuremath{\underline{\mathbf{#1}}}}
\begin{document}

\setlength{\pdfpageheight}{\paperheight}
\setlength{\pdfpagewidth}{\paperwidth}

\conferenceinfo{LICS '16}{July 5--8, 2016, New York City, State of New York, USA} 
\copyrightyear{2016} 
\copyrightdata{978-1-nnnn-nnnn-n/yy/mm} 
\doi{nnnnnnn.nnnnnnn}




\titlebanner{}        
\preprintfooter{Reasoning about Recursive Probabilistic Programs}   

\title{Reasoning about Recursive Probabilistic Programs\thanks{This work was supported by the Excellence Initiative of the German federal and state government.}}

\authorinfo%
{Federico Olmedo \and Benjamin Lucien Kaminski\and Joost-Pieter Katoen \and Christoph Matheja }%
{RWTH Aachen University, Germany}
{\{federico.olmedo, benjamin.kaminski, katoen, matheja\}@cs.rwth-aachen.de}%

\maketitle

\begin{abstract}
  This paper presents a \wpsymbol--style calculus for obtaining expectations on
  the outcomes of (mutually) recursive probabilistic programs. We provide
  several proof rules to derive one-- and two--sided bounds for such
  expectations, and show the soundness of our \wpsymbol--calculus with respect
  to a probabilistic pushdown automaton semantics. 
  We also give a \wpsymbol--style calculus for obtaining bounds on the expected runtime of
  recursive programs that can be used to determine the (possibly infinite)
  time until termination of such programs.
\end{abstract}

\category{F.3.1}{Logics and Meaning of Programs}{Specifying and Verifying and Reasoning about Programs.}

\keywords recursion $\cdot$ probabilisitic programming $\cdot$ program
verification $\cdot$ weakest pre--condition calculus $\cdot$ expected runtime.

\section{Introduction}
\label{sec:intro}
Uncertainty is nowadays more and more pervasive in computer science. 
Applications have to process inexact data from, e.g., unreliable sources such as wireless sensors, machine learning methods, or noisy biochemical reactors.
Approximate computing saves resources such as e.g.\ energy by sacrificing ``strict'' correctness for applications like image processing that can tolerate some defects in the output by running them on unreliable hardware, circuits that every now and then (deliberately) produce incorrect results~\cite{DBLP:conf/oopsla/CarbinMR13}. 
\emph{Probabilistic programming}~\cite{Pfeffer:2016} is a key technique for dealing with uncertainty.  
Put in a nutshell, a probabilistic program takes a (prior) probability distribution as input and obtains a (posterior) distribution.
Probabilistic programs are not new at all; they have been investigated by Kozen~\cite{Kozen:81} and others in the early eighties.
In the last years, the interest in these programs has rapidly grown.
In particular, the incentive by the AI community to use probabilistic programs for describing complex Bayesian networks has boosted the field of probabilistic programming~\cite{DBLP:conf/icse/GordonHNR14}.
Probabilistic programs are used in, amongst others, machine learning, systems biology, security, planning and control, quantum computing, and software--defined networks.
Indeed almost all programming languages, either being functional, object--oriented, logical, or imperative, in the meanwhile have a probabilistic variant. 

This paper focuses on \emph{recursive} probabilistic programs.
Recursion in Bayesian networks where a variable associated with a particular domain entity can depend probabilistically on the same variable associated to a different entity, is ``common and natural''~\cite{DBLP:conf/aaai/PfefferK00}.
Recursive probability models occur in gene regulatory networks that describe (possibly recursive) rule--based dependencies between genes.
Finally, programs describing randomized algorithms are often recursive by nature.
``Sherwood" algorithms exploit randomization to increase efficiency by avoiding or reducing the probability of worst--case behavior.
Varying quicksort by selecting the pivot randomly (rather than doing this deterministically) avoids very uneven splits of the input array.
Its worst--case runtime is the same as the average--case runtime of Hoare's deterministic quicksort since the likelihood of obtaining a quadratic worst--case is significantly lowered~\cite[Sec.\ 2.5]{Mitzenmacher:2005}.
A ``Sherwood" variant of binary search splits the input array at a random position, and yields a similar effect---expected runtimes of worst--, average-- and best--case are aligned~\cite[Sec.\ 11.4.4]{McConnell:2008}. 
``Sherwood" techniques are also useful in selection, median finding, and hashing (such as Bloom filters).

The purpose of this paper is to provide a framework for enabling \emph{formal reasoning about recursive probabilistic programs}.
This rigorous reasoning is important to prove the \emph{correctness} of such programs.
This includes statements about the expected outcomes of recursive probabilistic programs, as well as assertions about their termination probability.
These are challenging problems.
For instance, consider the (at first sight simple) recursive program:
\[
\PName_{\mathsf{rec_3}} \, \rhd \; \;
\PChoice{\Skip}{\nicefrac{1}{2}}{\Call{\PName_{\mathsf{rec_3}}};~ \Call{\PName_{\mathsf{rec_3}}};~ \Call{\PName_{\mathsf{rec_3}}}}
\]
which terminates immediately with probability $\nicefrac 1 2$ or invokes itself three times otherwise.
It turns out that this program terminates with (irrational) probability $\tfrac{\sqrt{5}-1}{2}$ ---the reciprocal of the golden ratio.

Correctness proofs of the ``Sherwood" versions of quicksort and binary search do exist but typically rely on mathematical ad--hoc reasoning about expected values. The aim of this paper is to enable such proofs by means of formal verification of the algorithm itself.

Besides correctness, our interest is in analyzing the \emph{expected runtime} of recursive probabilistic programs in a rigorous manner.
This enables obtaining insight in their \emph{efficiency} and moreover provides a method to show whether the expected time until termination is finite or infinite---a crucial difference for probabilistic programs~\cite{DBLP:conf/mfcs/KaminskiK15,luis}.
Again, analyses of expected runtimes of recursive randomized algorithms do exist using standard mathematics~\cite[Sec.\ 2.5]{Mitzenmacher:2005}, probabilistic recurrence relations~\cite{DBLP:journals/jacm/Karp94}, or dedicated techniques for divide--and--conquer algorithms~\cite{DBLP:journals/dam/Dean06}, usually taking for granted---far from trivial---relationships between the underlying random variables. 
Here the aim is to do this from first principles 
by formal verification techniques, directly on the algorithm.

To accomplish these goals, this paper presents two weakest pre--condition--style calculi for reasoning about recursive probabilistic programs.
The first calculus is an extension of McIver and Morgan's calculus~\cite{McIver:2004} for non--recursive programs and enables obtaining expectations on the outcomes of (mutually) recursive probabilistic programs. 
Compared to an existing extension with recursion~\cite{McIver:2001b}, our approach provides a clear separation between syntax and semantics.
We prove the soundness of our \textsf{wp}--calculus with respect to a probabilistic pushdown automaton semantics.
This is complemented by a set of proof rules to derive one-- and two--sided bounds for expected outcomes of recursive programs.
We illustrate the usage of these proof rules by analyzing the termination probability of the example program above.
Subsequently, we provide a variant of our \textsf{wp}--style calculus for obtaining bounds on the expected runtime of probabilistic programs. 
This extends our recent approach~\cite{Kaminski:ETAPS:2016} towards treating recursive programs.
The application of this calculus includes proving positive almost--sure termination, i.e., does a program terminate with probability one in finite expected time?
Our framework enables (in a very succinct way) establishing a (well--known) relationship between the expected runtime of a probabilistic program with its termination behavior: If an ($\Abort$--free) program has finite expected runtime, then it terminates almost--surely.
We provide a set of proof rules for expected runtimes and show the applicability of our approach by proving several correctness properties as well as the expected runtime of the `Sherwood' variant of binary search.

\paragraph{Organization of the paper.}
\autoref{sec:language} presents our probabilistic programming language with recursion.
\autoref{sec:wp-semantics} presents the \wpsymbol--style semantics for reasoning about program correctness.
\autoref{sec:proof-rules} introduces several proof rules for reasoning about the correctness of recursive programs.
\autoref{sec:eet} presents the expected runtime transformer together with proof rules for recursive programs.
\autoref{sec:operational} describes an operational probabilistic pushdown automata semantics and relates it to the \wpsymbol--style semantics.
\autoref{sec:extensions} discusses some extensions of the results presented in the previous sections.
\autoref{sec:casestudy} presents a detailed analysis of the `Sherwood' variant of binary search.
Finally, \autoref{sec:related} discusses related work and \autoref{sec:conclusion} concludes.
Detailed proofs are provided in the appendix, which is added for the convenience of the reviewer, and will not be part of the final version (if accepted).

\section{Programming Model}
\label{sec:language}
To model our probabilistic recursive programs we consider a simple imperative
language \`a la Dijkstra's Guarded Command Language (\GCL)~\cite{Dijkstra} with
two additional features:
First, a (binary) probabilisitic choice
operator to endow our programs with a probabilistic behavior. For instance,
the program
\[
\PChoice{\Ass{x}{x{+}1}}{\nicefrac{1}{3}}{\Ass{x}{x{-}1}}
\]
either increases $x$ with probability $\nicefrac{1}{3}$ or decreases it with
probability $\nicefrac{2}{3} = 1 - \nicefrac{1}{3}$. Second, we allow for
procedure calls. 
For simplicity, our development assumes the presence of only a single procedure, say
$\PName$. We defer the treatment of multiple (possibly mutually recursive)
procedures to \autoref{sec:extensions}.

Formally, a \emph{command} of our language, coined \pRGCL, is defined by the
following grammar:
$$
\begin{array}{r@{\ \,}c@{\ \,}l@{\qquad}l}
\Cmd  
   &::= & \Skip                    & \mbox{no--op}\\
   &\mid& \Ass{\Var}{\Expr}        & \mbox{assignment}\\
   &\mid& \Abort                    & \mbox{abortion}\\
   &\mid& \Cond{\Expr}{\Cmd}{\Cmd} & \mbox{conditional branching}\\
   &\mid& \PChoice{\Cmd}{p}{\Cmd} & \mbox{probabilistic choice}\\
   &\mid& \Call{\PName}  & \mbox{procedure call}\\
   &\mid& \Cmd; \, \Cmd         & \mbox{sequential composition}   
\end{array}
$$
We assume a set \Var of program \emph{variables} and a set \Expr of
\emph{expressions} over program variables. As usual, we assume that program
\emph{states} are variable valuations, \ie mappings from variables to values;
let $\State$ be the set of program states. Finally, we also assume an
interpretation function $\sem{ \Expr }$ for expressions that maps program states
to values.

No--op, assignments, conditionals and sequential composition are
standard. $\PChoice{c_1}{p}{c_2}$ represents a probabilistic choice: 
it behaves as $c_1$ with probability $p$ and
as $c_2$ with probability $1{-}p$. 
Finally $\Call{\PName}$
makes a (possibly recursive) call to procedure $\PName$. 

For our development we assume that procedure \PName manipulates the global
program state and we thus dispense with parameters and $\textsf{return}$
statements for passing information across procedure calls. The declaration of
\PName consists then of its body and we use $P \: \triangleright \: c$ to de\-note
that $c \in \Cmd$ is the body of $\PName$. We say that a command is
\emph{closed} if it contains no procedure calls. 

A \pRGCL \emph{program} is then given by a pair $\prog{c}{\decl}$, where
$c \in \Cmd$ is the ``main'' command and $\decl \colon \{\PName\} \To \Cmd$ is
the declaration of \PName.\footnote{We chose the declaration of $\PName$ to be a
  mapping from a singleton and not the mere body of $P$ because this minimizes
  the changes to accommodate the subsequent treatment to multiple procedures.} In
order not to clutter the notation, when $c$ is closed we simply write $c$ for
program $\prog{c}{\decl}$, for any declaration $\decl$.

\begin{example}
To illustrate the use of our language consider the following declaration of
a (faulty) recursive procedure for computing the factorial of a natural number stored in $x$:
\begin{align*}
\PName_{\mathsf{fact}} \, \rhd \; \;
&\If~(x \leq 0)~\{\Ass{y}{1} \}~\Else\\[-1pt]
&\quad \bigl\{ \: \{\Ass{x}{x{-}1};\, \Call{\PName_{\mathsf{fact}}};\, \Ass{x}{x{+}1}\}~[\nicefrac{5}{6}]\\[-1pt]
&\quad\ \ \: \{\Ass{x}{x{-}2};\, \Call{\PName_{\mathsf{fact}}};\, \Ass{x}{x{+}2}\};\,
\Ass{y}{y \cdot x} \bigr\}
\end{align*}
In each recursive call $x$ is decreased either by one or two, with
probability $\nicefrac{5}{6}$ and $\nicefrac{1}{6}$, respectively. Therefore some factors might be
missing in the computation of the factorial of $x$. \hfill $\triangle$
\end{example}

As a final remark, observe that the language does not support guarded loops in a
native way because they can be simulated. 
Concretely, the usual guarded loop $\WhileDo{E}{c}$ is simulated by the recursive
procedure $\PName_{\mathsf{while}} \, \rhd \;\; \Cond{E}{c;~ \Call{\PName_{\mathsf{while}}}}{\Skip}$.

\section{Weakest Pre--Expectation Semantics}
\label{sec:wp-semantics}
Inspired by~\citet{Kozen:81}, \citet{McIver:2001b}
generalized Dijk\-stra's weakest pre--condition semantics to (a variant of)
\pRGCL.  In particular, they defined the semantics of recursive programs using
fixed point techniques.  In this section we present a different approach where
the behavior of a recursive program is defined as the limit of its finite
approximations (or truncations) and prove it equivalent to their definition
based on fixed points. 

\subsection{Definition}\label{sec:wp-def}
The wp-semantics over \pRGCL generalizes Dijkstra's weakest
precondition semantics over \GCL twofold:
First, instead of being predicates over program states, pre-- and post--conditions are now (non--negative) real--valued
functions over program states. 
Secondly, instead of merely evaluating a (boolean--valued) post--condition in the final state(s) of a program, we now \emph{measure} the expected value of a (real--valued) post--condition \wrt the distribution of final states. 
Formally, if $f \colon \State \To \Reals^{\geq
  0}$ we let
\begin{equation*}\label{eq:wp-def}
\wpd{c}{\decl}\!(f)  \:\eqdef\: \lambda s \mydot \EV{\sem{c,\decl}(s)}{f}~,
\end{equation*}
where $\sem{c, \decl}(s)$ denotes the distribution of final states from executing
$\prog{c}{\decl}$ in initial state $s$ and $\EV{\sem{c, \decl}(s)}{f}$ denotes the
expected value of $f$ \wrt the distribution of final states
$\sem{c, \decl}(s)$. Consider for instance program 
\[
c_{\mathsf{coins}}\, \boldsymbol{\colon}\; \;
\PChoice{\Ass{x}{0}}{\nicefrac{1}{2}}{\Ass{x}{1}}; 
 \PChoice{\Ass{y}{0}}{\nicefrac{1}{3}}{\Ass{y}{1}}
\]
that flips a pair of fair and biased coins. We have
\begin{align*}
\wp{c_{\mathsf{coins}}}\!(f)\,=\;&\lambda s \mydot \tfrac{1}{6} \, f(s\subst{x,\! y}{0,\! 0}) +
\tfrac{1}{3} \, f(s\subst{x,y}{0,1})\\
& \ \,  + \tfrac{1}{6} \, f(s\subst{x,\! y}{1,\! 0}) +
\tfrac{1}{3} \, f(s\subst{x,y}{1,1})~,
\end{align*}
where $s\subst{x_1,\ldots,x_n}{v_1,\ldots,v_n}$ represents the state obtained by
updating in $s$ the value of variables $x_1,\ldots,x_n$ to $v_1,\ldots,v_n$,
respectively. As above, when $c$ is closed, we usually write
$\wp{c}$ instead of $\wpd{c}{\decl}$, as a declaration $\decl$ plays no role. 

Observe that, in particular, if $\CharFun{A}$ denotes the indicator
function of a predicate $A$ over program states, $\wpd{c}{\decl}\!(\CharFun{A})(s)$ gives
the probability of (terminating and) establishing $A$ after executing
$\prog{c}{\decl}$ from state $s$. For instance we can determine the probability
that the above program $c_{\mathsf{coins}}$ establishes $x=y$ from state $s$ through
\[
\wp{c_{\mathsf{coins}}}\!(\CharFun{x{=}y})(s) \:=\: \tfrac{1}{6} \cdot 1 + \tfrac{1}{3} \cdot 0 +
\tfrac{1}{6} \cdot 0 + \tfrac{1}{3} \cdot 1 \:=\: \tfrac{1}{2}~.
\]

Moreover, for a deterministic program $c$ that from state $s$ terminates in state $s'$, $\sem{c,\decl}(s)$ is the Dirac distribution that concentrates all its mass in $s'$ and $\wpd{c}{\decl}\!\big(\!\CharFun{A}\!\big)(s)$ reduces to $1 \cdot \CharFun{A}(s')$, which gives $1$ if $s' \models A$ and $0$ otherwise. 
This yields the classical weakest pre--condition semantics of ordinary sequential programs.

To reason about partial program correctness, \pRGCL also admits a liberal version of the transformer $\wp{\:\cdot\:}$, namely $\wlp{\:\cdot\:}$. 
In the same vein as for ordinary sequential programs, $\wpd{c}{\decl}\!(\CharFun{A})(s)$ gives the probability that program $\prog{c}{\decl}$ terminates and establishes event $A$ from state $s$, while $\wlpd{c}{\decl}\!(\CharFun{A})(s)$ gives the probability that $\prog{c}{\decl}$ terminates and establishes $A$, or diverges.

Formally, the transformer \wpsymbol operates on unbounded, so--called
\emph{expectations} in $\UEX \eqdef \left\{f ~\middle|~ f\colon\State \To [0,\, \infty]\right\}$,
while the transformer \wlpsymbol operates on bounded expectations in
$\BEX \eqdef \{f \mid f\colon\State \To [0,1] \}$.  Our expectation transformers
have thus type $\wp{\:\cdot\:} \colon \UEX \To \UEX $ and
$\wlp{\:\cdot\:} \colon \BEX \To \BEX$.\footnote{The transformer \textsf{wlp} is well--typed
  because $\wlpd{c}{\decl}\!(f)(s) \leq \sup_{s'} f(s')$ for every state
  $s$.} In the probabilistic setting pre-- and post--conditions are thus
referred to as \emph{pre--} and \emph{post--expectations}.

  \paragraph{Notation.} We use boldface for constant
 expectations, \eg $\CteFun{1}$ denotes the constant expectation
 $\lambda s \mydot 1$. Given an arithmetical expression $E$ over program
 variables we write $E$ for the expectation that in states $s$ returns
 $\sem{E}(s)$. Given a Boolean expression $G$ over program variables let
 $\ToExp{G}$ denote the $\{0,1\}$--valued expectation that on state $s$
 returns $1$ if $\sem{G}(s)=\true$ and $0$ if $\sem{G}(s)=\false$. Finally, given
 variable $x$, expression $E$ and expectation $f$ we use $f\subst{x}{E}$ to
 denote the expectation that on state $s$ returns
 $f(s\subst{x}{\sem{E}(s)})$. Moreover, ``$\preceq$'' denotes the
 pointwise order between expectations, \ie $f_1 \preceq f_2$ iff
 $f_1(s) \leq f_2(s)$ for all states $s \in \State$.

\subsection{Inductive Characterization}
\label{sec:wp-rules}
\citet{McIver:2001b} showed that the expectation transformers \textsf{wp} and \textsf{wlp} can be
defined by induction on the program's structure. We now recall their result,
taking an alternative approach to handle recursion: While
\citeauthor{McIver:2001b} use fixed point techniques, we
follow~\eg~\citet{Hehner:AI:79} and define the semantics of a recursive
procedure as the limit of an approximation sequence.  We believe that this
approach is sometimes more intuitive and closer to the operational view of
programs.

In the same way as the semantics of loops is defined as the limit of their finite
unrollings, we define the semantics of recursive procedures as the limit of
their finite inlinings. 
%
%
Formally, the \emph{$n$-th inlining} $\Calln{\PName}{n}{\decl}$ of
procedure \PName \wrt declaration $\decl$ is defined inductively by
\begin{align*}
\Calln{\PName}{0}{\decl} &\:=\:  \Abort\\
\Calln{\PName}{n+1}{\decl} &\:=\: \decl(P)
\subst{\Call{\PName}}{\Calln{\PName}{n}{\decl}}~,
\end{align*}
where $c\subst{\Call{\PName}}{c'}$ denotes the syntactic replacement of
every occurrence of $\Call{\PName}$ in $c$ by $c'$.\footnote{The formal definition of
  this syntactic replacement proceeds by a routine induction on the structure of
  $c$; see \autoref{fig:command-subst} in \autoref{sec:subst} for details.} The
family of commands $\Calln{\PName}{n}{\decl}$ define a sequence of
approximations to $\Call{\PName}$ where $\Calln{\PName}{0}{\decl}$ is the ``poorest''
approximation, while the larger the $n$, the more precise the approximation
becomes. Observe that, in general, $\Calln{\PName}{n+1}{\decl}$ mimics the exact
behavior of $\Call{\PName}$ for all executions that finish after at most $n$
recursive calls.

The expectation transformer semantics over $\pRGCL$ is provided in
\autoref{fig:wp-sem}. The action of transformers on procedure calls is defined
as the limit of their action over the $n$-th inlining of the procedures. For the
rest of the language constructs, we follow \citet{McIver:2001b}. Let us briefly
explain each of the rules. $\wpd{\Skip}{\decl}$ behaves as the
identity since $\Skip$ has no effect. The pre--expectation of an assignment is
obtained by updating the program state and then applying the post--expectation,
\ie $\wpd{\Ass{x}{E}}{\decl}$ takes post--expectation $f$ to pre--expectation
$f\subst{x}{E}=\lambda s\mydot f(s\subst{x}{\sem{E}(s)})$. $\wpd{\Abort}{\decl}$
maps any post--expectation to the constant pre--expectation
$\CteFun{0}$. Observe that expectation $\CteFun{0}$ is the probabilistic
counterpart of predicate $\false$.  $\wpd{\Cond{G}{c_1}{c_2}}{\decl}$ behaves
either as $\wpd{c_1}{\decl}$ or $\wpd{c_2}{\decl}$ according to the evaluation
of $G$. $\wpd{\PChoice{c_1}{p}{c_2}}{\decl}$ is obtained as a convex combination
of $\wpd{c_1}{\decl}$ and $\wpd{c_2}{\decl}$, weighted according to
$p$. $\wpd{\Call{\PName}}{\decl}$ behaves as the limit of $\wpsymbol$ on the
sequence of finite truncations (or inlinings) of $\PName$. We take the supremum
because the sequence is increasing. Observe that we advertently include no
declaration in $\wp{\Calln{\PName}{n}{\decl}}\!(f)$ because
$\Calln{\PName}{n}{\decl}$ is a closed command for every $n$. Finally,
$\wpd{c_1;c_2}{\decl}$ is obtained as the functional composition of
$\wpd{c_1}{\decl}$ and $\wpd{c_2}{\decl}$. The $\wlpsymbol$ transformer follows the
same rules as $\wpsymbol$, except for the \Abort statement and procedure
calls. $\wlpd{\Abort}{\decl}$ takes any post--expectation to pre--expectation
$\CteFun{1}$. (Expectation $\CteFun{1}$ is the probabilistic counterpart of 
predicate $\true$.)  $\wlpd{\Call{\PName}}{\decl}$ also behaves as the limit of
$\wlpsymbol$ on the sequence of finite truncations of $\PName$. This time we
take the infimum because the sequence is decreasing.

\begin{figure}[t]
\scalebox{0.95}{
$
\begin{array}{ll}
\specialrule{0.8pt}{0pt}{2pt}
\boldsymbol{c} & \boldsymbol{\wpd{c}{\decl}\!(f)}\\
\specialrule{0.8pt}{2pt}{2pt}
%
\Skip   & f \\[1.5pt]
\Ass{x}{E}  & f\!\subst{x}{E} \\[1.5pt]
\Abort & \CteFun{0} \\[1.5pt]
\Cond{G}{c_1}{c_2}  & 
    \ToExp{G} \cdot \wpd{c_1}{\decl}\!(f) + \ToExp{\lnot G} \cdot \wpd{c_2}{\decl}
    \!(f) \\[3pt]
\PChoice{c_1}{p}{c_2} &   p \cdot \wpd{c_1}{\decl}\!(f)  + (1{-}p) \cdot
\wpd{c_2}{\decl}\!(f) \\[3pt]
\Call{\PName} &  \sup_{n} \wp{\Calln{\PName}{n}{\decl}}\!(f)  \\[3pt]
c_1;c_2 & \wpd{c_1}{\decl} \bigl(\wpd{c_2}{\decl}\!(f)\bigr)
\\[5pt]
\specialrule{0.8pt}{0pt}{2pt}
\boldsymbol{c} & \boldsymbol{\wlpd{c}{\decl}\!(f)}\\
\specialrule{0.8pt}{2pt}{2pt}
\Abort  & \CteFun{1} \\[1.5pt]
\Call{\PName} &  \inf_{n} \wlp{\Calln{\PName}{n}{\decl}}\!(f)
\end{array}
$}
\caption{Expectation transformer semantics of \pRGCL programs.  The
  $\wlp{\:\cdot\:}$ transformer follows the same rules as $\wp{\:\cdot\:}$, expect for
  $\Abort$ and procedure calls. Sum, product, supremum and infimum
  over expectations are all defined pointwise.}
\label{fig:wp-sem}
\end{figure}

\begin{example}
Reconsider $c_{\mathsf{coins}} = c_1;\,c_2$ from
\autoref{sec:wp-def} with
\[
c_1  \boldsymbol{\colon}\; 
\PChoice{\Ass{x}{0}}{\nicefrac{1}{2}}{\Ass{x}{1}} 
\text{\ \ and \ \ } 
c_2  \boldsymbol{\colon}\;
\PChoice{\Ass{y}{0}}{\nicefrac{1}{3}}{\Ass{y}{1}}\;.
\]
We use our weakest pre--expectation calculus to formally determine the
probability that the outcome of the two coins coincide:
\begin{align*}
\MoveEqLeft[1]
\wp{c_{\mathsf{coins}}}\!(\CharFun{x{=}y}) \\
& =~ \wp{c_1} \! \bigl( \wp{c_2}\!(\CharFun{x{=}y})  \bigr)\\
& =~ \wp{c_1} \! \bigl( \tfrac{1}{3} \cdot \wp{\Ass{y}{0}}\!(\CharFun{x{=}y}) +
  \tfrac{2}{3} \cdot \wp{\Ass{y}{1}}\!(\CharFun{x{=}y})  \bigr) \\
& =~ \wp{c_1} \! \bigl( \tfrac{1}{3} \cdot \CharFun{x{=}0} +
  \tfrac{2}{3} \cdot \CharFun{x{=}1}  \bigr) \\
& =~ \tfrac{1}{2} \cdot \wp{\Ass{x}{0}} \! \bigl( \tfrac{1}{3} \cdot \CharFun{x{=}0} +
  \tfrac{2}{3} \cdot \CharFun{x{=}1}  \bigr) \\
&\qquad+ \tfrac{1}{2} \cdot \wp{\Ass{x}{1}} \! \bigl( \tfrac{1}{3} \cdot \CharFun{x{=}0} +
  \tfrac{2}{3} \cdot \CharFun{x{=}1}  \bigr)  \\
& =~ \tfrac{1}{2} \cdot \bigl( \tfrac{1}{3} \cdot \CharFun{0{=}0} +
  \tfrac{2}{3} \cdot \CharFun{0{=}1}  \bigr) + \tfrac{1}{2} \cdot \bigl( \tfrac{1}{3} \cdot \CharFun{1{=}0} +
  \tfrac{2}{3} \cdot \CharFun{1{=}1}  \bigr)  \\
& =~ \tfrac{1}{2} \cdot \CteFun{\tfrac{1}{3}} + \tfrac{1}{2} \cdot
  \CteFun{\tfrac{2}{3}} ~=~ \CteFun{\tfrac{1}{2}} \tag*{$\triangle$}
\end{align*} 
\end{example}

The transformers $\wpsymbol$ and $\wlpsymbol$ enjoy several appealing algebraic properties, which
we summarize below.
\begin{lemma}[Basic properties of $\wllpsymbol$]
\label{thm:wp-basic-prop}
For every program $\prog{c}{\decl}$, every $f_1, f_2$, and increasing $\omega$--chain
$\chain{f}{\preceq}$ in $\UEX$, $g_1,g_2$, and every decreasing $\omega$--chain
$\chain{g}{\succeq}$ in $\BEX$, and scalars
$\alpha_1,\alpha_2 \in \Reals_{\geq 0}$ it holds:
\begin{flushleft}
\begin{tabular}{@{}l@{\hspace{1em}}l}
   \textnormal{Continuity:} & $\sup\nolimits_n \wpd{c}{\decl}\!(f_n) \:=\:
                              \wpd{c}{\decl} \! (\sup\nolimits_n  f_n) $\\[0.5ex]
& $\inf\nolimits_n \wlpd{c}{\decl}\! (g_n) \:=\: \wlpd{c}{\decl}\!
  (\inf\nolimits_n g_n)$ \\[1ex]
	\textnormal{Monotonicity:}			& $f_1 \preceq f_2 \implies \wpd{c}{\decl}\!(f_1) \preceq \wpd{c}{\decl}\!(f_2)$\\[0.5ex]
& $g_1 \preceq g_2 \implies \wlpd{c}{\decl}\!(g_1) \preceq \wlpd{c}{\decl}\!(g_2)$\\[1ex]

	\textnormal{Linearity:} 			& $\wpd{c}{\decl}\!(\alpha_1
                                                  \cdot f_1 + \alpha_2 \cdot
                                                  f_2)$\\[0.5ex]
  &$\:=\: \alpha_1 \cdot
\wpd{c}{\decl}\!(f_1) + \alpha_2 \cdot \wpd{c}{\decl}\!(f_2)$\\[1ex]
	\textnormal{Preserv.~of $\CteFun{0}$,$\CteFun{1}$:} 	& $\wpd{c}{\decl}\!(\CteFun{0}) =
                                          \CteFun{0}\ $ and $\ \wlpd{c}{\decl}\!(\CteFun{1}) =
                                          \CteFun{1}$
\end{tabular}
\end{flushleft}
\end{lemma}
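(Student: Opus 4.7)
The plan is to prove the four properties simultaneously by structural induction on $c$. The only case that needs care is $\Call{\PName}$, whose semantics refers to the inlinings $\Calln{\PName}{m}{\decl}$, which as syntactic objects are not smaller than $\Call{\PName}$ itself. The standard workaround is to observe that each inlining is \emph{closed}, so the induction does not circle back on itself: the remaining inductive cases rely on the IH as usual, while the $\Call{\PName}$ case relies on the closed-command instance of the lemma, which is proved by the same structural induction restricted to closed commands and in which the $\Call{\PName}$ case never arises.

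For the base cases $\Skip$, $\Ass{x}{E}$, and $\Abort$, each property follows by direct inspection of \autoref{fig:wp-sem}. For the non-leaf closed cases $\Cond{G}{c_1}{c_2}$, $\PChoice{c_1}{p}{c_2}$, and $c_1; c_2$, the properties are inherited from the IH via routine manipulations: conditionals and probabilistic choices are pointwise convex combinations, and so each property reduces to the analogous property for sums, products, and (monotone) limits of nonnegative functions; sequential composition threads the IH twice, and for continuity one uses the IH-monotonicity on $c_2$ to ensure that $\{\wpd{c_2}{\decl}(f_n)\}_n$ remains an increasing $\omega$-chain so that the continuity IH on $c_1$ applies. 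Preservation of $\CteFun{0}$ (resp.\ $\CteFun{1}$) likewise propagates by inspection.

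The main obstacle, and the only non-routine step, is the procedure-call case. Continuity of $\wpsymbol$ at $\Call{\PName}$ requires swapping two suprema:
\begin{align*}
\wpd{\Call{\PName}}{\decl}\bigl(\sup\nolimits_n f_n\bigr)
&= \sup\nolimits_m \wp{\Calln{\PName}{m}{\decl}}\bigl(\sup\nolimits_n f_n\bigr) \\
&= \sup\nolimits_m \sup\nolimits_n \wp{\Calln{\PName}{m}{\decl}}(f_n) \\
&= \sup\nolimits_n \sup\nolimits_m \wp{\Calln{\PName}{m}{\decl}}(f_n) \\
&= \sup\nolimits_n \wpd{\Call{\PName}}{\decl}(f_n).
\end{align*}
The second step is the closed-command IH applied to each $\Calln{\PName}{m}{\decl}$; the interchange in the third step is legitimate because the doubly-indexed family is monotone in both $m$ and $n$. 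Monotonicity in $n$ is the IH; monotonicity in $m$ is the fact that the inlinings form an increasing $\omega$-chain under $\wp{\cdot}$, which is already implicit in the well-definedness of the $\sup$ in \autoref{fig:wp-sem} and can be recorded as a short side lemma by induction on $m$. The symmetric argument for $\wlpsymbol$ swaps two infima and uses that the inlinings form a \emph{decreasing} chain under $\wlp{\cdot}$.

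The remaining three properties at $\Call{\PName}$ lift cleanly from each $\Calln{\PName}{m}{\decl}$ to the $\sup_m$ (respectively $\inf_m$): linearity because $\sup$ commutes with nonnegative scalar multiplication and pointwise addition on directed chains; monotonicity term by term; and preservation of constants because the chains $\wp{\Calln{\PName}{m}{\decl}}(\CteFun{0}) = \CteFun{0}$ and $\wlp{\Calln{\PName}{m}{\decl}}(\CteFun{1}) = \CteFun{1}$ are already constant by the closed-command IH.
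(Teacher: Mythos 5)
Your proposal is correct and follows essentially the same route as the paper's proof: structural induction on $c$, with the $\Call{\PName}$ case handled by appealing to the already-established closed-command instances for the inlinings $\Calln{\PName}{m}{\decl}$ and then interchanging the two suprema (resp.\ infima). The only cosmetic differences are that the paper derives monotonicity as a two-line corollary of continuity rather than carrying it through the induction, and that the sup--sup interchange needs no monotonicity hypothesis at all (it holds for arbitrary doubly-indexed families), so your justification there is sound but stronger than required.
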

\begin{proof}
See Appendix~\ref{app:basicproperties}.
\end{proof}

\paragraph{Program termination.} Since the termination behavior of a program is
given by the probability that it establishes $\true$, we can readily use
the transformer \wpsymbol to reason about program termination. It
suffices to consider the weakest pre--expectation of the program \wrt post--expectation
$\CharFun{\true} = \CteFun{1}$. Said otherwise,
$\wpd{c}{\decl}\!(\CteFun{1})(s)$ gives the termination probability of program
$\prog{c}{\decl}$ from state $s$. In particular, if the program terminates with
probability $1$, we say that it \emph{terminates almost--surely}.

\subsection{Characterization based on Fixed Points}
Next we use a continuity argument on the transformer \wllpsymbol to prove that its
action on recursive procedures can also be defined using fixed point techniques.
This alternative characterization rests on a subsidiary transformer
\ewllp{\:\cdot\:}{\theta}, which is a slight variant of \wllp{\:\cdot\:}. The main
difference between these transformers is the mechanism that they use to
give semantics to procedure calls: \wllp{\:\cdot\:} relies on a declaration $\decl$,
while \ewllp{\:\cdot\:}{\theta} relies on a so--called (\emph{liberal})
\emph{semantic environment} $\theta \colon \UEX \To \UEX$
($\theta \colon \BEX \To \BEX$) which is meant to directly encode the semantics
of procedure calls. Then $\ewllp{\Call{\PName}}{\theta}\!(f)$ gives $\theta(f)$, while
for all other program constructs $c$, $\ewllp{c}{\theta}\!(f)$ agrees with
$\wllp{c}\!(f)$; see \autoref{fig:ewp-sem} in \autoref{sec:app-fixed-point-sem}
for details. For technical reasons, in the remainder of our development we will
consider only continuous semantic environments in
$\SEnv \eqdef  \{f \mid f\colon\UEX \To \UEX \text{ is upper
  continuous} \}$ and $\LSEnv \eqdef  \{f \mid f\colon\BEX \To \BEX \text{ is
  lower continuous} \}$.%
\footnote{A (liberal) semantic environment $\theta$ is \emph{upper}
   (\emph{lower}) \emph{continuous} iff for every increasing
   $\omega$-chain $\chain{f}{\preceq}$ (decreasing $\omega$-chain
   $\chain{f}{\succeq}$), $\sup_n \theta(f_n) = \theta (\sup_n f_n)$ ($\inf_n \theta(f_n) = \theta (\inf_n f_n)$).}
 This is a natural assumption since we are interested only in semantic
environments that are obtained as the \wllpsymbol--semantics of a \pRGCL
program, which are continuous by \autoref{thm:wp-basic-prop}.


The semantics of recursive procedures can now be readily given as the fixed point
of a semantic environment transformer.
\begin{theorem}[Fixed point characterization for procedure calls]
\label{thm:fp-rec}
Given a declaration $\decl \colon \{\PName\} \To \Cmd$ for procedure $\PName$,  
\begin{samepage}
\begin{align*}
\wpd{\Call{\PName}}{\decl} \:&=\: \lfp{\sqsubseteq}{\lambda \theta\!:\!\SEnv \mydot
\ewp{\decl(\PName)}{\theta}}\\
\wlpd{\Call{\PName}}{\decl} \:&=\: \gfp{\sqsubseteq}{\lambda \theta\!:\!\LSEnv \mydot
\ewlp{\decl(\PName)}{\theta}}~.
\end{align*}
\end{samepage}
\end{theorem}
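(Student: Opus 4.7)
The plan is to reduce the statement to Kleene's fixed point theorem by showing that the inlining sequence used to \emph{define} $\wpd{\Call{\PName}}{\decl}$ is exactly the Kleene iteration of the functional $F \eqdef \lambda \theta : \SEnv \mydot \ewp{\decl(\PName)}{\theta}$. Concretely, if we let $\bot \in \SEnv$ be the constantly--$\CteFun{0}$ environment, the goal is to prove
\[
\wp{\Calln{\PName}{n}{\decl}} \:=\: F^n(\bot) \qquad \text{for all } n \in \Nats,
\]
and then conclude $\wpd{\Call{\PName}}{\decl} = \sup_n \wp{\Calln{\PName}{n}{\decl}} = \sup_n F^n(\bot) = \lfp{\sqsubseteq}{F}$, where the last equality uses the $\omega$--continuity of $F$.

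The main technical ingredient is a \textbf{syntactic substitution lemma}: for every command $c \in \Cmd$ and every closed command $c'$,
\[
\wp{c\subst{\Call{\PName}}{c'}} \:=\: \ewp{c}{\wp{c'}}~.
\]
This is proved by a routine structural induction on $c$. All cases other than $\Call{\PName}$ follow immediately from the clauses in Figure~\ref{fig:wp-sem} (and the analogous clauses for $\ewp{\cdot}{\theta}$), since syntactic substitution commutes with every other constructor and the two transformers agree on non--call statements. The $\Call{\PName}$ case is exactly the definition of $\ewp{\cdot}{\theta}$ on procedure calls. With this lemma in place, the inductive definition of $\Calln{\PName}{n+1}{\decl}$ yields
\[
\wp{\Calln{\PName}{n+1}{\decl}} \:=\: \ewp{\decl(\PName)}{\wp{\Calln{\PName}{n}{\decl}}} \:=\: F\bigl(\wp{\Calln{\PName}{n}{\decl}}\bigr),
\]
and the base case $\wp{\Abort} = \CteFun{0} = \bot$ is immediate, so the identity $\wp{\Calln{\PName}{n}{\decl}} = F^n(\bot)$ follows by induction on $n$.

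To apply Kleene's theorem we also need that $F$ is a well--defined, $\omega$--continuous endofunction on $\SEnv$. Well--definedness, i.e.\ that $F(\theta) \in \SEnv$ whenever $\theta \in \SEnv$, follows by an auxiliary induction on $c$ showing that $\ewp{c}{\theta}$ is upper continuous whenever $\theta$ is; this mirrors Lemma~\ref{thm:wp-basic-prop} and uses that all other constructs are built from monotone, continuous operations. $\omega$--continuity of $F$ itself---$\sup_n F(\theta_n) = F(\sup_n \theta_n)$ for every increasing chain $\chain{\theta}{\sqsubseteq}$---is obtained by the same induction on $\decl(\PName)$; the only nontrivial case is $\Call{\PName}$, where it reduces to the definition $\ewp{\Call{\PName}}{\theta}(f) = \theta(f)$ and pointwise suprema commute.

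The $\wlpsymbol$ part is completely dual. The functional $F' \eqdef \lambda \theta : \LSEnv \mydot \ewlp{\decl(\PName)}{\theta}$ is $\omega$--co--continuous and monotone, the sequence $\wlp{\Calln{\PName}{n}{\decl}}$ starts at $\top = \lambda f \mydot \CteFun{1}$ (since $\wlp{\Abort} = \CteFun{1}$) and decreases, and the same substitution lemma---now with $\wlpsymbol$ and $\ewllp{\cdot}{\theta}$---gives $\wlp{\Calln{\PName}{n}{\decl}} = (F')^n(\top)$; the dual Kleene theorem then yields $\wlpd{\Call{\PName}}{\decl} = \inf_n (F')^n(\top) = \gfp{\sqsubseteq}{F'}$. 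I expect the bulk of the work to lie in the substitution lemma and in verifying that $F$ restricts to (and is continuous on) $\SEnv$; the Kleene step itself is then one line.
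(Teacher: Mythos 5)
Your proposal is correct and takes essentially the same route as the paper's proof: both identify the inlining sequence $\wp{\Calln{\PName}{n}{\decl}}$ with the Kleene iterates $F^n(\bot_{\SEnv})$ of $F = \lambda\theta\mydot\ewp{\decl(\PName)}{\theta}$ via a substitution lemma proved by structural induction, establish that $F$ is a continuous endofunction on $(\SEnv,\sqsubseteq)$, and conclude with Kleene's fixed point theorem (dually for \wlpsymbol). The only cosmetic difference is that you state the substitution lemma in one piece as $\wp{c\subst{\Call{\PName}}{c'}} = \ewp{c}{\wp{c'}}$ whereas the paper factors it through the intermediate form $\wpd{c}{\PName\triangleright c'}$ (\autoref{thm:wp-ewp} and \autoref{thm:subst-env}); one small caveat is that in the continuity proof for $F$ the genuinely delicate case is sequential composition, which requires a diagonal interchange-of-suprema argument (\autoref{thm:diag-sup-cpo}), rather than the procedure-call case you single out.
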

\begin{proof}
  See Appendix~\ref{sec:app-fixed-point-sem}.
\end{proof}
\noindent The fixed points above are taken \wrt the pointwise
order ``$\sqsubseteq$" over semantic environments: given
$\theta_1, \theta_2 \in \SEnv$ (resp.\ $\theta_1, \theta_2 \in \LSEnv$),
$\theta_1 \sqsubseteq \theta_2$ iff $\theta_1(f) \preceq \theta_2(f)$ for all
$f \in \UEX$ (resp.\ $f \in \BEX$).

%
%
\autoref{thm:fp-rec} reveals an inherent difference between the complexities of
reasoning about loops and general recursion: 
The semantics of loops can be given as the fixed point of an expectation transformer (see \eg \cite{Morgan:RW:1996}), while the
semantics of recursion requires the fixed point of a (\emph{higher
  order}) environment transformer. 
This fact was already noticed by \citet[p.~xvii]{Dijkstra} and
later on confirmed by \citet[p.~517]{Nelson:TOPLAS:89:} for non--probabilisitic programs.


\section{Correctness of Recursive Programs}
\label{sec:proof-rules}
In this section we introduce some proof rules for effectively reasoning about
the behavior of recursive programs. 
For that we require the notion of \emph{constructive
  derivability}. Given logical formulae $A$ and $B$, we use $\deriv{A}{B}$ to
denote that $B$ can be derived assuming $A$. In particular, we will consider
claims of the form
\[
\deriv{\wllp{\Call{\PName}}\!(f_1) \bowtie g_1 }{\wllp{c}\!(f_2) \bowtie g_2}~,
\]
where $\bowtie \: \in \! \{\preceq, \succeq \}$, $f_1,g_1$ give the
specification of $\Call{\PName}$ and $f_2,g_2$ the specification of $c$. 
Notice that in
such a claim we omit any procedure declaration as the derivation is
independent of $P$'s body. %
%

Our first two rules are extensions of well--known rules for ordinary recursive programs (see \eg
\cite{Hesselink:FAC:93}) to a probabilistic
setting:
\[
\small 
\begin{array}{l}
\infrule{\deriv{~\wp{\Call{\PName}}\!(f) \preceq
    g}{\wp{\decl(\PName)}\!(f) \preceq g~}}{\wpd{\Call{\PName}}{\decl}\!(f) \preceq g}%
\,\lrule{wp-rec}\\[3.5ex]
\infrule{\deriv{~g \preceq \wlp{\Call{\PName}}\!(f)}{g \preceq \wlp{\decl(\PName)}\!(f)~}}{g \preceq \wlpd{\Call{\PName}}{\decl}\!(f)}%
\,\lrule{wlp-rec}
\end{array}
\]
So for proving that a procedure call satisfies a
specification (given by $f,g$), it suffices to show that the procedure's body satisfies the specification,
assuming that the recursive calls in the body do, too.

\begin{example}\label{ex:puzz-3-upp-bound}
 Reconsider the procedure $\PName_{\mathsf{rec_3}}$ with declaration
  \[
\decl(\PName_{\mathsf{rec_3}})\, \boldsymbol{\colon}\; \;
\PChoice{\Skip}{\nicefrac{1}{2}} {\Call{\PName_{\mathsf{rec_3}}};~ \Call{\PName_{\mathsf{rec_3}}};~ \Call{\PName_{\mathsf{rec_3}}}} 
\]
presented in the introduction. We prove that it terminates with probability \emph{at
most} $\gr = \tfrac{\sqrt{5}-1}{2}$ from any initial state. Formally, this is
captured by $\wpd{\Call{\PName}}{\decl}\!(\CteFun{1}) \:\preceq\: \CteFun{\gr}$.
To prove this, we apply rule \lrule{wp-rec}. We must then establish the
 derivability claim
\[
\deriv%
{\wp{\Call{\PName}}\!(\CteFun{1}) \preceq \CteFun{\gr}}
{\wp{\decl(\PName_{\mathsf{rec_3}})}\!(\CteFun{1} ) \preceq \CteFun{\gr}}~.
\]
%
The derivation goes as follows:
%
\belowdisplayskip=-1\baselineskip
\begin{align*}
\begin{array}{c@{\;\;\;} l@{}}
&\wp{\decl(\PName_{\mathsf{rec_3}})}\!(\CteFun{1}) \\[2pt]
= & \qquad \by{def.~of \wpsymbol}\\[2pt]
&\tfrac{1}{2} \cdot \wp{\Skip}\!(\CteFun{1}) + \tfrac{1}{2} \cdot
\wp{\Call{\PName_{\mathsf{rec_3}}};~ \Call{\PName_{\mathsf{rec_3}}};~ \Call{\PName_{\mathsf{rec_3}}}}\!(\CteFun{1}) \\[2pt]
= & \qquad  \by{def.~of \wpsymbol}\\[2pt]
&\CteFun{\tfrac{1}{2}} + \tfrac{1}{2} \cdot
\wp{\Call{\PName_{\mathsf{rec_3}}};~
  \Call{\PName_{\mathsf{rec_3}}}}\!\bigl(\wp{\Call{\PName_{\mathsf{rec_3}}}}\!(\CteFun{1})\bigr)\\[2pt]
\preceq & \qquad \by{assumption, monot.~of \wpsymbol}\\[2pt]
&\CteFun{\tfrac{1}{2}} + \tfrac{1}{2} \cdot
\wp{\Call{\PName_{\mathsf{rec_3}}};
  \Call{\PName_{\mathsf{rec_3}}}}\!(\CteFun{\gr})\\[2pt]
= & \qquad \by{def.~of \wpsymbol, scalab.~of \wpsymbol twice}\\[2pt]
&\CteFun{\tfrac{1}{2}} + \tfrac{1}{2} \, \gr \cdot
\wp{\Call{\PName_{\mathsf{rec_3}}}}\!\bigl(\wp{\Call{\PName_{\mathsf{rec_3}}}}\!(\CteFun{1})\bigr)\\[2pt]
\preceq & \qquad \by{assumption, monot.~of \wpsymbol}\\[2pt]
&\CteFun{\tfrac{1}{2}} + \tfrac{1}{2} \, \gr \cdot
\wp{\Call{\PName_{\mathsf{rec_3}}}}\!(\CteFun{\gr})\\[2pt]
= & \qquad \by{scalab.~of \wpsymbol}\\[2pt]
&\CteFun{\tfrac{1}{2}} + \tfrac{1}{2} \, \gr^2 \cdot
\wp{\Call{\PName_{\mathsf{rec_3}}}}\!(\CteFun{1})\\[2pt]
\preceq & \qquad \by{assumption, monot.~of \wpsymbol}\\[2pt]
&\CteFun{\tfrac{1}{2}} + \CteFun{\tfrac{1}{2}} \, \gr^3\\[2pt]
= & \qquad \by{algebra}\\[2pt] 
&\CteFun{\gr} 
\end{array} \\[-1\normalbaselineskip]\tag*{$\triangle$}
\end{align*}\normalsize
%
\end{example}
An appealing feature of our approximation semantics is that to prove the following
soundness result we do not need to resort to a continuity argument on the expectation
transformers.

\begin{theorem}[Soundness of rules \lrule{\wllpsymbol-rec}]
Rules \lrule{wp-rec} and \lrule{wlp-rec} 
are sound \wrt the \wllpsymbol semantics in \autoref{fig:wp-sem}.
\end{theorem}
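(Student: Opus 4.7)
The plan is to exploit the approximation characterisation from Figure~\ref{fig:wp-sem}, namely $\wpd{\Call{\PName}}{\decl}\!(f) = \sup_n \wp{\Calln{\PName}{n}{\decl}}\!(f)$, and prove by induction on $n$ that each finite inlining already satisfies $\wp{\Calln{\PName}{n}{\decl}}\!(f) \preceq g$; taking the supremum at the end yields the conclusion of \lrule{wp-rec}. The base case $n=0$ is immediate from $\Calln{\PName}{0}{\decl} = \Abort$ and $\wp{\Abort}\!(f) = \CteFun{0} \preceq g$.

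For the inductive step I would read the hypothesis $\deriv{\wp{\Call{\PName}}\!(f) \preceq g}{\wp{\decl(\PName)}\!(f) \preceq g}$ \emph{schematically}: its derivation invokes only the structural rules of Figure~\ref{fig:wp-sem} applied to $\decl(\PName)$ together with the generic algebraic properties of Lemma~\ref{thm:wp-basic-prop}, treating $\wp{\Call{\PName}}$ as an opaque transformer whose sole known property is the assumed bound. Consequently, if $T$ is any transformer with $T(f) \preceq g$, substituting $T$ for $\wp{\Call{\PName}}$ throughout the derivation yields a derivation of the analogous bound for the wp of $\decl(\PName)$ computed with $T$ in place of the recursive call. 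Taking $T := \wp{\Calln{\PName}{n}{\decl}}$, which satisfies $T(f) \preceq g$ by the induction hypothesis, and invoking the routine substitution lemma $\wp{\decl(\PName)\subst{\Call{\PName}}{\Calln{\PName}{n}{\decl}}}\!(f) = \wp{\Calln{\PName}{n+1}{\decl}}\!(f)$, one concludes $\wp{\Calln{\PName}{n+1}{\decl}}\!(f) \preceq g$, closing the induction. Rule \lrule{wlp-rec} is handled dually: induct to show $g \preceq \wlp{\Calln{\PName}{n}{\decl}}\!(f)$, with base case $\wlp{\Abort}\!(f) = \CteFun{1} \succeq g$ (valid since $g \in \BEX$), and assemble via the infimum $\wlpd{\Call{\PName}}{\decl}\!(f) = \inf_n \wlp{\Calln{\PName}{n}{\decl}}\!(f)$.

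The main obstacle is making the schematic reading of the premise precise. The cleanest route is to recast the hypothesis in the environment--based calculus $\ewp{\cdot}{\theta}$ alluded to in Section~3.3: the premise then becomes the statement that $\ewp{\decl(\PName)}{\theta}\!(f) \preceq g$ holds for every $\theta \in \SEnv$ with $\theta(f) \preceq g$. Combined with the identity $\ewp{c}{\wp{c'}} = \wp{c\subst{\Call{\PName}}{c'}}$ for closed $c'$ (a straightforward structural induction on $c$), this is exactly what drives the inductive step above. Notably, the argument never invokes continuity of $\wpsymbol$: the bound is preserved pointwise along the entire approximation chain and is assembled into the final inequality only at the very end via the $\sup$ (respectively $\inf$), which is precisely the advertised benefit of the approximation semantics over a limit--based one.
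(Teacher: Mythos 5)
Your proposal is correct and follows essentially the same route as the paper: the approximation characterisation $\wpd{\Call{\PName}}{\decl}\!(f) = \sup_n \wp{\Calln{\PName}{n}{\decl}}\!(f)$, an induction on $n$ with base case $\Abort$, an instantiation of the derivability premise at the $n$-th approximant, and the substitution lemma identifying $\wp{\decl(\PName)\subst{\Call{\PName}}{\Calln{\PName}{n}{\decl}}}$ with $\wp{\Calln{\PName}{n+1}{\decl}}$, followed by the dual argument for \lrule{wlp-rec}. The only cosmetic difference is that you discharge the premise by instantiating it at the semantic environment $\theta = \wp{\Calln{\PName}{n}{\decl}}$ via $\ewp{\cdot}{\theta}$, whereas the paper instantiates it at the syntactic declaration $\PName \triangleright \Calln{\PName}{n}{\decl}$ (its Fact~\ref{fact:deriv-elim} plus Lemma~\ref{thm:subst-env}); these are interchangeable by the paper's own Lemma~\ref{thm:wp-ewp}, and both share your observation that no continuity argument is needed.
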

\begin{proof}
See
Appendix~\ref{sec:app-om-rule-sound}. 
\end{proof}


Rules \lrule{w(l)p-rec} allow deriving only one--sided bounds for the
weakest (liberal) pre--expectation of a procedure call. It is also possible to
derive two--sided bounds by means of the following rules:
\begin{equation*}
\resizebox{1\hsize}{!}{$
\!\!\begin{array}{l}
\infrule{l_0=\CteFun{0}, \qquad \qquad \qquad u_0=\CteFun{0}, \\[0.1ex]
~l_n \preceq \wp{\Call{\PName}}\!(f)  \preceq u_n%
      \derivsymbol l_{n\!+\!1} \preceq \wp{\decl(\PName)}\!(f) \preceq u_{n\!+\!1}\:} 
{\sup_n l_n \preceq \wpd{\Call{\PName}}{\decl}\!(f) \preceq \sup_n u_n }%
 \lrule{wp-rec$_\omega$}\\[3.5ex]
\infrule{ l_0=\CteFun{1}, \qquad \qquad \qquad u_0=\CteFun{1}, \\[0.1ex]
~l_n \preceq \wlp{\Call{\PName}}\!(f)  \preceq u_n%
      \derivsymbol l_{n\!+\!1} \preceq \wlp{\decl(\PName)}\!(f) \preceq u_{n\!+\!1}\:} 
{\inf_n l_n \preceq \wlpd{\Call{\PName}}{\decl}\!(f) \preceq \inf_n u_n }%
 \lrule{wlp-rec$_\omega$}\\[\normalbaselineskip]
\end{array}\!\!$}
\end{equation*}

In constrast to rules \lrule{w(l)p-rec}, these rules require exhibiting two
sequences of expectations $\langle l_n \rangle$ and $\langle u_n \rangle$ rather
than a single expectation $g$ to bound the weakest (liberal) pre--expectation of
a procedure call. Intuitively $l_n$ ($u_n$) represents a lower (upper) bound for
the weakest pre--expectation of the $n$-inlining of the procedure, \ie from the
premises of the rules we will have
$l_n \preceq \wllp{\Calln{\PName}{n}{\decl}}\!(f) \preceq u_n$ for all
$n \in \Nats$. 

Observe that both rules can be specialized to reason about one--sided bounds. For
instance, by setting $u_{n+1}=\CteFun{\infty}$ in \lrule{wp-rec$_\omega$} we can reason about lower bounds of
$\wpd{\Call{\PName}}{\decl}\!(f)$, which is not supported by rule
\lrule{wp-rec}. Similarly, by taking $l_n=\CteFun{0}$ in rule
\lrule{wlp-rec$_\omega$} we can reason about upper bounds of
$\wlpd{\Call{\PName}}{\decl}\!(f)$.

\begin{example}\label{ex:puzz-3-low-bound}
  Reconsider the procedure $\PName_{\mathsf{rec_3}}$ from
  \autoref{ex:puzz-3-upp-bound}. Now we prove that the procedure terminates with probability
  \emph{at least} $\gr = \tfrac{\sqrt{5}-1}{2}$ from any initial state. To this
  end, we rely on the fact that $\gr$ can be characterized by the asymptotic
  behavior of the sequence $\langle \gr_n \rangle$, where $\gr_0 = 0$ and
  $\gr_{n+1} = \tfrac{1}{2} + \tfrac{1}{2} \, \gr_n^3$. In symbols,
  $\gr = \sup_n \gr_n$. We wish then to prove that
  \[
  \sup\nolimits_n \, \CteFun{\gr_n} \:\preceq\: \wpd{\Call{\PName_{\mathsf{rec_3}}}}{\decl}\!(\CteFun{1})~.
  \]
  To establish this formula we apply the one side variant of rule
  \lrule{wp-rec$_\omega$} to reason about lower bounds of
  $\wpd{\Call{\PName_{\mathsf{rec_3}}}}{\decl}\!(\CteFun{1})$, that is, we
  implicitly take $u_{n+1}=\CteFun{\infty}$. We must then establish
\[
\deriv%
{\CteFun{\gr_{n}} \preceq \wp{\Call{\PName_{\mathsf{rec_3}}}}\!(\CteFun{1})}
{\CteFun{\gr_{n+1}} \preceq\wp{\decl(\PName_{\mathsf{rec_3}})}\!(\CteFun{1})}~.
\]
The derivation follows the same steps as those taken in
\autoref{ex:puzz-3-upp-bound} to give upper bounds on
$\wpd{\Call{\PName_{\mathsf{rec_3}}}}{\decl}\!(\CteFun{1})$. 
%
  Combining the result proved with that in \autoref{ex:puzz-3-upp-bound},
  we conclude that $\gr = \tfrac{\sqrt{5}-1}{2}$ is the exact termination
  probability of $\prog{\Call{\PName_{\mathsf{rec_3}}}}{\decl}$.\hfill
    $\triangle$
\end{example}

Lastly, we can establish the correctness our rules.
\begin{theorem}[Soundness of rules \lrule{\wllpsymbol-rec$_\omega$}]
  Rules \lrule{w(l)p-rec$_\omega$} are sound \wrt the
  \wllpsymbol semantics in \autoref{fig:wp-sem}.
\end{theorem}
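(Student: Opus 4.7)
The plan is to reduce the rule to an induction on $n$ establishing the bound $l_n \preceq \wp{\Calln{\PName}{n}{\decl}}(f) \preceq u_n$ for every inlining of $\PName$, and then to pass to the limit using the definition $\wpd{\Call{\PName}}{\decl}(f) = \sup_n \wp{\Calln{\PName}{n}{\decl}}(f)$ from \autoref{fig:wp-sem}. The liberal case is entirely symmetric, with infima replacing suprema and $\CteFun{1}$ replacing $\CteFun{0}$ as the starting expectation and the zeroth inlining's semantics.

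For the base case $n = 0$ in the \wpsymbol variant, note that $\Calln{\PName}{0}{\decl} = \Abort$ and $\wp{\Abort}(f) = \CteFun{0} = l_0 = u_0$, so both bounds hold trivially. For the \wlpsymbol variant, use $\wlp{\Abort}(f) = \CteFun{1} = l_0 = u_0$ analogously. For the inductive step in the \wpsymbol case, assume $l_n \preceq \wp{\Calln{\PName}{n}{\decl}}(f) \preceq u_n$. The rule's premise supplies a \emph{constructive derivation} of $l_{n+1} \preceq \wp{\decl(\PName)}(f) \preceq u_{n+1}$ from the assumption $l_n \preceq \wp{\Call{\PName}}(f) \preceq u_n$. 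The crucial observation is that this derivation is parametric in the interpretation of $\Call{\PName}$: any closed command $c$ satisfying $l_n \preceq \wp{c}(f) \preceq u_n$ may be substituted for every occurrence of $\Call{\PName}$ throughout the derivation, yielding $l_{n+1} \preceq \wp{\decl(\PName)\subst{\Call{\PName}}{c}}(f) \preceq u_{n+1}$. Instantiating with $c \eqdef \Calln{\PName}{n}{\decl}$, which satisfies the required bound by the inductive hypothesis, and using the definitional identity $\decl(\PName)\subst{\Call{\PName}}{\Calln{\PName}{n}{\decl}} = \Calln{\PName}{n+1}{\decl}$, we obtain $l_{n+1} \preceq \wp{\Calln{\PName}{n+1}{\decl}}(f) \preceq u_{n+1}$, closing the induction.

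Passing to the limit, monotonicity of the supremum yields $\sup_n l_n \preceq \sup_n \wp{\Calln{\PName}{n}{\decl}}(f) = \wpd{\Call{\PName}}{\decl}(f) \preceq \sup_n u_n$, which is exactly the conclusion of \lrule{wp-rec$_\omega$}. The liberal variant proceeds analogously via infima, relying on the definition $\wlpd{\Call{\PName}}{\decl}(f) = \inf_n \wlp{\Calln{\PName}{n}{\decl}}(f)$.

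The main obstacle is formally justifying the substitution step used in the inductive argument: one must argue that the meta-level derivability relation $\derivsymbol$ is uniform in its treatment of the symbol $\Call{\PName}$, so that any closed command meeting the assumed bound may be substituted for $\Call{\PName}$ while preserving the conclusion. The cleanest way to discharge this obligation is by a structural induction on the derivation itself: the only derivation step that mentions $\wp{\Call{\PName}}(f)$ is the invocation of the assumption, and this step remains valid after substitution precisely because the substituted command satisfies that very bound; all other steps concern program constructs syntactically distinct from $\Call{\PName}$ and are therefore unaffected by the substitution. Once this uniformity principle is in place, the soundness of both rules follows from the simple inductive argument above.
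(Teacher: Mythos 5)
Your proposal is correct and follows essentially the same route as the paper: an induction on $n$ establishing $l_n \preceq \wllp{\Calln{\PName}{n}{\decl}}\!(f) \preceq u_n$, followed by passing to the supremum (resp.\ infimum) via the inlining-based definition of $\wllpd{\Call{\PName}}{\decl}$. The ``uniformity principle'' you identify as the main obligation is exactly what the paper isolates as \autoref{fact:deriv-elim} (instantiating the derivation at an arbitrary declaration $\decl^\star$, here $\PName \triangleright \Calln{\PName}{n}{\decl}$), combined with \autoref{thm:subst-env} to pass between the syntactic substitution and the re-declared semantics.
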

\begin{proof}
  See Appendix \ref{sec:app-om-rule-sound}.
\end{proof}

To conclude the section we would like to point out that the rule
\lrule{wp-rec$_\omega$} is related to previous work on proof rules. 
It can be viewed as a
generalization of \citeauthor{Jones:1992}'s loop rule~\cite{Jones:1992} to the
case of recursion (even though \citeauthor{Jones:1992} originally presented a
one--sided version) and as an adaptation of \citeauthor{Audebaud:2009}'s
rule~\cite{Audebaud:2009} to our weakest pre--expectation semantics. 
The counterpart of the rule for partial correctness, on the
other hand, is, to the best of our knowledge, novel.

\section{The Expected Runtime of Programs}
\label{sec:eet}
To further our study of recursive probabilistic programs we now develop a calculus for
reasoning about the expected or average runtime of \pRGCL programs. This
calculus builds upon our previous work in \cite{Kaminski:ETAPS:2016} and is able to handle recursive procedures.

\subsection{The Expected Runtime Transformer \boldeetsymbol}

We assume a runtime model where executing a $\Skip$ statement, an
assignment, evaluating the guard in a conditional branching and invoking a
procedure\footnote{Loosely speaking, the overall runtime of a procedure call is
  then one plus the runtime of executing the procedure's body.} consumes one unit of
time. On the other hand, combining two programs by means of a sequential
composition or a probabilistic choice consumes no additional time other than
that consumed by the original programs. Likewise, halting a program
execution with an $\Abort$ statement consumes no unit of time.

Since the runtime of a program varies according to the initial state from which
it is executed, our aim is to associate to each program $\prog{c}{\decl}$ a
mapping that takes each state $s$ to the expected time until
 $\prog{c}{\decl}$ terminates on $s$. Such mappings will range
over the set of \emph{runtimes}
$\Runtimes \eqdef \big\{\rt ~\big|~ \rt\colon\State \To [0,\, \infty]\big\}$.\footnote{Strictly speaking, the set of runtimes $\Runtimes$ coincides with the set
of unbounded expectations $\UEX$ but we prefer to distinguish the two sets
since they are to represent different objects. 
We will, however, keep the same
notations for runtimes as for expectations, for example $\rt\subst{x}{E}$, $\rt_1
\preceq \rt_2$, etc.}

To associate each program to its runtime we use a continuation passing style
formalized by the transformer 
\[
\eet{\:\cdot\:}\!\colon \Runtimes \To \Runtimes~.
\]
If $\rt \in \Runtimes$ represents the runtime of the computation that follows
program $\prog{c}{\decl}$, then $\eetd{c}{\decl}\!(\rt)$ represents the overall runtime
of $\prog{c}{\decl}$, plus the computation following $\prog{c}{\decl}$. Runtime
$\rt$ is usually referred to as the \emph{continuation} of $\prog{c}{\decl}$. In
particular, by setting the continuation of a program to zero we recover
the runtime of the plain program. That is, for every initial state $s$,
\[
\eetd{c}{\decl}\!(\ctert{0})(s)
\]
gives the expected runtime of program $\prog{c}{\decl}$ from state $s$.


\begin{figure}[t]
$\small 
\begin{array}{@{~}ll@{~}}
\specialrule{0.8pt}{0pt}{2pt}
\boldsymbol{c} & \boldeetsymbol\boldsymbol{[c,\decl](\rt)}\\
\specialrule{0.8pt}{2pt}{6pt}
%
\Skip   & \ctert{1} +  \rt \\[2.5pt]
\Ass{x}{E}  & \ctert{1} +  \rt\subst{x}{E} \\[2.5pt]
\Abort & \CteFun{0} \\[3.5pt]
\Cond{G}{c_1}{c_2}  & \ctert{1} + 
    \ToExp{G} \cdot \eetd{c_1}{\decl}\! (\rt) + \ToExp{\lnot G} \cdot \eetd{c_2}{\decl}\!(\rt) \\[3.5pt]
\PChoice{c_1}{p}{c_2} &   p \cdot \eetd{c_1}{\decl}\!(\rt)  + (1{-}p) \cdot
\eetd{c_2}{\decl}\!(\rt) \\[3pt]
 \Call{\PName} &  \lfpsymbol_\sqsubseteq \left( \lambda \eta\!:\!\RtEnv \mydot
\ctertenv{1} \oplus \eeet{\decl(\PName)}{\eta} \right) \!(\rt) \\[4.5pt]
c_1;c_2 & \eetd{c_1}{\decl}\bigl(\eetd{c_2}{\decl} \!(\rt)\bigr)
\end{array}
$
\caption{Rules for the expected runtime transformer
  \eetsymbol. $\lfp{\sqsubseteq}{F}$ denotes the least fixed point of
  transformer $F \colon \RtEnv \To \RtEnv$ \wrt the pointwise order
  ``$\sqsubseteq$" between runtime environments
.}
\label{fig:eet}
\end{figure}

The transformer $\eetd{c}{\decl}$ is defined by induction on the structure of $c$, following
the rules in \autoref{fig:eet}. The rules are defined so as to correspond to the
aforementioned runtime model. That is, $\eetd{c}{\decl}\!(\ctert{0})$ captures
the expected number of assignments, guard evaluations, procedure calls and
$\Skip$ statements in the execution of $\prog{c}{\decl}$. Most rules are
self--explanatory. $\eetd{\Skip}{\decl}$ adds one unit of time to the
continuation since $\Skip$ does not modify the program state and its execution
takes one unit of time. $\eetd{\Ass{x}{E}}{\decl}$ also adds one unit of time,
but to the continuation evaluated in the state resulting from the
assignment. $\eetd{\Abort}{\decl}$ yields always the constant runtime
$\ctert{0}$ since $\Abort$ aborts any subsequent program execution (making their
runtime irrelevant) and consumes no time. $\eetd{\Cond{G}{c_1}{c_2}}{\decl}$
adds one unit of time to the runtime of either of its branches, depending on the
value of the guard. $\eetd{\PChoice{c_1}{p}{c_2}}{\decl}$ gives the weighted
average between the runtime of its branches, each of them weighted according to
its probability. $\eetd{c_1; c_2}{\decl}$ first applies $\eetd{c_2}{\decl}$ to
the continuation and then $\eetd{c_1}{\decl}$ to the resulting runtime of this
application. Finally, $\eetd{\Call{\PName}}{\decl}$ is defined using fixed point
techniques.

To understand the intuition behind the definition of
$\eetd{\Call{\PName}}{\decl}$ recall that $\Call{\PName}$ consumes one unit of
time more than the body of $P$. To capture this fact we make use of the
auxiliary runtime transformer
\mbox{$\eeet{\:\cdot\:}{\eta} \colon \Runtimes \To \Runtimes$} (\cf expectation
transformer $\ewp{\:\cdot\:}{\theta}$). This transformer behaves as $\eetsymbol$
except that for defining its action on a procedure call, it relies on a so--called \emph{runtime environment} $\eta$ in
$\RtEnv \eqdef \{\eta \mid \eta \colon\Runtimes \To \Runtimes \text{ is upper
  continuous} \}$
instead of on a procedure declaration. Concretely,
$\eeetd{\Call{\PName}}{\decl}{\eta}$ takes continuation $\rt$ to $\eta(\rt)$ and
for all other program constructs, $\eeet{\:\cdot\:}{\eta}$ follows the same rule as
$\eet{\:\cdot\:}$. Using this transformer we can (implicitly) define
$\eetd{\Call{\PName}}{\decl}$ by the equation
\[
\eetd{\Call{\PName}}{\decl} \:=\: \ctertenv{1} \oplus
\eeet{\decl}{\eetd{\Call{\PName}}{\decl}}~,
\]
where $\ctertenv{1} = \lambda \rt\!:\! \Runtimes \mydot \ctert{1}$ represents
the constantly $\ctert{1}$ runtime transformer and ``$\oplus$'' the point--wise
sum between runtime transformers, \ie for
$\gamma_1, \gamma_2 \colon \Runtimes \To \Runtimes$, we let
$(\gamma_1 \oplus \gamma_2) (\rt) \eqdef \gamma_1(\rt) + \gamma_2(\rt)$. The
above equation leads to the fixed point characterization of
$\eetd{\Call{\PName}}{\decl}$ in \autoref{fig:eet}. 

We remark that, as opposed to \wllpsymbol, it is not posible to define the
action $\eetd{\Call{\PName}}{\decl}$ of \eetsymbol\ on a procedure call in terms
of its action $\eet{\Calln{\PName}{n}{\decl}}$ on the finite inlinings. This is
because when computing $\eet{\Calln{\PName}{n}{\decl}}\!(\rt)$, to be correct
the transformer should add one unit of time each time a procedure call was
inlined, and this is not recoverable from
$\Calln{\PName}{n}{\decl}$.\footnote{If we adopt a model where the runtime of a
  procedure call coincides with the runtime of its body, we could just take
  $\eetd{\Call{\PName}}{\decl}\!(\rt) = \sup_n
  \eet{\Calln{\PName}{n}{\decl}}\!(\rt)$.}

This concludes our definition of the transformer \eetsymbol. We devote the remainder
of the section to study several of its properties. We begin with
\autoref{thm:eet-prop} summarizing some algebraic properties.

\begin{theorem}[Basic properties of $\eetsymbol$]
\label{thm:eet-prop}
For any program $\prog{c}{\decl}$, any constant runtime
$\ctert{k} = \lambda s \mydot k$ for $k \in \PosReals$, any $\rt, u \in \Runtimes$, and any increasing $\omega$--chain $\chain{\rt}{\preceq}$
of runtimes, it holds:
\begin{flushleft}
\begin{tabular}{@{}l@{\hspace{2em}}l}
   \textnormal{Continuity:} & $\sup\nolimits_n \eetd{c}{\decl}\!(\rt_n) \:=\:
                              \eetd{c}{\decl}\!(\sup\nolimits_n \rt_n)$; \\[1ex]
	\textnormal{Monotonicity:}			& $\rt \preceq u ~\implies~ \eetd{c}{\decl}\!(\rt) \,\preceq\,
\eetd{c}{\decl}\!(u)$;\\[1ex]
	\textnormal{Propagation} 			& $\eetd{c}{\decl}\!(\ctert{k} + \rt) \:=\: \ctert{k} + \eetd{c}{\decl}\!(\rt)$\\
	\textnormal{of constants:}				& provided $\prog{c}{\decl}$ is $\Abort$--free;\\[1ex]
	\textnormal{Preservation} 	& $\eetd{c}{\decl}\!(\infty) \:=\: \infty$\\
  \textnormal{of infinity:}						&  provided $\prog{c}{\decl}$ is $\Abort$--free.
\end{tabular}
\end{flushleft}
\end{theorem}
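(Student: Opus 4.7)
My plan is to prove all four properties by induction on the structure of $c$, deferring the real work to the case $\Call{\PName}$. For the non-procedure-call commands the arguments are routine: in each case the defining equation of $\eetd{c}{\decl}(\rt)$ in \autoref{fig:eet} expresses the result as a convex or linear combination of applications of $\eetd{c_i}{\decl}$ to (possibly reindexed) continuations, and continuity, monotonicity, commutation with the addition of a constant $\ctert{k}$, and preservation of $\infty$ all propagate through such combinations on $[0,\infty]$. The $\Abort$ case contributes only to continuity and monotonicity, and trivially, since its output is the constant $\ctert{0}$; it is ruled out from the remaining two claims by the $\Abort$-free hypothesis.

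For the $\Call{\PName}$ case I invoke the fixed-point characterization $\eetd{\Call{\PName}}{\decl} = \lfp{\sqsubseteq}{F}$ with $F = \lambda \eta \mydot \ctertenv{1} \oplus \eeet{\decl(\PName)}{\eta}$. By Kleene's theorem this fixed point is reached as $\sup_n \eta_n$, with $\eta_0$ the constantly $\ctert{0}$ environment and $\eta_{n+1} = F(\eta_n)$. For continuity and monotonicity I first show, by a subsidiary induction on $n$, that each $\eta_n$ is upper continuous and monotone; the step uses the structural hypothesis applied to the body $\decl(\PName)$ (interpreting its internal procedure calls as $\eta_n$). Continuity of $\eetd{\Call{\PName}}{\decl}$ then follows from a routine swap of the suprema over $n$ and over the chain $\chain{\rt}{\preceq}$, and monotonicity is immediate from continuity.

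Propagation and preservation of infinity are where I anticipate the main obstacle. The inequality $\eetd{\Call{\PName}}{\decl}(\ctert{k} + \rt) \preceq \ctert{k} + \eetd{\Call{\PName}}{\decl}(\rt)$ will succumb to Scott fixed-point induction on the property $\Phi(\eta)$ given by $\eta(\ctert{k}+\rt) \preceq \ctert{k} + \eta(\rt)$ for every $k,\rt$: $\Phi$ holds of the constantly $\ctert{0}$ bottom, is closed under $\omega$-suprema, and is preserved by $F$ via the structural hypothesis (suitably strengthened to $\eeet{\cdot}{\eta}$ for any $\eta$ satisfying $\Phi$) applied to the $\Abort$-free body. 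The converse inequality is not amenable to the same technique, because its dual fails at the bottom---intuitively, each Kleene approximant $\eta_n$ treats recursion beyond depth $n$ as $\Abort$, artificially deflating $\eta_n(\ctert{k}+\rt)$ below $\ctert{k} + \eta_n(\rt)$. My plan for bridging this gap is to establish separately, by a structural induction that mirrors the one above, the decomposition $\eetd{c}{\decl}(\rt) = \eetd{c}{\decl}(\ctert{0}) + \wpd{c}{\decl}(\rt)$ for $\Abort$-free $c$. Combined with the observation that in such a program a positive probability of divergence forces $\eetd{c}{\decl}(\ctert{0}) = \infty$ (since every elementary step costs one unit of time), both remaining claims follow: propagation holds with equality whenever the program terminates almost surely, and trivially (both sides equal to $\infty$) otherwise; preservation of infinity is then immediate, because $\eetd{c}{\decl}(\infty) = \eetd{c}{\decl}(\ctert{0}) + \wpd{c}{\decl}(\infty)$ is $\infty$ on every state---either via the first summand (when divergence has positive probability) or via the second (when termination does).
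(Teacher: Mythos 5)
Your treatment of continuity and monotonicity matches the paper's (Kleene approximants, continuity of each iterate, a diagonal swap of suprema, and monotonicity as a consequence), and your Scott induction for the direction $\eetd{c}{\decl}(\ctert{k}+\rt) \preceq \ctert{k} + \eetd{c}{\decl}(\rt)$ is a workable variant of one half of what the paper does via \autoref{thm:cte-sep-rtenv} and \autoref{thm:fix-point-compos}. The genuine gap is in the other direction, where your entire argument rests on the ``observation'' that for an $\Abort$--free program a positive probability of divergence forces $\eetd{c}{\decl}(\ctert{0})(s) = \infty$. This is not an observation: it is exactly the contrapositive of \autoref{thm:rt-ast}, and you cannot import that theorem here, because the paper derives \autoref{thm:rt-ast} \emph{from} the propagation--of--constants property you are trying to prove. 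The justification you offer (``every elementary step costs one unit of time'') is an operational intuition; inside the denotational $\eetsymbol$--calculus the statement has to be extracted from the least--fixed--point definition of $\eetd{\Call{\PName}}{\decl}$, and that is where the paper invests its heaviest machinery: \autoref{thm:expanding-env} (a structural induction establishing that ``expanding'' triples of environments propagate through $\Abort$--free bodies) feeding into \autoref{thm:ert-call-div}, which bounds $\eet{\Call{\PName}}(\rt)$ from below by $\sup_n n \cdot (\CteFun{1} - \wpd{\Call{\PName}}{\decl}(\CteFun{1}))$ via a telescoping sum over the Kleene approximants. Without some such lemma your case split (``equality when termination is almost sure, both sides $\infty$ otherwise'') has no support in the divergent branch, and the same lemma is also what your preservation--of--infinity argument needs when the termination probability is zero, since then $\wpd{c}{\decl}(\CteFun{\infty})(s) = 0$ and the burden falls entirely on the first summand.

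Two smaller points. First, your reliance on \autoref{thm:eet-wp} is legitimate and non--circular --- the paper proves that decomposition independently of \autoref{thm:eet-prop}, and indeed uses it inside \autoref{thm:ert-of-constant} --- so the architecture of combining the decomposition with a divergence lemma is sound; it is only the divergence lemma itself that is missing. Second, once you have that lemma, your Scott induction for the $\preceq$ direction becomes redundant, since the decomposition plus linearity of $\wpsymbol$ already yields the equality pointwise in both cases; the paper instead obtains both inequalities simultaneously from the symmetric fixed--point composition lemma (\autoref{thm:fix-point-compos}), which is a genuinely different, though not obviously cheaper, route.
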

\begin{proof}
Monotonicity follows from continuity. Other properties are prooven by induction on $c$; see Appendix~\ref{sec:eet-basic}. 
\end{proof}

The next result establishes a connection between
$\eetsymbol$ and $\wpsymbol$. 
\begin{theorem}
\label{thm:eet-wp}
For every program $\prog{c}{\decl}$ and runtime $\rt$,
\[
\eetd{c}{\decl}\!(\rt) \:=\: \eetd{c}{\decl}\!(\ctert{0}) +  \wpd{c}{\decl}\!(\rt)~.
\]
\end{theorem}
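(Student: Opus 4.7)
The plan is to prove the equality by structural induction on $c$. The atomic cases $\Skip$, $\Ass{x}{E}$, and $\Abort$ are direct unfoldings: e.g., for $\Skip$ both sides simplify to $\ctert{1}+\rt$, and for $\Abort$ both sides reduce to $\ctert{0}$ (using $\wpd{\Abort}{\decl}(\rt) = \ctert{0}$). Conditional branching and probabilistic choice follow by applying the induction hypothesis to each branch and regrouping linear combinations. The first delicate case is sequential composition: starting from $\eetd{c_1;c_2}{\decl}(\rt) = \eetd{c_1}{\decl}(\eetd{c_2}{\decl}(\rt))$, apply the IH on $c_2$ to rewrite the inner continuation as $\eetd{c_2}{\decl}(\ctert{0}) + \wpd{c_2}{\decl}(\rt)$, then apply the IH on $c_1$ to this new continuation, and finally invoke linearity of $\wpd{c_1}{\decl}$ from \autoref{thm:wp-basic-prop} to split the sum and match $\eetd{c_1;c_2}{\decl}(\ctert{0}) + \wpd{c_1;c_2}{\decl}(\rt)$.

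The main obstacle is the procedure-call case, since $\eetd{\Call{\PName}}{\decl}$ is defined via a least fixed point while $\wpd{\Call{\PName}}{\decl}$ is defined as the supremum of finite inlinings $\Calln{\PName}{n}{\decl}$. To bridge the two, I will use the Kleene approximation of the lfp: let $\eta_0 \eqdef \lambda \rt\!:\!\Runtimes \mydot \ctert{0}$ be the bottom of $\RtEnv$ and $\eta_{n+1} \eqdef \ctertenv{1} \oplus \eeet{\decl(\PName)}{\eta_n}$, so that $\sup_n \eta_n = \eetd{\Call{\PName}}{\decl}$. The key intermediate claim is that for every $n$,
\[
\eta_n(\rt) \:=\: \eta_n(\ctert{0}) + \wp{\Calln{\PName}{n}{\decl}}\!(\rt).
\]
This is proved by induction on $n$. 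The base case $n=0$ is immediate, since $\eta_0 \equiv \ctert{0}$ and $\wp{\Calln{\PName}{0}{\decl}} = \wp{\Abort} \equiv \ctert{0}$.

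For the inductive step, I need an auxiliary lemma of the same shape but for the open transformer $\eeet{\cdot}{\eta_n}$: for every command $c'$,
\[
\eeet{c'}{\eta_n}\!(\rt) \:=\: \eeet{c'}{\eta_n}\!(\ctert{0}) + \wp{c'\subst{\Call{\PName}}{\Calln{\PName}{n}{\decl}}}\!(\rt).
\]
This auxiliary lemma is proved by a nested structural induction on $c'$ that exactly mirrors the outer one for all non-call constructs; the only new case is $c' = \Call{\PName}$, which is discharged by the IH on $n$ of the intermediate claim together with the definitional equalities $\eeet{\Call{\PName}}{\eta_n}(\rt) = \eta_n(\rt)$ and $\Call{\PName}\subst{\Call{\PName}}{\Calln{\PName}{n}{\decl}} = \Calln{\PName}{n}{\decl}$. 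Instantiating $c' \eqdef \decl(\PName)$ and noting that $\decl(\PName)\subst{\Call{\PName}}{\Calln{\PName}{n}{\decl}} = \Calln{\PName}{n+1}{\decl}$ by the definition of the $n$-th inlining, the step case of the intermediate claim follows after adding $\ctert{1}$ to both sides. To conclude, take suprema on both sides of the intermediate claim; since both $\langle \eta_n(\ctert{0}) \rangle$ and $\langle \wp{\Calln{\PName}{n}{\decl}}(\rt) \rangle$ are monotone increasing in $[0,\infty]$, the supremum distributes over the sum, yielding $\eetd{\Call{\PName}}{\decl}(\rt) = \eetd{\Call{\PName}}{\decl}(\ctert{0}) + \wpd{\Call{\PName}}{\decl}(\rt)$, which closes the induction.
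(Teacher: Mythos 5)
Your proposal is correct, and its skeleton---structural induction on $c$, with the procedure--call case handled by synchronizing the Kleene iterates of the $\eetsymbol$--fixed point against the $\wpsymbol$--approximants and passing to the limit by monotonicity---is the same as the paper's. The differences are in the packaging. First, the paper strengthens the induction hypothesis to the two--continuation form $\eetd{c}{\decl}\!(\rt_1+\rt_2) = \eetd{c}{\decl}\!(\rt_1) + \wpd{c}{\decl}\!(\rt_2)$, supported by a ``separable environments'' lemma (\autoref{thm:eeet-ewp-sep}) for the open transformer $\eeet{\cdot}{\eta}$; you keep the unstrengthened statement and recover sequential composition by instantiating the IH at the compound continuation and then splitting with linearity of $\wpsymbol$ (\autoref{thm:wp-basic-prop}). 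The two formulations are interderivable precisely because of that linearity, so nothing is lost. Second, for the call case the paper first invokes the fixed--point characterization of $\wpd{\Call{\PName}}{\decl}$ (\autoref{thm:fp-rec}) and matches $F^n(\bot_\RtEnv)$ against $G^n(\bot_\SEnv)$ with $G(\theta)=\ewp{\decl(\PName)}{\theta}$, whereas you keep $\wpsymbol$ in its definitional form $\sup_n \wp{\Calln{\PName}{n}{\decl}}$ and match $\eta_n$ against the $n$--th inlining directly via the substitution lemma (\autoref{thm:subst-env}); since $G^n(\bot_\SEnv)(f)=\wp{\Calln{\PName}{n}{\decl}}\!(f)$ these are the same objects, and your route needs one fewer prior result. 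The only step you should make explicit is that $\sup_n \eta_n = \eetd{\Call{\PName}}{\decl}$ rests on Kleene's theorem and hence on the continuity of $\eta \mapsto \ctertenv{1}\oplus\eeet{\decl(\PName)}{\eta}$, which the paper supplies as \autoref{thm:eeet-cont-env} and \autoref{thm:eet-proc-call-iter}.
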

\begin{proof}
By induction on the program structure, considering the stronger version of the
statement
%
\[
\eetd{c}{\decl}\!(\rt_1 + \rt_2) \:=\: \eetd{c}{\decl}\!(\rt_1) +
\wpd{c}{\decl}\!(\rt_2)~.
\]
See Appendix~\ref{sec:eet-wp} for details. 
\end{proof}

\autoref{thm:eet-wp} allows giving a very short proof of a
well--known result relating expected runtimes and termination probabilities: If a program has finite
expected runtime, it terminates almost surely.

\begin{theorem}
\label{thm:rt-ast}
For every $\Abort$--free program $\prog{c}{\decl}$ and initial state $s$ of the
program,
\[
\eetd{c}{\decl}\!(\ctert{0})(s) < \infty \implies \wpd{c}{\decl}\!(\ctert{1})(s) = 1~.
\]
\end{theorem}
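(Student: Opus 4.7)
The proof will be a short algebraic manipulation combining three facts already established: Theorem~\ref{thm:eet-wp} relating $\eetsymbol$ and $\wpsymbol$, and two parts of Theorem~\ref{thm:eet-prop}, namely the propagation of constants (which requires the $\Abort$-freeness hypothesis). The plan is to instantiate Theorem~\ref{thm:eet-wp} at the constant continuation $\rt = \ctert{1}$ and observe that the two sides can then be rewritten so as to isolate $\wpd{c}{\decl}\!(\ctert{1})(s)$.

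Concretely, I would proceed as follows. First, by Theorem~\ref{thm:eet-wp} applied to $\rt = \ctert{1}$,
\[
\eetd{c}{\decl}\!(\ctert{1}) \:=\: \eetd{c}{\decl}\!(\ctert{0}) + \wpd{c}{\decl}\!(\ctert{1})~.
\]
Second, since $\prog{c}{\decl}$ is $\Abort$-free, propagation of constants (Theorem~\ref{thm:eet-prop}) yields
\[
\eetd{c}{\decl}\!(\ctert{1}) \:=\: \ctert{1} + \eetd{c}{\decl}\!(\ctert{0})~.
\]
Equating the two expressions and evaluating at the state $s$ gives
\[
1 + \eetd{c}{\decl}\!(\ctert{0})(s) \:=\: \eetd{c}{\decl}\!(\ctert{0})(s) + \wpd{c}{\decl}\!(\ctert{1})(s)~.
\]

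The final step is where the hypothesis $\eetd{c}{\decl}\!(\ctert{0})(s) < \infty$ is used: because the common summand is finite, we may cancel it on both sides and conclude $\wpd{c}{\decl}\!(\ctert{1})(s) = 1$. The only subtlety worth flagging is precisely this cancellation, which would fail in the extended reals if the runtime at $s$ were infinite; this is why the implication (rather than equivalence) is stated, and why $\Abort$-freeness is required (without it, propagation of constants breaks, e.g.\ $\eetd{\Abort}{\decl}\!(\ctert{1}) = \ctert{0} \neq \ctert{1} + \ctert{0}$). No induction on program structure is needed, since all the heavy lifting is packaged in the two theorems invoked.
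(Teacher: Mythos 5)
Your proposal is correct and follows exactly the paper's own argument: instantiate Theorem~\ref{thm:eet-wp} at $\rt = \ctert{1}$, use propagation of constants from Theorem~\ref{thm:eet-prop} (which is where $\Abort$-freeness enters) to rewrite $\eetd{c}{\decl}\!(\ctert{1})$ as $\ctert{1} + \eetd{c}{\decl}\!(\ctert{0})$, and cancel the finite summand. Your explicit remark about why the cancellation requires finiteness is a welcome elaboration of a step the paper leaves implicit.
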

\begin{proof}
  By instantiating \autoref{thm:eet-wp} with $t=\ctert{1}$ and using the
  propagation of constants property of $\eetsymbol$ (\autoref{thm:eet-prop}) to
  decompose $\eetd{c}{\decl}\!(\ctert{1})$ as
  $\ctert{1} + \eetd{c}{\decl}\!(\ctert{0})$.
\end{proof}

Observe that in \autoref{thm:rt-ast} we cannot drop the $\Abort$--free
requirement on the program. To see this, consider the program
$c = \PChoice{\Skip}{\nicefrac{1}{2}}{\Abort}$. The program has a finite runtime
($\eet{c}\!(\ctert{0}) = \nicefrac{\ctert 1}{\ctert 2} < \boldsymbol\infty $)
and terminates, however, with probability less than one
($\wp{c}\!(\ctert{1}) = \nicefrac{\ctert 1}{\ctert 2} < \ctert 1 $). Moreover,
observe that \autoref{thm:rt-ast} is only valid on the stated direction: A
probabilistic program can terminate almost--surely and require, still, an
expected infinite time to reach termination. This phenomenon is illustrated, for
instance, by the one dimensional random walk; see \eg~\cite[\S
7]{Kaminski:ETAPS:2016}.

Even though \autoref{thm:rt-ast} constitutes a well--known and natural result on
probabilistic programs, our contribution here is to give the first fully formal
proof of such a result.

\subsection{Proof Rules for Recursive Programs}
The runtime of procedure calls, which includes, in particular, recursive
programs, is defined using fixed points. To avoid reasoning about fixed points
we propose some proof rules based on invariants.

We show that an adaptation of the proof rules for procedure calls from our
$\wpsymbol$--calculus is sound for the $\eetsymbol$--calculus.
%
The rules are:
\smallskip
\[
\begin{array}{l}
\small
\begin{array}{c}
\infrule{\deriv{~\eet{\Call{\PName}}\!(\rt) \preceq  \ctert{1} {+} u }{
  \eet{\decl(\PName)}\!(\rt) \preceq u~}}{\eetd{\Call{\PName}}{\decl}\!(\rt)
  \preceq \ctert{1} {+} u}%
\,\lrule{eet-rec}
\end{array}\\[4ex]
\begin{array}{c}
\infrule{ l_0=\CteFun{0}, \qquad \qquad u_0=\CteFun{0}, \\
 \hspace{0em}\ctert{1} {+} l_n \preceq \eet{\Call{\PName}}\!(\rt) \preceq  \ctert{1} {+} u_n \\
      \hspace{2em} \derivsymbol l_{n+1} \preceq \eet{\decl(\PName)}\!(\rt) \preceq u_{n+1}} 
{~\ctert{1} {+} \sup_n l_n \preceq \eetd{\Call{\PName}}{\decl}\!(\rt) \preceq \ctert{1} {+} \sup_n u_n~}%
\, \lrule{eet-rec$_\omega$}\\[2ex]
\end{array}
\normalsize
\end{array}
\]
%


\noindent Compared to the proof rules from the \wpsymbol--calculus, these proof rules
require incrementing by one unit some of the bounds. Loosely speaking, this is
because the runtime of a procedure call is one plus the runtime of its body,
whereas the semantics of a procedure call fully agrees with the semantics of its
body.

\begin{example}
  To illustrate the use of the rules, consider the faulty factorial procedure with
  declaration
\begin{align*}
\decl(\PName_{\mathsf{fact}}) &\boldsymbol{\colon}\; 
\Cond{x \leq 0}{\{\Ass{y}{1}}{ \PChoice{c_1}{\nicefrac{5}{6}}{c_2};\, \Ass{y}{y
                                \!\cdot\! x}}\,,
\end{align*}
where $c_1 = \Ass{x}{x{-}1};\, \Call{\PName_{\mathsf{fact}}};\, \Ass{x}{x{+}1}$ and $c_2 =
\Ass{x}{x{-}2}; \allowbreak \Call{\PName_{\mathsf{fact}}};\allowbreak\, \Ass{x}{x{+}2}$. We
prove that on input $x=k \geq 0$, the expected runtime of the procedure is $2 +
\alpha_k$, where
\[
\alpha_k \:=\: \frac{1}{49} \Bigl(121 + 210 k + 432 \bigl( -\tfrac{1}{6} \bigr)^{k+1} \Bigr)~.
\]
Since the term $432 (\nicefrac{-1}{6})^{k+1}$ is negligible, we can approximate the
procedure's runtime by $4.5 + 4.3k$. We can formally capture our exact runtime
assertion by
\[
\eetd{\Call{\PName_{\mathsf{fact}}}}{\decl}\!(\ctert{0}) \:=\:  \ctert{1} + \sup\nolimits_n t_n~,
\]
where $t_n = \ctert{1} + \ToExp{x<0} \cdot \ctert{1} + \ToExp{0 \leq x \leq n} \cdot
\alpha_x \,+ \ToExp{x > n} \cdot \alpha_{n+1}$.
To see this, observe that the sequence $\langle \alpha_k \rangle$ is increasing and
therefore,
$\sup\nolimits_n t_n = \ctert{1} + \ToExp{x<0} \cdot \ctert{1} + \ToExp{0 \leq
  x} \cdot \alpha_x$.
We prove the runtime assertion using rule \lrule{eet-rec$_\omega$} with
instantiations $t = \ctert{0}$ and $l_n = u_n = t_n$ for $n\geq 1$. We have to
discharge the premise
\[
\eet{\Call{\PName_{\mathsf{fact}}}}\!(\ctert{0}) = \ctert{1} + t_n \derivsymbol
\eet{\decl(\PName_{\mathsf{fact}})}\!(\ctert{0}) = t_{n+1}~.
\]
Since some simple calculations yield
\begin{multline*}
  \eet{\decl(\PName_{\mathsf{fact}})}\!(\CteFun{0}) \:=\: \ctert{1} +
  \ToExp{x \leq 0} \cdot \CteFun{1} \\
 + \ToExp{x > 0} \cdot \bigl( \tfrac{5}{6} \cdot \eet{c_1}\!(\CteFun{1})
 +\tfrac{1}{6} \cdot \eet{c_2}\!(\CteFun{1}) \big)~,
\end{multline*}
our next step is to compute $\eet{c_1}\!(\CteFun{1})$ (the calculations 
are identical for $\eet{c_2}\!(\CteFun{1})$). To do so, we rely on assumption 
$\eet{\Call{\PName}}\!(\ctert{0}) = \ctert{1} + t_n$ and the propagation of
constants property of $\eetsymbol$.
\begin{align*}
\eet{c_1}\!(\CteFun{1}) 
& \:=\: \eet{\Ass{x}{x{-}1};\, \Call{\PName_{\mathsf{fact}}}}
  \bigl(\eet{\Ass{x}{x{+}1}}\!(\CteFun{1}) \bigr)\\
&\:=\: \CteFun{2} + \eet{\Ass{x}{x{-}1};\, \Call{\PName_{\mathsf{fact}}}}
  \!(\CteFun{0})\\
&\:=\: \CteFun{2} + \eet{\Ass{x}{x{-}1}} \!(\CteFun{1} + t_n)\\
&\:=\: \CteFun{4} + t_n\subst{x}{x+1}
\end{align*}
The derivation then concludes by showing that 
\begin{multline*}
  t_{n+1} ~=~ \ctert{1} +
  \ToExp{x \leq 0} \cdot \CteFun{1} \\
 + \ToExp{x > 0} \cdot \Bigl( \tfrac{5}{6} \bigl(\CteFun{4} + t_n\subst{x}{x{+}1} \bigr) + \tfrac{1}{6}  \bigl(\CteFun{4} + t_n\subst{x}{x{+}2} \bigr) \Big)~,
\end{multline*}
which after some term reordering reduces to proving that $\alpha_0 =1$, $\alpha_1 =7$ and
$\alpha_{k+2} =5 + \tfrac{5}{6} \alpha_{k+1} + \tfrac{1}{6} \alpha_{k}$. \hfill
$\triangle$
\end{example}

We conclude the section establishing the soundness of the rules.
\begin{theorem}[Soundness of rules \lrule{eet-rec},
  \lrule{eet-rec$_\omega$}]
  \label{thm:eet-rules-sound}
  Rules \lrule{eet-rec} and \lrule{eet-rec$_\omega$} are sound \wrt the
  \eetsymbol--calculus in \autoref{fig:eet}.
\end{theorem}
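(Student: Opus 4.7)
My plan is to reduce both rules to Kleene/Park-style fixed-point induction on the functional $F \eqdef \lambda \eta \mydot \ctertenv{1} \oplus \eeet{\decl(\PName)}{\eta}$, whose least fixed point is $\eta^\star \eqdef \eetd{\Call{\PName}}{\decl}$ by definition (see \autoref{fig:eet}). Since \eetsymbol\ is upper continuous (\autoref{thm:eet-prop}), so is $F$, and Kleene's theorem yields $\eta^\star = \sup_n F^n(\bot)$ with $\bot = \lambda \rt \mydot \ctert{0}$; moreover, $\eta^\star = F(\eta^\star)$ immediately gives the pointwise lower bound $\eta^\star(\rt') \succeq \ctert{1}$ at every runtime $\rt'$, which will be useful later.

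To exploit the constructive-derivability premise I first need a bridging lemma that reads each such derivation semantically: for every $\eta \in \RtEnv$ satisfying the hypothesized bound(s) on $\eet{\Call{\PName}}(\rt)$, the derived bound(s) on $\eeet{\decl(\PName)}{\eta}(\rt)$ hold as well. This is entirely analogous to the argument used for \lrule{w(l)p-rec$_\omega$} in \autoref{sec:app-om-rule-sound} and rests on the monotonicity and algebraic properties of $\eeet{\cdot}{\eta}$, which are inherited from the corresponding properties of \eetsymbol.

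For \lrule{eet-rec} the bridging lemma says: whenever $\eta(\rt) \preceq \ctert{1} + u$, we have $\eeet{\decl(\PName)}{\eta}(\rt) \preceq u$, and hence $F(\eta)(\rt) \preceq \ctert{1} + u$. A straightforward induction on $n$ then yields $F^n(\bot)(\rt) \preceq \ctert{1} + u$ (base: $\bot(\rt) = \ctert{0} \preceq \ctert{1} + u$; step: the preceding observation), and taking the supremum along the Kleene chain gives the desired bound $\eta^\star(\rt) \preceq \ctert{1} + u$.

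The $\omega$-rule is more delicate because the $n = 0$ premise forces $\eta(\rt)$ to lie in $[\ctert{1} + l_0,\, \ctert{1} + u_0] = \{\ctert{1}\}$, which the Kleene bottom $\bot$ does \emph{not} satisfy. My idea is to iterate $F$ starting from $\bar{\eta}_0 \eqdef \ctertenv{1}$: this is a continuous environment for which $\bar{\eta}_0(\rt) = \ctert{1}$ lies exactly inside the required interval. A routine induction using the $\omega$-premises then shows that $\bar{\eta}_n \eqdef F^n(\bar{\eta}_0)$ satisfies $\ctert{1} + l_n \preceq \bar{\eta}_n(\rt) \preceq \ctert{1} + u_n$ for every $n$. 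Since $\bot \sqsubseteq \bar{\eta}_0 \sqsubseteq \eta^\star$ (the latter by the pointwise lower bound noted above), monotonicity of $F$ gives $F^n(\bot) \sqsubseteq \bar{\eta}_n \sqsubseteq \eta^\star$; suping along this chain sandwiches $\sup_n \bar{\eta}_n$ between $\eta^\star$ and itself, so $\sup_n \bar{\eta}_n = \eta^\star$. Therefore $\eta^\star(\rt) = \sup_n \bar{\eta}_n(\rt)$, and applying $\sup_n$ to both sides of $\ctert{1} + l_n \preceq \bar{\eta}_n(\rt) \preceq \ctert{1} + u_n$ yields $\ctert{1} + \sup_n l_n \preceq \eta^\star(\rt) \preceq \ctert{1} + \sup_n u_n$. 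The main subtlety I anticipate is the bridging lemma of the second paragraph; once that is in place, both soundness statements reduce to the fixed-point manipulations just described.
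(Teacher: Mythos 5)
Your proof is correct, and it follows the same general strategy as the paper --- fixed-point induction on $F(\eta) = \ctertenv{1} \oplus \eeet{\decl(\PName)}{\eta}$ combined with a semantic reading of the derivability premise (your ``bridging lemma'' is precisely the paper's \autoref{fact:deriv-eeet}, which the paper also states without proof) --- but it differs in two concrete devices. For \lrule{eet-rec} the paper does not iterate: it applies Park's lemma to the environment $\ctertenv{1} \oplus \eta^\star$ where $\eta^\star$ maps $\rt$ to $u$ and every other runtime to $\boldsymbol{\infty}$, the $\infty$-padding making the prefixed-point condition trivial away from $\rt$; your Kleene-chain induction $F^n(\bot)(\rt) \preceq \ctert{1}+u$ reaches the same conclusion with comparable effort and avoids the padding trick. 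For \lrule{eet-rec$_\omega$} the paper keeps the standard Kleene chain from $\bot$ but shifts the index, proving $\ctert{1}+l_n \preceq F^{n+1}(\bot)(\rt)$, and it only details the one-sided lower-bound version, declaring the two-sided case ``analogous''; as you correctly observe, $\bot$ does not satisfy the two-sided $n=0$ hypothesis, so the literal index shift does not transfer to the upper bound. Your re-based iteration from $\ctertenv{1}$ --- justified by $\ctertenv{1} \sqsubseteq \eta^\star$ via the fixed-point equation, and by the sandwich $F^n(\bot) \sqsubseteq \bar\eta_n \sqsubseteq \eta^\star$ forcing $\sup_n \bar\eta_n = \eta^\star$ --- handles both bounds uniformly and is arguably the cleaner argument for the rule as actually stated. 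One small citation point: the continuity you need for Kleene's theorem is continuity of $\eta \mapsto \eeet{\decl(\PName)}{\eta}$ in the \emph{environment} argument (\autoref{thm:eeet-cont-env}), not the continuity of $\eetsymbol$ in the runtime argument recorded in \autoref{thm:eet-prop}; the paper provides the former, so nothing is missing, but you should cite it.
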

\begin{proof}
See Appendix~\ref{sec:eet-rules-sound}.
\end{proof}


\section{Operational Semantics}
\label{sec:operational}
We provide an operational semantics for \pRGCL programs in terms of pushdown
Markov chains with rewards (PRMC) \cite{DBLP:journals/fmsd/BrazdilEKK13} and
prove the transformer $\wpsymbol$ to be sound with respect to this
semantics. Due to space limitations, this section contains an informal
introduction only. Corresponding formal definitions are
found in Appendix~\ref{sec:operational-model}.

For simplicity, we assume a canonical labeling for each command $c \in \Cmd$ together with auxiliary functions $\Init$, $\SuccOneSymbol$, $\SuccTwoSymbol$ and $\StmtOfLabelSymbol$ determining the initial location, the first and second successor of a location and the program statement corresponding to a label.
As an example, the labels attached to each statement of program $c$ from \autoref{ex:puzz-3-upp-bound} are as follows:
\begin{align*}
 c \, \boldsymbol{\colon}\; \;
 \{ \Skip^{1} \} \: [\nicefrac{1}{2}]^{2} \: \{
  \Call{\PName}^{3};\, \Call{\PName}^{4};\, \Call{\PName}^{5}~\}~.
\end{align*}
The definition of the auxiliary functions is straightforward. For instance, we
have $\Init(c) = 2$, $\SuccOneSymbol(1) = \Term$, $\SuccTwoSymbol(2) = 3$, and
$\StmtOfLabel{2} = c$, where $\Term$ is a special symbol indicating termination of a procedure. Moreover, label $\Sink$ stands for termination of the whole program.

Our operational semantics of \pRGCL programs is given as an execution relation, where each step is of the form
\begin{align*}
 \OpTrans{\OpState{\ell}{s}}{\gamma,\, p,\, \gamma'}{\OpState{\ell'}{s'}}.
\end{align*}
Here, $\ell,\ell'$ are program labels, $s,s' \in \State$ are program states, $\gamma$ is a program label being popped from and $\gamma'$ a finite sequence of labels being pushed on the stack, respectively. $p \in [0,1]$ denotes the probability of executing this step.

This execution relation corresponds to the transition relation of a PRMC, where each pair $\OpState{\ell}{s}$ is a state and the stack alphabet is given by the set of all labels of a given \pRGCL program. Moreover, given $f \in \UEX$, a reward of $f(s)$ is assigned to each state of the form $\OpState{\Sink}{s}$. Otherwise, the reward of a state is $0$.
\autoref{fig:operational} shows the rules defining the operational semantics of \pRGCL programs.
The rules in \autoref{fig:operational} are self--explanatory. In case of a procedure call, the calls successor label is pushed on the stack and execution continues with the called procedure. Whenever a procedure terminates, i.e. reaches a state $\OpState{\Term}{s}$, and the stack is non--empty, a return address is popped from and execution continues at this address. 

\autoref{fig:operational:example} shows the PRMC of example program $c$.
The initial state is $2$ (the probabilistic choice).
Say the right branch is chosen; we move to 3.
The statement at 3 is a call, and the address after the call is 4; so 4 is pushed and the procedure body is reentered.
Say now the left branch is chosen; we move to 1 (the $\Skip$) and then terminate, i.e.\ we move to $\downarrow$.
Recall that return address 4 is on top of the stack;  4 is popped, we move to 4 to continue execution.

The expected reward that PRMC $\Automaton$ associated to program
$\prog{c}{\decl}$ reaches a set of target states
$\mathcal{T}$ from initial state $\OpState{\ell}{s}$ is defined as
\begin{align*}
 \ExpRew{\OPRMC{c,\decl}{s}{f}}{\mathcal{T}} ~=~ \sum_{\pi \in \Pi(\OpState{\ell}{s},\mathcal{T})} \Prob{\Automaton}{\pi} \cdot \rew{\pi}~,
\end{align*}
where $\pi$ is a path from $\OpState{\ell}{s}$ to some target state, $\Prob{\Automaton}{\pi}$ is the probability of $\pi$ and $\rew{\pi}$ is the reward collected along $\pi$.

\begin{figure*}[th]
\small
 \begin{align*}
    & \infrule{\StmtOfLabel{\ell} = \Skip \quad \SuccOne{\ell} = \ell'}{\OpTrans{\OpState{\ell}{s}}{\gamma,\, 1,\, \gamma}{\OpState{\ell'}{s}}}%
    \,\lrule{\textrm{skip}}%
      \qquad\qquad\qquad
    \infrule{\StmtOfLabel{\ell} = \Ass{x}{E} \quad \SuccOne{\ell} = \ell'}{\OpTrans{\OpState{\ell}{s}}{\gamma,\, 1,\, \gamma}{\OpState{\ell'}{s\big[x \mapsto s(E)\big]}}}%
    \,\lrule{\textrm{assign}}%
      \qquad\qquad\qquad
    \infrule{\StmtOfLabel{\ell} = \Abort}{\OpTrans{\OpState{\ell}{s}}{\gamma,\, 1,\, \gamma}{\OpState{\ell}{s}}}%
    \,\lrule{\textrm{abort}}%
    \\[.15\baselineskip]
    & \infrule{\StmtOfLabel{\ell} = \Cond{G}{c_1}{c_2} \quad s \models G \quad \SuccOne{\ell} = \ell'}{\OpTrans{\OpState{\ell}{s}}{\gamma,\, 1,\, \gamma}{\OpState{\ell'}{s}}}%
    \,\lrule{\textrm{if1}}%
     \qquad\qquad\quad~
    \infrule{\StmtOfLabel{\ell} = \Cond{G}{c_1}{c_2} \quad s \not\models G \quad \SuccTwo{\ell} = \ell'}{\OpTrans{\OpState{\ell}{s}}{\gamma,\, 1,\, \gamma}{\OpState{\ell'}{s}}}%
    \,\lrule{\textrm{if2}}
    \\[.15\baselineskip]
    & \infrule{\StmtOfLabel{\ell} = \PChoice{c_1}{p}{c_2} \quad \SuccOne{\ell} = \ell'}{\OpTrans{\OpState{\ell}{s}}{\gamma,\, p,\, \gamma}{\OpState{\ell'}{s}}}%
    \,\lrule{\textrm{prob1}}%
    \qquad\qquad\qquad\qquad\qquad\qquad\qquad\qquad~~
    \infrule{\StmtOfLabel{\ell} = \PChoice{c_1}{p}{c_2} \quad \SuccTwo{\ell} = \ell'}{\OpTrans{\OpState{\ell}{s}}{\gamma,\, 1-p,\, \gamma}{\OpState{\ell'}{s}}}%
    \,\lrule{\textrm{prob2}}
    \\[.15\baselineskip]
    & \infrule{\StmtOfLabel{\ell} = \Call{\PName} \quad \SuccOne{\ell} = \ell'}{\OpTrans{\OpState{\ell}{s}}{\gamma,\, 1,\, \gamma \cdot \ell' }{\OpState{\Init\big(\decl(P)\big)}{s}}}%
    \,\lrule{\textrm{call}}%
    \qquad\qquad\qquad~\,
    \infrule{\vphantom{\downarrow}}{\OpTrans{\OpState{\Term}{s}}{\ell',\, 1,\, \varepsilon}{\OpState{\ell'}{s}}}%
    \,\lrule{\textrm{return}}%
    \qquad\qquad\qquad~\,
    \infrule{\vphantom{\downarrow}}{\OpTrans{\OpState{\Term}{s}}{\gamma_0,\, 1,\, \gamma_0}{\OpState{\Sink}{s}}}%
    \,\lrule{\textrm{terminate}}
 \end{align*}
\normalsize
 \caption{Rules for defining an operational semantics for \pRGCL
   programs. For sequential composition there is no dedicated rule as the control flow is encoded via the $\SuccOneSymbol$ and the $\SuccTwoSymbol$ functions.}
 \label{fig:operational}
\end{figure*}

\begin{figure}[t]
 \begin{center}
  \scalebox{0.8}{
\begin{tikzpicture}[node distance = 2cm]
   \node [state] (2) {$2$};
   \node[above of = 2, node distance = 1cm] (0) {};
   \node [state, left of = 2] (3) {$3$};
   \node [state, right of = 2] (1) {$1$};
   \node [state, right of = 1] (t) {$\Term$};
   \node [state, above right of = t] (4) {$4$};
   \node [state, below right of = t] (5) {$5$};
   \node [state, right of = t] (s) {$\Sink$};
   
   \path
   (0) edge[->] (2)
   (2) edge[->] node [above] {\scriptsize{$\gamma,\nicefrac 1 2, \gamma$}} (3)
   (2) edge[->] node [above] {\scriptsize{$\gamma,\nicefrac 1 2, \gamma$}} (1)
   (1) edge[->] node [above] {\scriptsize{$\gamma,1, \gamma$}} (t)
   (3) edge[->, bend left=60] node [above] {\scriptsize{$\gamma,1, \gamma \cdot 4$}} (2)
   (t) edge[->] node [right] {\scriptsize{$4,1,\varepsilon$}} (4)
   (t) edge[->, loop above] node [above] {\scriptsize{$\Term,1,\varepsilon$}} (s)
   (t) edge[->] node [above] {\scriptsize{$\gamma_0,1,\gamma_0$}} (s)
   (t) edge[->] node [right] {\scriptsize{$5,1,\varepsilon$}} (5)
   (4) edge[->, bend right] node [above] {\scriptsize{$\gamma,1, \gamma \cdot 5$}} (2)
   (5) edge[->, bend left] node [below] {\scriptsize{$\gamma,1, \gamma \cdot \Term$}} (2)
   ;
\end{tikzpicture}
}
 \end{center}
 \caption{PRMC of program $c$ from \autoref{ex:puzz-3-upp-bound}. Since $c$ affects no variables, the second component of states is omitted.}
 \label{fig:operational:example}
\end{figure}
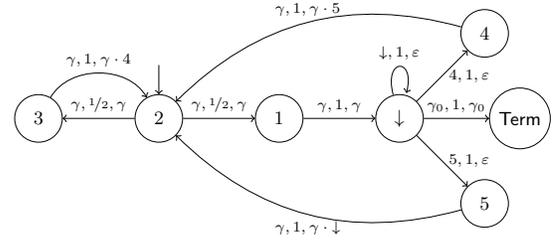

We are now in a position to state the relationship between the operational model and the denotational semantics:
\begin{theorem}[Correspondence Theorem]
\label{thm:correspondance}
 Let $c \in \Cmd$, $f \in \UEX$, and $\mathcal{T} = \{\OpState{\Sink}{s} ~|~ s
 \in \State\}$~\footnote{$\mathcal{T}$ denotes the set of states representing
   successful termination of the pushdown automaton.}. Then for each $s \in \State$, we have
 \begin{align*}
    \ExpRew{\OPRMC{c,\decl}{s}{f}}{\mathcal{T}} \:=\: \wpd{c}{\decl}\!(f)(s)~.
 \end{align*}
\end{theorem}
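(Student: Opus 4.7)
The plan is to prove the correspondence by structural induction on $c$, but a naive formulation of the inductive hypothesis is too weak because the operational semantics carries a stack that evolves during execution. So I would first strengthen the statement: for every $c \in \Cmd$, every stack content $\gamma \in \Lab^{\ast}$, and every state $s$, the expected reward collected in the PRMC from configuration $\OpState{\Init(c)}{s}$ with stack $\gamma_0 \cdot \gamma$ until reaching $\mathcal{T}$ equals $\wpd{c}{\decl}(h_\gamma)(s)$, where $h_\gamma(s') = \ExpRew{\OPRMC{\cdots}{s'}{f}}{\mathcal{T}}$ is the expected reward starting from $\OpState{\Term}{s'}$ with the same tail stack $\gamma_0 \cdot \gamma$. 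This way, the continuation on the stack is cleanly reflected in the post-expectation, which is exactly what makes the inductive step for sequential composition go through.

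With this strengthening, the base cases ($\Skip$, $\Ass{x}{E}$, $\Abort$) and the atomic-step cases (conditional and probabilistic choice) are immediate: each of them takes exactly one deterministic or probabilistically labeled step in the PRMC, and summing over the possible successors according to \lrule{skip}, \lrule{assign}, \lrule{if1}/\lrule{if2}, \lrule{prob1}/\lrule{prob2} reproduces the corresponding rule of \autoref{fig:wp-sem}. For $c_1; c_2$, I would split every path from $\OpState{\Init(c_1;c_2)}{s}$ to $\mathcal{T}$ at the first instant it reaches $\OpState{\Init(c_2)}{s'}$ (the control flow function $\SuccOneSymbol$ makes this well-defined), apply the Markov property of the PRMC to factor the path measure, and then use the inductive hypothesis once for $c_1$ (whose continuation becomes $\wpd{c_2}{\decl}(h_\gamma)$) and once for $c_2$.

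The main obstacle is the procedure-call case. Here I would proceed in two steps. First, I would argue that, in the PRMC, paths from $\OpState{\Init(\Call{P})}{s}$ can be stratified by their maximum stack height attained during execution; equivalently, by the maximum nesting depth of active calls to $P$. Let $R_n(s)$ be the expected reward obtained along paths of nesting depth at most $n$. By monotone convergence,
\[
\ExpRew{\OPRMC{\Call{P},\decl}{s}{f}}{\mathcal{T}} \;=\; \sup\nolimits_{n} R_n(s).
\]
Second, I would show by induction on $n$ that $R_n$ coincides with $\wp{\Calln{P}{n}{\decl}}(f)(s)$: unfolding one layer of the PRMC that starts with \lrule{call} pushes a return address and transfers control to $\Init(\decl(P))$, which is precisely the operational effect of syntactically replacing $\Call{P}$ by $\Calln{P}{n}{\decl}$ inside $\decl(P)$. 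The base case $n=0$ matches $\Abort$ (no path of nesting depth $0$ contributes any reward, and $\wp{\Abort}(f) = \CteFun{0}$), and for the inductive step I would combine the IH for $n$ with the already-established inductive cases for the non-recursive constructors applied to $\decl(P)$. Taking the supremum over $n$ and using the definition $\wpd{\Call{P}}{\decl}(f) = \sup_n \wp{\Calln{P}{n}{\decl}}(f)$ from \autoref{fig:wp-sem} then closes the case. To keep the argument rigorous, I would formalize the stratification via the finite-depth sub-PRMCs whose transitions are restricted to configurations with stack height $\leq n$, and use standard results on expected accumulated rewards in PRMCs \cite{DBLP:journals/fmsd/BrazdilEKK13} to justify the limit interchange; this is the technically delicate step, as it requires both that the tail contribution of paths of depth $>n$ vanishes and that the measure decomposition respects the continuation encoded in the stack.
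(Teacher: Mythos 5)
Your overall strategy---stratifying the PRMC's behaviour by the maximum stack height attained, identifying the depth-$n$ stratum with the $n$-th inlining, and passing to the supremum---is exactly the skeleton of the paper's proof, which introduces bounded automata $\leftsuper{n}{\OPRMC{c}{\sigma}{f}}$ that self--loop instead of pushing beyond height $n$. The genuine problem lies in how you have nested the two inductions. You take structural induction on $c$ as the outer induction and confine the induction on the depth $n$ to the case $c = \Call{\PName}$. But in the step $n \to n{+}1$ of that inner induction you must analyse the depth--bounded behaviour of the body $\decl(\PName)$, and $\decl(\PName)$ is an arbitrary command that is \emph{not} a structural subterm of $\Call{\PName}$---it may itself contain occurrences of $\Call{\PName}$. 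The structural IH therefore tells you nothing about it, and the depth--IH only speaks about $\Call{\PName}$ itself, not about a compound program whose internal calls are to be cut off at depth $n$. Your phrase ``combine the IH for $n$ with the already-established inductive cases for the non-recursive constructors applied to $\decl(\PName)$'' is precisely where the argument becomes circular as written: carrying it out requires the full correspondence statement, at depth bound $n$, for \emph{all} programs simultaneously, which neither of your two induction hypotheses provides.

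The fix is to invert the nesting, which is what the paper does: prove, by outer induction on $n$ and inner structural induction over an \emph{arbitrary} command $c$, that
\begin{align*}
\lambda \sigma \mydot \ExpRew{\leftsuper{n}{\OPRMC{c}{\sigma}{f}}}{\mathcal{T}} \;=\; \ewp{c}{\wp{\Calln{\PName}{n}{\decl}}}(f)~,
\end{align*}
where the right--hand side is the auxiliary transformer of \autoref{sec:app-fixed-point-sem} with the $n$-th inlining's semantics as environment. The case $n=0$ amounts to replacing every call by $\Abort$, and the call case at level $n{+}1$ reduces to $\decl(\PName)$ at level $n$, which the outer IH covers because it is quantified over all programs. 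The theorem then follows since $\sup_n \ewp{c}{\wp{\Calln{\PName}{n}{\decl}}}(f) = \wpd{c}{\decl}(f)$ by \autoref{thm:fp-rec}. Two smaller remarks: your strengthened statement indexed by stack contents $\gamma$ with continuation $h_\gamma$ is workable but unnecessary---the paper threads the continuation through the reward--function parameter of the automaton instead, running $c_1$ in $\OPRMC{c_1}{\sigma}{\ewp{c_2}{\cdot}(f)}$ for sequential composition; and the interchange $\ExpRew{\OPRMC{c}{\sigma}{f}}{\mathcal{T}} = \sup_n \ExpRew{\leftsuper{n}{\OPRMC{c}{\sigma}{f}}}{\mathcal{T}}$ needs only the elementary observation that every path reaching $\mathcal{T}$ uses finitely much stack, not the heavier machinery of \cite{DBLP:journals/fmsd/BrazdilEKK13}.
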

\begin{proof}
See Appendix~\ref{sec:eet-soundness}.
\end{proof}

In the spirit of \cite{DBLP:journals/pe/GretzKM14} a similar result can be obtained for $\textsf{wlp}$.
For that one needs a liberal expected reward being defined as the expected reward plus the probability of not reaching the target states at all.
One can then show a similar correspondence to $\textsf{wlp}$.

\section{Extensions}
\label{sec:extensions}

\paragraph{Mutual recursion.}
Both our \wpsymbol-- and \eetsymbol--calculus can be extended to handle multiple
procedures. 
Say we want to handle $m$ (possibly mutually recursive) procedures $\PName_1,\ldots, \PName_m$ with
declaration $\decl \in \Cmd^m$. The definition of $\wpd{\Call{\PName_i}}{\decl}$
remains the same, we only need to adapt the definition of the $n$-inlining
$\Calln{\PName_i}{n}{\decl}$ of procedure $\PName_i$ as to inline the calls of
all procedures:
\begin{align*}
\Calln{\PName_i}{n+1}{\decl} &\:=\: \decl(P)[
\Call{\PName_1}/ \Calln{\PName_1}{n}{\decl}, \ldots,\Call{\PName_m}/ \Calln{\PName_m}{n}{\decl}]~.
\end{align*}
As for the \eetsymbol--calculus, a runtime environment is now a tuple
$\eta = (\eta_1, \ldots, \eta_m)$, where $\eta_i$ is meant to provide the
behavior of procedure $\PName_i$ in $\eeet{\cdot}{\eta}$, \ie
$\eeet{\Call{\PName_i}}{\eta} = \eta_i$. The action of \eetsymbol\ on procedure
calls is then defined simultaneously as\footnote{For determining the
  \emph{least} fixed point, environments are compared component-wise, \ie
  $ (\eta_1, \ldots, \eta_m) \allowbreak \sqsubseteq (\nu_1, \ldots, \nu_m)$
  iff $\eta_i \sqsubseteq \nu_i$ for all $i=1 \ldots m$.}
\begin{multline*}
\big(\eetd{\Call{\PName_1}}{\decl},\, \ldots,\, \eetd{\Call{\PName_m}}{\decl} \!\big) ~=\\
 \lfpsymbol \Bigl( \lambda \eta \mydot \left(
\ctertenv{1} \oplus \eeet{\decl(\PName_1)}{\eta},\, \ldots,\, \ctertenv{1} \oplus
\eeet{\decl(\PName_m)}{\eta}  \right) \Bigr)~.
\end{multline*}

\noindent The proof rules for reasoning about procedure calls in both calculi
are easily adapted. We show only the case of \lrule{wp-rec}; the others admit a
similar adaptation.
\begin{flushleft}
\resizebox{\columnwidth}{!}%
{%
$
\begin{array}{@{}l@{}}
\infrule{
\!\!\begin{array}{c@{\ }c@{\ \ }l@{}}
  \wp{\Call{\PName_1}}\!(f_1) \preceq g_1, ...\, ,
  \wp{\Call{\PName_m}}\!(f_m) \preceq g_m& \derivsymbol &\wp{\decl(\PName_1)}\!(f_1) \preceq
  g_1\\[-1ex]
      & \vdots & \\[0.25ex]
  \wp{\Call{\PName_1}}\!(f_1) \preceq g_1, ...\, ,
  \wp{\Call{\PName_m}}\!(f_m) \preceq g_m& \derivsymbol &\wp{\decl(\PName_m)}\!(f_m) \preceq
  g_m
\end{array}
}
{\wpd{\Call{\PName_i}}{\decl}\!(f_i) \preceq g_i \quad \text{for all $i=1\ldots m$}}
\end{array}
$}
\end{flushleft}
The rule reasons about all the procedures simultaneously. Roughly speaking, the
rule premise requires deriving the specification $g_i$  for the body of each procedure $\PName_i$,
assuming the corresponding specification for each procedure call in it. The rule
conclusion establishes the specification of the set of procedures
altogether.

\paragraph{Random samplings.} All our results remain valid if the \pRGCL
language allows for random samplings (from distributions with discrete
support). In a random sampling $\Ass{x}{\mu}$, $\mu$ represents a probability
distribution which is sampled and its outcome is assigned to program variable
$x$. In \autoref{sec:casestudy} we exploit this extension to model a
probabilistic variant of the binary search.

\paragraph{Alternative runtime models.}
The \eetsymbol--calculus can be easily adapted to capture alternative runtime
models. For instance we can capture the model where we are interested in
counting only the number of procedure calls and also more fine--grained models
such as that where the time consumed by an assignment (or guard evaluation)
depends on some notion of \emph{size} of the expression being assigned (guard
being evaluated). Likewise, the \eetsymbol--calculus can be easily adapted so 
as to take into account the costs of flipping the (possibly biased) coin from
probabilistic choices.

\paragraph{Soundness of the \boldeetsymbol--calculus.}
We can also establish the soundness of the  \eetsymbol--calculus \wrt the
operational semantics based on PRMC. This only requires changes in the reward
function.

\section{Case Study}
\label{sec:casestudy}
In this section we show the applicability of our approach analyzing a
probabilistic, so--called Sherwood~\cite{McConnell:2008}, variant of the binary
search. The main difference \wrt the classical version is that in each recursive
call the pivot element is picked uniformly at random from the remaining array,
aligning this way worst--, best-- and average--case of the algorithm runtime.

The algorithm we analyze searches for value $\mathit{val}$ in array
$a[\mathit{left}\,..\,\mathit{right}]$. It is encoded by procedure $B$ with
declaration $\decl$ presented in \autoref{fig:binarysearch:partial:exists}. We
use random assignment
$\Ass{\mathit{mid}}{\textsf{uniform}(\mathit{left},\, \mathit{right}})$ to model
the random election of the pivot. For simplicity, we assume that the random
assignment is performed in constant time $1$ if
$\mathit{left} \leq \mathit{right}$ and that it diverges if
$\mathit{left} > \mathit{right}$.

\paragraph{Partial correctness.}
We verify the following partial correctness property: When $B$ is invoked in a
state where $\mathit{left} \leq \mathit{right}$,
$a[\mathit{left}\,..\,\mathit{right}]$ is sorted, and $\mathit{val}$ occurs in
$a[\mathit{left}\,..\,\mathit{right}]$, then the invocation of $B$ stores in
$\mathit{mid}$ the index where $\mathit{val}$ lies.  Formally,
\begin{align*}
	g ~\preceq~ 	&\wlpd{\Call B}{\decl}\!(f),~ \text{with}\\
	g ~=~ 		& [\mathit{left} \leq \mathit{right}] \cdot \big[\textsf{sorted}(\mathit{left},\, \mathit{right})\big]\\
				& {} \cdot \big[\exists x \in [\mathit{left},\, \mathit{right}]\colon a[x] = \mathit{val}\big]\\
	f ~=~ 		& \big[a[\mathit{mid}] = \mathit{val}\big]~,
\end{align*}
where $\big[\textsf{sorted}(y,\, z)\big]$ is the indicator function of $a[y\,..\,z]$ being sorted.
In order to prove $g \preceq \wlp{\Call B}\!(f)$ we apply rule
[\textsf{wlp-rec}]. We are then left to prove 
\[
\deriv{g \preceq \wlp{\Call B}\!(f)}{g \preceq \wlp{\decl(B)}\!(f)}~.
\]
The way in which we propagate post--expectation $f$ from the exit point
of the procedure till its entry point, obtaining pre--expectation $g$, is fully detailed in
\autoref{fig:binarysearch:partial:exists}. To do so we use assumption $g \preceq
\wlp{\Call B}\!(f)$ and monotonicity of $\wlpsymbol$. 
\begin{figure}[t]
\begin{align*}
%
\text{\scriptsize $g{\preceq}$}&\;\; \text{\footnotesize $\textcolor{gray}{\frac{[\mathit{left} {<} \mathit{right}]}{\mathit{right} {-} \mathit{left} {+} 1} \sum_{i = \mathit{left}}^{\mathit{right}}\left(\!\!\!\!\begin{array}{l}\big[a[\mathit{i}] {<} \mathit{val}\big] {\cdot} g[\mathit{left}/\textsf{min}(i+1, \mathit{right})]\\+ \big[a[\mathit{i}] {>} \mathit{val}\big] {\cdot} g[\mathit{right}/\textsf{max}(i-1, \mathit{left})]\\+ \big[a[\mathit{i}] {=} \mathit{val}\big]\end{array}\!\!\!\!\!\right)}$}\\[-2pt]
    &\hphantom{\boldsymbol{\colon}}\;\; \text{\footnotesize $\textcolor{gray}{~ + [\mathit{left} = \mathit{right}] \cdot \big[a[\mathit{left}] = \mathit{val}\big]}$}\\[-2pt]
\text{\scriptsize 1} &\boldsymbol{\colon}\;\; \boldsymbol{\Ass{\mathit{mid}}{\textbf{\textsf{uniform}}(\mathit{left},\, \mathit{right})};}\\[-4pt]
    &\hphantom{\boldsymbol{\colon}}\;\; \text{\footnotesize $\textcolor{gray}{[\mathit{left} < \mathit{right}]\cdot\Big(\big[a[\mathit{mid}] < \mathit{val}\big] \cdot g[\mathit{left}/\cdots]}$}\\[-4pt]
 &\hphantom{\boldsymbol{\colon}}\;\; \text{\footnotesize $\textcolor{gray}{\qquad + \big[a[\mathit{mid}] > \mathit{val}\big] \cdot g[\mathit{right}/\cdots]}$}\\[-3pt]
    &\hphantom{\boldsymbol{\colon}}\;\; \text{\footnotesize
      $\textcolor{gray}{\qquad + \big[a[\mathit{mid}] = \mathit{val}\big]\Big) + [\mathit{left} \geq \mathit{right}] \cdot f}$}\\[-4pt]
\text{\scriptsize 2} &\boldsymbol{\colon}\;\; \boldsymbol{\textbf{\textsf{if}} ~ (\mathit{left} < \mathit{right})\{}\\[-2pt]
    &\hphantom{\boldsymbol{\colon}}\;\; \qquad\text{\footnotesize $\textcolor{gray}{\big[a[\mathit{mid}] {<} \mathit{val}\big] {\cdot} g[\mathit{left}/\cdots] + \big[a[\mathit{mid}] {>} \mathit{val}\big] {\cdot} g[\mathit{right}/\cdots]}$}\\[-2pt]
    &\hphantom{\boldsymbol{\colon}}\;\; \qquad\text{\footnotesize $\textcolor{gray}{\qquad + \big[a[\mathit{mid}] {=} \mathit{val}\big] \cdot f}$}\\[-2pt]
\text{\scriptsize 3} &\boldsymbol{\colon}\;\; \qquad\boldsymbol{\textbf{\textsf{if}} ~ (a[\mathit{mid}] < \mathit{val})\{}\\[-2pt]
    &\hphantom{\boldsymbol{\colon}}\;\; \qquad\qquad \text{\footnotesize $\textcolor{gray}{g[\mathit{left}/\textsf{min}(\mathit{mid} + 1,\, \mathit{right})]}$}\\[-4pt]
\text{\scriptsize 4} &\boldsymbol{\colon}\;\; \qquad\qquad\boldsymbol{\Ass{\mathit{left}}{\textbf{\textsf{min}}(\mathit{mid} + 1,\, \mathit{right})};}\\[-5pt]
    &\hphantom{\boldsymbol{\colon}}\;\; \qquad\qquad \text{\footnotesize $\textcolor{gray}{g}$}\\[-3pt]
\text{\scriptsize 5} &\boldsymbol{\colon}\;\; \qquad\qquad\boldsymbol{\Call B}\\[-4pt]
    &\hphantom{\boldsymbol{\colon}}\;\; \qquad\qquad \text{\footnotesize $\textcolor{gray}{f}$}\\[-3pt]
\text{\scriptsize 6} &\boldsymbol{\colon}\;\; \qquad\boldsymbol{\} ~ \textbf{\textsf{else}} ~ \{}\\[-2pt]
    &\hphantom{\boldsymbol{\colon}}\;\; \qquad\qquad\text{\footnotesize $\textcolor{gray}{\big[a[\mathit{mid}] > \mathit{val}\big] \cdot g[\mathit{right}/\cdots] + \big[a[\mathit{mid}] \leq \mathit{val}\big] \cdot f}$}\\[-2pt]
\text{\scriptsize 7} &\boldsymbol{\colon}\;\; \qquad\qquad\boldsymbol{\textbf{\textsf{if}} ~ (a[\mathit{mid}] > \mathit{val})\{}\\[-2pt]
    &\hphantom{\boldsymbol{\colon}}\;\; \qquad\qquad\qquad \text{\footnotesize $\textcolor{gray}{g[\mathit{right}/\textsf{max}(\mathit{mid} - 1,\, \mathit{left})]}$}\\[-2pt]
\text{\scriptsize 8} &\boldsymbol{\colon}\;\; \qquad\qquad\qquad\boldsymbol{\Ass{\mathit{right}}{\textbf{\textsf{max}}(\mathit{mid} - 1,\, \mathit{left})};}\\[-6pt]
    &\hphantom{\boldsymbol{\colon}}\;\; \qquad\qquad\qquad \text{\footnotesize $\textcolor{gray}{g}$}\\[-4pt]
\text{\scriptsize 9} &\boldsymbol{\colon}\;\; \qquad\qquad\qquad\boldsymbol{\Call B}\\[-5pt]
    &\hphantom{\boldsymbol{\colon}}\;\; \qquad\qquad\qquad \text{\footnotesize $\textcolor{gray}{f}$}\\[-4pt]
\text{\scriptsize 10} &\boldsymbol{\colon}\;\; \qquad\qquad\boldsymbol{\} ~ \textbf{\textsf{else}} ~ \{}~ \text{\footnotesize $\textcolor{gray}{f}$} ~ \boldsymbol{\Skip}~ \text{\footnotesize $\textcolor{gray}{f}$} ~\boldsymbol{\}} ~ \text{\footnotesize $\textcolor{gray}{f}$} \\[-2pt]
\text{\scriptsize 11} &\boldsymbol{\colon}\;\; \qquad\boldsymbol{\}} ~ \text{\footnotesize $\textcolor{gray}{f}$} \\[-2pt]
\text{\scriptsize 12} &\boldsymbol{\colon}\;\; \boldsymbol{\} ~ \textbf{\textsf{else}} ~ \{}~ \text{\footnotesize $\textcolor{gray}{f}$} ~ \boldsymbol{\Skip}~ \text{\footnotesize $\textcolor{gray}{f}$} ~\boldsymbol{\}}~ \text{\footnotesize $\textcolor{gray}{f}$}
\end{align*}
\caption{Declaration $\text{\small $\mathcal D$}$ (boldface) of the
  probabilistic binary search procedure $B$ together with the proof (lightface)
  that $\Call B$ finds the index of $\mathit{val}$ when started in a sorted
  array $a[\mathit{left}\,..\,\mathit{right}]$ which contains value
  $\mathit{val}$.
 We write $\text{\footnotesize \textcolor{gray}{$j$}} ~ \boldsymbol{C}
  ~ \text{\footnotesize \textcolor{gray}{$h$}}$ for $j \preceq \wp{C}(h)$. 
}
\label{fig:binarysearch:partial:exists}
\end{figure}
%
%


%
Dually, we can verify that when $\mathit{val}$ is not in
the array, the value of $a[\mathit{mid}]$ after termination of $B$ is different
from $\mathit{val}$. A detailed derivation of this property is provided in
Appendix \ref{app:casestudy}, \autoref{fig:binarysearch:partial:not-exists}.


\paragraph{Expected runtime.}
We perform a runtime analysis of the algorithm for those inputs where
$\mathit{val}$ does not occur in the array. Under this assumption we can
distinguish two cases: either $\mathit{val}$ is smaller than every element in the
array or larger than all of them. 
%
%

For the first case we show that the expected runtime of the algorithm is upper bounded by $\boldsymbol{1} + u$, with
\begin{align*}
	u 	~=~ &[\mathit{left} > \mathit{right}] \cdot \boldsymbol{\infty} + \boldsymbol{3} \\
		&{} + [\mathit{left} < \mathit{right}] \cdot \left(\boldsymbol{5} \cdot H_{\mathit{right} - \mathit{left} + 1} - \boldsymbol{\nicefrac{5}{2}}\right)~,
\end{align*}
and $H_k$ bing the $k$-th harmonic number. 
Formally, we show that
\begin{align*}
	\eet{\Call B}\!(\ctert{0}) ~\preceq~ \ctert{1} + u~
\end{align*}
applying rule $\lrule{eet-rec}$. We must then establish
\[
\deriv{\eet{\Call B}\!(\ctert{0}) \preceq \ctert{1} + u}{\eet{\decl}\!(\ctert{0}) \preceq u}~.
\] 
%
%
%
The details of this derivation are provided in \autoref{fig:binarysearch:runtime:alwayssmaller}.
\begin{figure}[t]
\begin{align*}
%
\text{\scriptsize $u{ = }$}&\;\; \text{\footnotesize $\textcolor{gray}{[\mathit{left} > \mathit{right}] \cdot \boldsymbol{\infty} + \boldsymbol{3}  + [\mathit{left} < \mathit{right}]}$}\\[-2pt]
    &\hphantom{\boldsymbol{\colon}}\;\; \text{\footnotesize $\textcolor{gray}{{} \cdot \left( \boldsymbol{5} + \sum_{i = \mathit{left}}^{\mathit{right}}\left(\begin{array}{l}\frac{[\textsf{min}(i+1,\, \mathit{right})  < \mathit{right}]}{\mathit{right} - \mathit{left} + 1}\\ {} \cdot \left(\boldsymbol{5} \cdot H_{\mathit{right} - \textsf{min}(i+1,\, \mathit{right})  < \mathit{right} + 1}\right.\\ \left.\qquad{} - \boldsymbol{\nicefrac{5}{2}} \right)\end{array}\right)\right)}$}\\[-2pt]
\text{\scriptsize 1} &\boldsymbol{\colon}\;\; \boldsymbol{\Ass{\mathit{mid}}{\textbf{\textsf{uniform}}(\mathit{left},\, \mathit{right})};}\\[-4pt]
    &\hphantom{\boldsymbol{\colon}}\;\; \text{\footnotesize $\textcolor{gray}{\boldsymbol{2} + [\mathit{left} < \mathit{right}]\cdot\Big(\boldsymbol{2} + \big[a[\mathit{mid}] < \mathit{val}\big] \cdot ( \boldsymbol{3} }$}\\[-3pt]
    &\hphantom{\boldsymbol{\colon}}\;\; \text{\footnotesize $\textcolor{gray}{\qquad\qquad {}+ [\textsf{min}(\mathit{mid} + 1,\, \mathit{right}) < \mathit{right}]}$}\\[-3pt]
    &\hphantom{\boldsymbol{\colon}}\;\; \text{\footnotesize $\textcolor{gray}{\qquad\qquad\qquad {}\cdot \left(\boldsymbol{5}\cdot H_{\mathit{right} - \textsf{min}(\mathit{mid} + 1,\, \mathit{right}) + 1}  - \boldsymbol{\nicefrac{5}{2}}\right)}$}\\[-3pt]    
 &\hphantom{\boldsymbol{\colon}}\;\; \text{\footnotesize $\textcolor{gray}{\qquad {} + \big[a[\mathit{mid}] > \mathit{val}\big] \cdot (\cdots)\Big)}$}\\[-4pt]
\text{\scriptsize 2} &\boldsymbol{\colon}\;\; \boldsymbol{\textbf{\textsf{if}} ~ (\mathit{left} < \mathit{right})\{}\\[-2pt]
    &\hphantom{\boldsymbol{\colon}}\;\; \qquad\text{\footnotesize $\textcolor{gray}{ \boldsymbol{3} + \big[a[\mathit{mid}] < \mathit{val}\big] \cdot u[\mathit{left}/\textsf{min}(\mathit{mid}+1,\, \mathit{right})]}$}\\[-2pt]
    &\hphantom{\boldsymbol{\colon}}\;\; \qquad\text{\footnotesize $\textcolor{gray}{{} + \big[a[\mathit{mid}] > \mathit{val}\big] \cdot (\cdots)}$}\\[-2pt]
\text{\scriptsize 3} &\boldsymbol{\colon}\;\; \qquad\boldsymbol{\textbf{\textsf{if}} ~ (a[\mathit{mid}] < \mathit{val})\{}\\[-2pt]
    &\hphantom{\boldsymbol{\colon}}\;\; \qquad\qquad \text{\footnotesize $\textcolor{gray}{\boldsymbol{2} + u[\mathit{left}/\textsf{min}(\mathit{mid}+1,\, \mathit{right})]}$}\\[-2pt]
\text{\scriptsize 4} &\boldsymbol{\colon}\;\; \qquad\qquad\boldsymbol{\Ass{\mathit{left}}{\textbf{\textsf{min}}(\mathit{mid} + 1,\, \mathit{right})};}\\[-3pt]
    &\hphantom{\boldsymbol{\colon}}\;\; \qquad\qquad \text{\footnotesize $\textcolor{gray}{\boldsymbol{1} + u}$}\\[-3pt]
\text{\scriptsize 5} &\boldsymbol{\colon}\;\; \qquad\qquad\boldsymbol{\Call B}\\[-5pt]
    &\hphantom{\boldsymbol{\colon}}\;\; \qquad\qquad \text{\footnotesize $\textcolor{gray}{\boldsymbol{0}}$}\\[-5pt]
\text{\scriptsize 6} &\boldsymbol{\colon}\;\; \qquad\boldsymbol{\} ~ \textbf{\textsf{else}} ~ \{}\\[-4pt]
    &\hphantom{\boldsymbol{\colon}}\;\; \qquad\qquad\text{\footnotesize $\textcolor{gray}{\boldsymbol{2} +\big[a[\mathit{mid}] > \mathit{val}\big] \cdot (\cdots)}$}\\[-2pt]
\text{\scriptsize 7} &\boldsymbol{\colon}\;\; \qquad\qquad\boldsymbol{\textbf{\textsf{if}} ~ (a[\mathit{mid}] > \mathit{val})\{}\\[-3pt]
    &\hphantom{\boldsymbol{\colon}}\;\; \qquad\qquad\qquad \text{\footnotesize $\textcolor{gray}{\boldsymbol{2} + u[\mathit{right}/\textsf{max}(\mathit{mid}-1,\, \mathit{left})]}$}\\[-3pt]
\text{\scriptsize 8} &\boldsymbol{\colon}\;\; \qquad\qquad\qquad\boldsymbol{\Ass{\mathit{right}}{\textbf{\textsf{max}}(\mathit{mid} - 1,\, \mathit{left})};}\\[-4pt]
    &\hphantom{\boldsymbol{\colon}}\;\; \qquad\qquad\qquad \text{\footnotesize $\textcolor{gray}{\boldsymbol{1} + u}$}\\[-3pt]
\text{\scriptsize 9} &\boldsymbol{\colon}\;\; \qquad\qquad\qquad\boldsymbol{\Call B}\\[-5pt]
    &\hphantom{\boldsymbol{\colon}}\;\; \qquad\qquad\qquad \text{\footnotesize $\textcolor{gray}{\boldsymbol{0}}$}\\[-5pt]
\text{\scriptsize 10} &\boldsymbol{\colon}\;\; \qquad\qquad\boldsymbol{\} ~ \textbf{\textsf{else}} ~ \{}~ \text{\footnotesize $\textcolor{gray}{\boldsymbol{1}}$} ~ \boldsymbol{\Skip}~ \text{\footnotesize $\textcolor{gray}{\boldsymbol{0}}$} ~\boldsymbol{\}} ~ \text{\footnotesize $\textcolor{gray}{\boldsymbol{0}}$} \\
\text{\scriptsize 11} &\boldsymbol{\colon}\;\; \qquad\boldsymbol{\}} ~ \text{\footnotesize $\textcolor{gray}{\boldsymbol{0}}$} \\
\text{\scriptsize 12} &\boldsymbol{\colon}\;\; \boldsymbol{\} ~ \textbf{\textsf{else}} ~ \{}~ \text{\footnotesize $\textcolor{gray}{\boldsymbol{1}}$} ~ \boldsymbol{\Skip}~ \text{\footnotesize $\textcolor{gray}{\boldsymbol{0}}$} ~\boldsymbol{\}}~ \text{\footnotesize $\textcolor{gray}{\boldsymbol{0}}$}
\end{align*}
\caption{Runtime analysis of the probabilistic binary search procedure for the
  case that every value occurring in $a[\mathit{left}\,..\,\mathit{right}]$ is
  smaller than $\mathit{val}$. We write $\text{\footnotesize
    \textcolor{gray}{$j$}} ~ \boldsymbol{C} ~ \text{\footnotesize
    \textcolor{gray}{$h$}}$ for $j \succeq \eet{C}(h)$.
}
\label{fig:binarysearch:runtime:alwayssmaller}
\end{figure}
%
%

%

Similarly, when $\mathit{val}$ is greater than every element in the array, the
expected runtime is upper bounded by $\boldsymbol{1} + u$, with
%
\begin{align*}
	u 	~=~ &[\mathit{left} > \mathit{right}] \cdot \boldsymbol{\infty} + \boldsymbol{3} \\
		&{} + [\mathit{left} < \mathit{right}] \cdot \left(\boldsymbol{6} \cdot H_{\mathit{right} - \mathit{left} + 1} - \boldsymbol{3}\right)~.
\end{align*}
The verification for this case is analogous therefore omitted.

Combining the two cases we conclude that when the sought--after value does not occur in
the array, the algorithm terminates in expected time in $\Theta\big(\log
n\big)$, where $n = \mathit{right} - \mathit{left} + 1$ is the size of the
array, since $H_k  \in \Theta(\log k)$.

\section{Related Work}
\label{sec:related}
\paragraph{\textsf{wp}--style reasoning for recursive programs.}
Recursion has been treated for non--proba\-bilis\-tic programs.
Hesselink~\cite{Hesselink:FAC:93} provided several proof rules for recursive procedures, both for total and partial correctness. 
Our first two proof rules are extensions of his rules to the probabilistic setting.
Predicate transformer semantics for recursive non--deterministic procedures has been provided by Bonsangue and Kok~\cite{Bonsangue:FAC:94} and Hesselink~\cite{Hesselink:FAC:89}.
Nipkow~\cite{Nipkow:CSL:02} provides an operational semantics and a Hoare logic for recursive (parameterless) non--deterministic procedures.
Zhang \emph{et al.}~\cite{DBLP:conf/tphol/ZhangMHH02} establishes the equivalence between an operational semantics and a weakest pre--condition semantics for recursive programs in Coq.
To some extent our transfer theorem between probabilistic pushdown automata and the \wpsymbol--semantics can be considered as a probabilistic extension of this work.

\paragraph{Deductive reasoning for recursive probabilistic programs.}
Jones provided several proof rules for recursive probabilistic programs in her Ph.D.\ dissertation~\cite{Jones:1992}.
One of our proof rules is a generalisation of Jones' proof rule to general recursion.
McIver and Morgan~\cite{McIver:2001b} also provide a \wpsymbol--semantics of probabilistic recursive programs. 
While \cite{McIver:2001b}  use fixed point techniques, we follow~\eg~\citet{Hehner:AI:79} and define the semantics of a recursive
procedure as the limit of an approximation sequence.  
In contrast to our approach based on procedures, \cite{McIver:2001b} introduced recursion through the language constructor $\mathbf{rec}~\mathcal{B}$, where $\mathcal{B}$ is a program--semantics transformer. 
(Intuitively $\mathcal{B}$ encodes how the recursive procedure defined (and invoked) by $\mathbf{rec}~\mathcal{B}$ transforms the outcome of its recursive calls).
Our approach provides a strict separation between program syntax and semantics. 
Moreover our approach based on procedure calls can model mutual recursion in a natural way (see Section~\ref{sec:extensions}), while the approach in~\cite{McIver:2001b} approach does not accommodate so naturally to such cases.
Audebaud and Paulin-Mohring~\cite{Audebaud:2009} present a mechanized method for proving properties of randomized algorithms in the Coq proof assistant.
Their approach is based on higher--order logic, in particular using a monadic interpretation of programs as probabilistic distributions.
Our proof rule for obtaining two--sided bounds on recursive programs is directly adapted from their work.
They however do neither relate their work to an operational model and nor support the analysis of expected runtimes.

\paragraph{Semantics of recursive probabilistic programs.}
Gupta \emph{et al.} consider the interplay between constraints, probabilistic choice, and recursion in the context of a (concurrent) constraint--based probabilistic programming language.
They provide an operational semantics using labeled transition systems and (weak) bisimulation as well as a denotational semantics.
Recursion is treated operationally by considering the limit of syntactic finite approximations.
In the denotational semantics, the mixture of probabilities and constraints violates basic monotonicity properties for a standard treatment of recursion. 
Their main result is that the transition system semantics modulo weak bisimulation is fully abstract with respect to the input--output relation of processes.
They do neither consider non--determinism nor reasoning about recursive probabilistic programs.
Pfeffer and Koller~\cite{DBLP:conf/aaai/PfefferK00} provide a measure--theoretic semantics of recursive Bayesian networks and show that every recursive probabilistic relational database has a probability measure as model.
This is complemented by an inference algorithm that obtains approximations by basically unfolding the recursive Bayesian network.
Recently, Toronto \emph{et al.}~\cite{DBLP:conf/esop/TorontoMH15} provided a measure--theoretic semantics for a probabilistic programming language with recursion.
Their interpretation of recursive programs is however restricted to (almost surely) terminating programs. 

\paragraph{Probabilistic pushdown automata.}
The analysis of probabilistic pushdown automata, which correspond to the model of recursive Markov chains, has been well--investigated.
Key computational problems for analyzing classes of these models can be reduced to computing the least fixed point solution of corresponding classes of monotone polynomial systems of non--linear equations.
For subclasses of these models termination probabilities, $\omega$--regular properties, and expected runtimes can be algorithmically obtained.
Recent surveys are provided by Etessami~\cite{etessami:2016} and Brazdil \emph{et al.}~\cite{DBLP:journals/fmsd/BrazdilEKK13}.
Our transfer theorem indicates that (some of) these results are transferable to obtaining weakest pre--expectations for recursive probabilistic programs having a finite--control probabilistic push--down automata.
A detailed study is outside the scope of this paper and left for future work.

\section{Conclusion}
\label{sec:conclusion}
We have presented two wp-calculi: one for reasoning about correctness, and one for analysing expected rum-times of recursive probabilistic programs.
The wp-calculi have been related, equipped with proof rules, and exemplified by analysing a Sherwood version of binary search.
A relation with a straightforward operational interpretation using pushdown Markov chains has been established.
We believe that this work provides a good basis for the automation of the analysis of recursive probabilistic programs. 
Future work consists of applying our calculi to other recursive randomized algorithms (such as quick sort with random pivot selection).
Other future work includes investigating a generalisation of Colussi's technique~\cite{Colussi:1984} to transform a recursive program and its correctness proof into a non-recursive program with its accompanying correctness proof.  This would allow to transfer---typically simpler---correctness proofs of the recursive probabilistic programs to non-recursive ones.



\bibliographystyle{abbrvnat}
\bibliography{biblio}

\clearpage
\appendix
\section{Appendix}
\label{sec:appendix}
For our proofs about transformer \wpsymbol, we observe that ``$\preceq$''
endows the set of unbounded expectations \UEX with the structure of an upper
$\omega$--cpo\footnote{Given a binary relation $\leq$ over a set $A$, we say
  that $(A,\leq)$ is an \emph{upper (resp.\ lower) $\omega$-cpo} if $\leq$ is
  reflexive, transitive and antisymmetric, and every increasing $\omega$-chain
  \chain{a}{\leq} (resp.\ decreasing $\omega$-chain \chain{a}{\geq}) in $A$ has
  a supremum $\sup_n a_n$ (resp.\ an infimum $\inf_n a_n$) in $A$.}, where the
supremum of an increasing $\omega$--chain $\chain{f}{\preceq}$ is given
pointwise, \ie $(\sup_n f_n) (s) \eqdef \sup_n \, f_n(s)$. Likewise,
``$\preceq$'' endows the set of bounded expectations \BEX with the structure of
a lower $\omega$--cpo, where the infimum of a decreasing $\omega$--chain
$\chain{f}{\succeq}$ is given pointwise, \ie
$(\inf_n f_n) (s) \eqdef \inf_n \, f_n(s)$. Upper $\omega$--cpo $(\UEX,\preceq)$
has as botom element the constant expectation $\CteFun{0}$, while lower
$\omega$--cpo $(\BEX,\preceq)$ has as top element the constant expectation
$\CteFun{1}$. 

In what follows, we usually refer to the set of upper continuous expectation
transformers\footnote{A function $f \colon A \To B$ between two upper (resp.\
  lower) $\omega$-cpos $(A,\leq_A)$ and $(B,\leq_B)$ is \emph{upper}
  (resp. \emph{lower}) \emph{continuous} iff for every increasing $\omega$-chain
  $\chain{a}{\leq_A}$ (resp.\ decreasing $\omega$-chain $\chain{a}{\geq_A}$),
  $\sup_n f(a_n) = f (\sup_n a_n)$ (resp.\ $\inf_n f(a_n) = f (\inf_n a_n)$).}
over $(\UEX, \preceq)$ and the set of lower continuous expectation transformers
over $(\BEX,\preceq)$. We use $\ucont{\UEX}{\UEX}$ and $\lcont{\BEX}{\BEX}$ to
denote such sets.

\subsection{Basic Properties of the $\textsf{w(l)p}$--Transformer}
\label{app:basicproperties}

\begin{proof}[Proof of Continuity]
We prove continuity by induction on the program structure.
Let $f_0 \preceq f_1 \preceq f_2 \preceq \cdots$ and $g_0 \succeq g_1 \succeq g_2 \succeq \cdots$
For the base cases we have:
\paragraph{$\boldsymbol{\Skip}$:}
\begin{align*}
	\wpd{\Skip}{\decl}\left(\sup_n f_n\right) ~=~ 	&\sup_n f_n ~=~ \sup_n \wpd{\Skip}{\decl}(f_n)
\intertext{and}
	\wlpd{\Skip}{\decl}\left(\inf_n g_n\right) ~=~ 	&\inf_n g_n ~=~ \inf_n \wlpd{\Skip}{\decl}(g_n)
\end{align*}
\paragraph{$\boldsymbol{\Ass{x}{E}}$:}
\begin{align*}
	\wpd{\Ass{x}{E}}{\decl}\left(\sup_n f_n\right) &~=~ \left(\sup_n f_n\right)\subst{x}{E}\\
	 &~=~ \sup_n f_n\subst{x}{E}\\
	 &~=~ \sup_n \wpd{\Ass{x}{E}}{\decl}(f_n)
\intertext{and}
	\wlpd{\Ass{x}{E}}{\decl}\left(\inf_n g_n\right) &~=~ \left(\inf_n g_n\right)\subst{x}{E}\\
	 &~=~ \inf_n g_n\subst{x}{E}\\
	 &~=~ \inf_n \wlpd{\Ass{x}{E}}{\decl}(g_n)
\end{align*}
\paragraph{$\boldsymbol{\Abort}$:}
\begin{align*}
	\wpd{\Abort}{\decl}\left(\sup_n f_n\right) &~=~ \boldsymbol{0} ~=~ \sup_n \boldsymbol{0}\\
	& ~=~ \sup_n \wpd{\Abort}{\decl}(f_n)
\intertext{and}
	\wlpd{\Abort}{\decl}\left(\inf_n g_n\right) &~=~ \boldsymbol{1}~=~ \inf_n \boldsymbol{1} \\
	& ~=~ \inf_n \wpd{\Abort}{\decl}(g_n)
\end{align*}
For the induction hypothesis we assume that for any two programs $c_1$ and $c_2$ continuity holds.
Then we can perform the induction step:
\paragraph{$\boldsymbol{\Cond{G}{c_1}{c_2}}$:}
\begin{align*}
		&\wpd{\Cond{G}{c_1}{c_2}}{\decl}\left(\sup_n f_n\right)\\
	=~	&\ToExp{G} \cdot \wpd{c_1}{\decl}\left(\sup_n f_n\right) + \ToExp{\lnot G} \cdot \wpd{c_2}{\decl}\left(\sup_n f_n\right)\\
	=~	&\ToExp{G} \cdot \sup_n \wpd{c_1}{\decl}(f_n) + \ToExp{\lnot G} \cdot \sup_n \wpd{c_2}{\decl}(f_n)\\
	=~	&\sup_n \ToExp{G} \cdot \wpd{c_1}{\decl}(f_n) + \ToExp{\lnot G} \cdot \wpd{c_2}{\decl}(f_n)\\
	=~	&\sup_n \wpd{\Cond{G}{c_1}{c_2}}{\decl}(f_n)
\intertext{and}
		&\wlpd{\Cond{G}{c_1}{c_2}}{\decl}\left(\inf_n g_n\right)\\
	=~	&\ToExp{G} \cdot \wlpd{c_1}{\decl}\left(\inf_n g_n\right) + \ToExp{\lnot G} \cdot \wpd{c_2}{\decl}\left(\inf_n g_n\right)\\
	=~	&\ToExp{G} \cdot \inf_n \wlpd{c_1}{\decl}(g_n) + \ToExp{\lnot G} \cdot \inf_n \wlpd{c_2}{\decl}(g_n)\\
	=~	&\inf_n \ToExp{G} \cdot \wlpd{c_1}{\decl}(g_n) + \ToExp{\lnot G} \cdot \wlpd{c_2}{\decl}(g_n)\\
	=~	&\inf_n \wlpd{\Cond{G}{c_1}{c_2}}{\decl}(g_n)
\end{align*}
\paragraph{$\boldsymbol{\PChoice{c_1}{p}{c_2}}$:}
\begin{align*}
		&\wpd{\PChoice{c_1}{p}{c_2}}{\decl}\left(\sup_n f_n\right)\\
	=~	&p \cdot \wpd{c_1}{\decl}\left(\sup_n f_n\right) + (1-p) \cdot \wpd{c_2}{\decl}\left(\sup_n f_n\right)\\
	=~	&p \cdot \sup_n \wpd{c_1}{\decl}(f_n) + (1-p) \cdot \sup_n \wpd{c_2}{\decl}(f_n)\\
	=~	&\sup_n p \cdot \wpd{c_1}{\decl}(f_n) + (1-p) \cdot \wpd{c_2}{\decl}(f_n)\\
	=~	&\sup_n \wpd{\PChoice{c_1}{p}{c_2}}{\decl}(f_n)
\intertext{and}
		&\wlpd{\PChoice{c_1}{p}{c_2}}{\decl}\left(\inf_n g_n\right)\\
	=~	&p \cdot \wlpd{c_1}{\decl}\left(\inf_n g_n\right) + (1-p) \cdot \wpd{c_2}{\decl}\left(\inf_n g_n\right)\\
	=~	&p \cdot \inf_n \wlpd{c_1}{\decl}(g_n) + (1-p) \cdot \inf_n \wlpd{c_2}{\decl}(g_n)\\
	=~	&\inf_n p \cdot \wlpd{c_1}{\decl}(g_n) + (1-p) \cdot \wlpd{c_2}{\decl}(g_n)\\
	=~	&\inf_n \wlpd{\PChoice{c_1}{p}{c_2}}{\decl}(g_n)
\end{align*}
\paragraph{$\boldsymbol{c_1;~ c_2}$:}
\begin{align*}
	\wpd{c_1;~ c_2}{\decl}\left(\sup_n f_n\right) ~=~	&\wpd{c_1}{\decl}\left(\wpd{c_2}{\decl}\left(\sup_n f_n\right)\right)\\
	=~	&\wpd{c_1}{\decl}\left(\sup_n \wpd{c_2}{\decl}(f_n)\right)\\
	=~	&\sup_n \wpd{c_1}{\decl}(\wpd{c_2}{\decl}(f_n))\\
	=~	&\sup_n \wpd{c_1;~c_2}{\decl}(f_n)
\intertext{and}
		\wlpd{c_1;~ c_2}{\decl}\left(\inf_n g_n\right) ~=~	&\wlpd{c_1}{\decl}\left(\wlpd{c_2}{\decl}\left(\inf_n g_n\right)\right)\\
	=~	&\wlpd{c_1}{\decl}\left(\inf_n \wlpd{c_2}{\decl}(g_n)\right)\\
	=~	&\inf_n \wlpd{c_1}{\decl}(\wpd{c_2}{\decl}(g_n))\\
	=~	&\inf_n \wlpd{c_1;~c_2}{\decl}(g_n)
\end{align*}
\paragraph{$\boldsymbol{\Call P}$:}
\begin{align*}
		\wpd{\Call P}{\decl}\left(\sup_n f_n\right) ~=~	&\sup_k\wp{\Calln{\PName}{k}{\decl}}\left(\sup_n f_n\right)
\intertext{and}
		\wlpd{\Call P}{\decl}\left(\inf_n g_n\right) ~=~	&\inf_k \wlp{\Calln{\PName}{k}{\decl}}\left(\inf_n g_n\right)
\end{align*}
Since $\Calln{\PName}{k}{\decl}$ is call--free for every $n$ and we have already proven continuity for all call--free programs, we have
\begin{align*}
	\wp{\Calln{\PName}{k}{\decl}}\left(\sup_n f_n\right) ~=~	& \sup_n \wp{\Calln{\PName}{k}{\decl}}(f_n)
\intertext{and}
	\wlp{\Calln{\PName}{k}{\decl}}\left(\inf_n g_n\right) ~=~	& \inf_n \wlp{\Calln{\PName}{k}{\decl}}(g_n)
		\end{align*}
for every $n$ and hence
\begin{align*}
	\wpd{\Call P}{\decl}\left(\sup_n f_n\right) 	&~=~ \sup_k \sup_n \wp{\Calln{\PName}{k}{\decl}}(f_n)\\
	& ~=~ \sup_n \sup_k \wp{\Calln{\PName}{k}{\decl}}(f_n)\\
	& ~=~ \sup_n \wp{\Call \PName}{\decl}(f_n)
\intertext{and}
	\wlpd{\Call P}{\decl}\left(\inf_n g_n\right) 	&~=~ \inf_k \inf_n \wlp{\Calln{\PName}{k}{\decl}}(g_n)\\
	& ~=~ \inf_n \inf_k \wlp{\Calln{\PName}{k}{\decl}}(g_n)\\
	& ~=~ \inf_n \wlp{\Call \PName}{\decl}(g_n)~.
\end{align*}
\end{proof}

\begin{proof}[Proof of Monotonicity]
Assume $f_1 \preceq f_2$. Then
\begin{align*}
	\wpd{c}{\decl}(f_2) ~&=~ \wpd{c}{\decl}(\sup \{f_1,\, f_2\})\\
	 ~&=~ \sup \{\wpd{c}{\decl}(f_1),\, \wpd{c}{\decl}(f_2)\} \tag{continuity of $\textsf{wp}$}
\intertext{which implies $\wpd{c}{\decl}(f_1) \preceq \wpd{c}{\decl}(f_2)$, and}
\wlpd{c}{\decl}(f_1) ~&=~ \wpd{c}{\decl}(\inf \{f_1,\, f_2\})\\
	 ~&=~ \inf \{\wlpd{c}{\decl}(f_1),\, \wlpd{c}{\decl}(f_2)\}~,\tag{continuity of $\textsf{wlp}$}
\end{align*}
which implies $\wlpd{c}{\decl}(f_1) \preceq \wlpd{c}{\decl}(f_2)$.
\end{proof}

\begin{proof}[Proof of Linearity]
We prove linearity by induction on the program structure.
For the base cases we have:
\paragraph{$\boldsymbol{\Skip}$:}
\abovedisplayskip=-1\baselineskip
\begin{align*}
		&\wpd{\Skip}{\decl}(\alpha_1 \cdot f_1 + \alpha_2 \cdot f_2)\\
	=~	&\alpha_1 \cdot f_1 + \alpha_2 \cdot f_2\\
	=~	&\alpha_1 \cdot \wpd{\Skip}{\decl}(f_1) + \alpha_2 \cdot \wpd{\Skip}{\decl}(f_2)
\end{align*}\normalsize
\paragraph{$\boldsymbol{\Ass{x}{E}}$:}
\begin{align*}
		&\wpd{\Ass{x}{E}}{\decl}(\alpha_1 \cdot f_1 + \alpha_2 \cdot f_2)\\
	=~	&(\alpha_1 \cdot f_1 + \alpha_2 \cdot f_2)\subst{x}{E}\\
	=~	&\alpha_1 \cdot f_1\subst{x}{E} + \alpha_2 \cdot f_2\subst{x}{E}\\
	=~	&\alpha_1 \cdot \wpd{\Ass{x}{E}}{\decl}(f_1) + \alpha_2 \cdot \wpd{\Ass{x}{E}}{\decl}(f_2)
\end{align*}
\paragraph{$\boldsymbol{\Abort}$:}
\abovedisplayskip=-1\baselineskip
\begin{align*}
		&\wpd{\Abort}{\decl}(\alpha_1 \cdot f_1 + \alpha_2 \cdot f_2)\\
	=~	&\boldsymbol{0}\\
	=~	&\alpha_1 \cdot \boldsymbol{0} + \alpha_2 \cdot \boldsymbol{0}\\
	=~	&\alpha_1 \cdot \wpd{\Abort}{\decl}(f_1) + \alpha_2 \cdot \wpd{\Abort}{\decl}(f_2)
\end{align*}
\normalsize
For the induction hypothesis we assume that for any two programs $c_1$ and $c_2$ linearity holds.
Then we can perform the induction step:
\paragraph{$\boldsymbol{\Cond{G}{c_1}{c_2}}$:}
\begin{align*}
		&\wpd{\Cond{G}{c_1}{c_2}}{\decl}(\alpha_1 \cdot f_1 + \alpha_2 \cdot f_2)\\
	=~	&\ToExp{G} \cdot \wpd{c_1}{\decl}(\alpha_1 \cdot f_1 + \alpha_2 \cdot f_2)\\
		&{} + \ToExp{\lnot G} \cdot \wpd{c_2}{\decl}(\alpha_1 \cdot f_1 + \alpha_2 \cdot f_2)\\
	=~	&\ToExp{G} \cdot \left(\alpha_1 \cdot \wpd{c_1}{\decl}(f_1) + \alpha_2 \cdot \wpd{c_1}{\decl}(f_2)\right)\\
		&{} + \ToExp{\lnot G} \cdot \left(\alpha_1 \cdot \wpd{c_2}{\decl}(f_1) + \alpha_2 \cdot \wpd{c_2}{\decl}(f_2)\right)\\
	=~	&\alpha_1 \cdot \left( \ToExp{G} \cdot \wpd{c_1}{\decl}(f_1) + \ToExp{\lnot G} \cdot \wpd{c_2}{\decl}(f_1) \right)\\
		&{} + \alpha_2 \cdot \left( \ToExp{G} \cdot \wpd{c_1}{\decl}(f_2) + \ToExp{\lnot G} \cdot \wpd{c_2}{\decl}(f_2)\right)\\
	=~	&\alpha_1 \cdot \wpd{\Cond{G}{c_1}{c_2}}{\decl}(f_1)\\
		&{} + \alpha_2 \cdot \wpd{\Cond{G}{c_1}{c_2}}{\decl}(f_2)
\end{align*}
\paragraph{$\boldsymbol{\PChoice{c_1}{p}{c_2}}$:}
\begin{align*}
		&\wpd{\PChoice{c_1}{p}{c_2}}{\decl}(\alpha_1 \cdot f_1 + \alpha_2 \cdot f_2)\\
	=~	&p \cdot \wpd{c_1}{\decl}(\alpha_1 \cdot f_1 + \alpha_2 \cdot f_2)\\
		&{} + (1-p) \cdot \wpd{c_2}{\decl}(\alpha_1 \cdot f_1 + \alpha_2 \cdot f_2)\\
	=~	&p \cdot \left(\alpha_1 \cdot \wpd{c_1}{\decl}(f_1) + \alpha_2 \cdot \wpd{c_1}{\decl}(f_2)\right)\\
		&{} + (1-p) \cdot \left(\alpha_1 \cdot \wpd{c_2}{\decl}(f_1) + \alpha_2 \cdot \wpd{c_2}{\decl}(f_2)\right)\\
	=~	&\alpha_1 \cdot \left( p \cdot \wpd{c_1}{\decl}(f_1) + (1-p) \cdot \wpd{c_2}{\decl}(f_1) \right)\\
		&{} + \alpha_2 \cdot \left( p \cdot \wpd{c_1}{\decl}(f_2) + (1-p) \cdot \wpd{c_2}{\decl}(f_2)\right)\\
	=~	&\alpha_1 \cdot \wpd{\PChoice{c_1}{p}{c_2}}{\decl}(f_1)\\
		&{} + \alpha_2 \cdot \wpd{\PChoice{c_1}{p}{c_2}}{\decl}(f_2)
\end{align*}
\paragraph{$\boldsymbol{c_1;~ c_2}$:}
\begin{align*}
		&\wpd{c_1;~ c_2}{\decl}(\alpha_1 \cdot f_1 + \alpha_2 \cdot f_2)\\
	=~	&\wpd{c_1}{\decl}(\wpd{c_2}{\decl}(\alpha_1 \cdot f_1 + \alpha_2 \cdot f_2))\\
	=~	&\wpd{c_1}{\decl}(\alpha_1 \cdot \wpd{c_2}{\decl}(f_1) + \alpha_2 \cdot \wpd{c_2}{\decl}(f_2))\\
	=~	&\alpha_1 \cdot \wpd{c_1}{\decl}(\wpd{c_2}{\decl}(f_1))\\
		&{} + \alpha_2 \cdot \wpd{c_1}{\decl}(\wpd{c_2}{\decl}(f_2))\\
	=~	&\alpha_1 \cdot \wpd{c_1;~c_2}{\decl}(f_1) + \alpha_2 \cdot \wpd{c_1;~c_2}{\decl}(f_2)
\end{align*}
\paragraph{$\boldsymbol{\Call P}$:}
\abovedisplayskip=-1\baselineskip
\begin{align*}
		&\wpd{\Call P}{\decl}(\alpha_1 \cdot f_1 + \alpha_2 \cdot f_2)\\
	=~	&\sup_n \wp{\Calln{\PName}{n}{\decl}}(\alpha_1 \cdot f_1 + \alpha_2 \cdot f_2)
\end{align*}
\normalsize
Since $\Calln{\PName}{n}{\decl}$ is call--free for every $n$ and we have already proven linearity for all call--free programs, we have
\begin{align*}
		&\wp{\Calln{\PName}{n}{\decl}}(\alpha_1 \cdot f_1 + \alpha_2 \cdot f_2)\\
	=~	&\alpha_1 \cdot \wp{\Calln{\PName}{n}{\decl}}(f_1) + \alpha_2 \cdot \wp{\Calln{\PName}{n}{\decl}}(f_2)
\end{align*}
for every $n$ and hence
\belowdisplayskip=-1\baselineskip
\begin{align*}
		&\sup_n \wp{\Calln{\PName}{n}{\decl}}(\alpha_1 \cdot f_1 + \alpha_2 \cdot f_2)\\
	=~	&\sup_n \alpha_1 \cdot \wp{\Calln{\PName}{n}{\decl}}(f_1) + \alpha_2 \cdot \wp{\Calln{\PName}{n}{\decl}}(f_2)\\
	=~	&\alpha_1 \cdot \sup_n \wp{\Calln{\PName}{n}{\decl}}(f_1) + \alpha_2 \cdot \sup_n \wp{\Calln{\PName}{n}{\decl}}(f_2)\\
	=~	&\alpha_1 \cdot \wpd{\Call P}{\decl}(f_1) + \alpha_2 \cdot \wpd{\Call P}{\decl}(f_2)
\end{align*}\normalsize
\end{proof}
\begin{proof}[Proof of Preservation of $\boldsymbol{0}$ and $\boldsymbol{1}$]
We prove preservation of $\boldsymbol{0}$ and $\boldsymbol{1}$ by induction on the program structure.
For the base cases we have:
\paragraph{$\boldsymbol{\Skip}$:}
\begin{align*}
	\wpd{\Skip}{\decl}(\boldsymbol{0}) ~=~ 	&\boldsymbol{0}
\intertext{and}
	\wlpd{\Skip}{\decl}(\boldsymbol{1}) ~=~	&\boldsymbol{1}
\end{align*}
\paragraph{$\boldsymbol{\Ass{x}{E}}$:}
\begin{align*}
	\wpd{\Ass{x}{E}}{\decl}(\boldsymbol{0}) ~=~ 	&\boldsymbol{0}\subst{x}{E} ~=~ \boldsymbol{0}
\intertext{and}
	\wlpd{\Ass{x}{E}}{\decl}(\boldsymbol{1}) ~=~ 	&\boldsymbol{1}\subst{x}{E} ~=~ \boldsymbol{1}
\end{align*}
\paragraph{$\boldsymbol{\Abort}$:}
\begin{align*}
	\wpd{\Abort}{\decl}(\boldsymbol{0}) ~=~	&\boldsymbol{0}
\intertext{and}
	\wlpd{\Abort}{\decl}(\boldsymbol{1}) ~=~	&\boldsymbol{1}
\end{align*}
For the induction hypothesis we assume that for any two programs $c_1$ and $c_2$ preservation of $\boldsymbol{0}$ and $\boldsymbol{1}$ holds.
Then we can perform the induction step:
\paragraph{$\boldsymbol{\Cond{G}{c_1}{c_2}}$:}
\begin{align*}
		&\wpd{\Cond{G}{c_1}{c_2}}{\decl}(\boldsymbol{0})\\
	=~	&\ToExp{G} \cdot \wpd{c_1}{\decl}(\boldsymbol{0}) + \ToExp{\lnot G} \cdot \wpd{c_2}{\decl}(\boldsymbol{0})\\
	=~	&\ToExp{G} \cdot \boldsymbol{0} + \ToExp{\lnot G} \cdot \boldsymbol{0}\\
	=~	&\boldsymbol{0}
\intertext{and}
		&\wlpd{\Cond{G}{c_1}{c_2}}{\decl}(\boldsymbol{1})\\
	=~	&\ToExp{G} \cdot \wlpd{c_1}{\decl}(\boldsymbol{1}) + \ToExp{\lnot G} \cdot \wlpd{c_2}{\decl}(\boldsymbol{1})\\
	=~	&\ToExp{G} \cdot \boldsymbol{1} + \ToExp{\lnot G} \cdot \boldsymbol{1}\\
	=~	&\boldsymbol{1}
\end{align*}
\paragraph{$\boldsymbol{\PChoice{c_1}{p}{c_2}}$:}
\begin{align*}
		&\wpd{\PChoice{c_1}{p}{c_2}}{\decl}(\boldsymbol{0})\\
	=~	&p \cdot \wpd{c_1}{\decl}(\boldsymbol{0}) + (1-p) \cdot \wpd{c_2}{\decl}(\boldsymbol{0})\\
	=~	&p \cdot \boldsymbol{0} + (1-p) \cdot \boldsymbol{0}\\
	=~	&\boldsymbol{0}
\intertext{and}
		&\wlpd{\Cond{G}{c_1}{c_2}}{\decl}(\boldsymbol{1})\\
	=~	&p \cdot \wlpd{c_1}{\decl}(\boldsymbol{1}) + (1-p) \cdot \wlpd{c_2}{\decl}(\boldsymbol{1})\\
	=~	&p \cdot \boldsymbol{1} + (1-p) \cdot \boldsymbol{1}\\
	=~	&\boldsymbol{1}
\end{align*}
\paragraph{$\boldsymbol{c_1;~ c_2}$:}
\begin{align*}
	\wpd{c_1;~ c_2}{\decl}(\boldsymbol{0}) ~=~	&\wpd{c_1}{\decl}(\wpd{c_2}{\decl}(\boldsymbol{0}))\\
	=~	&\wpd{c_1}{\decl}(\boldsymbol{0})\\
	=~	&\boldsymbol{0}
\intertext{and}
		\wlpd{c_1;~ c_2}{\decl}(\boldsymbol{1}) ~=~	&\wlpd{c_1}{\decl}(\wlpd{c_2}{\decl}(\boldsymbol{1}))\\
	=~	&\wlpd{c_1}{\decl}(\boldsymbol{1})\\
	=~	&\boldsymbol{1}
\end{align*}
\paragraph{$\boldsymbol{\Call P}$:}
\begin{align*}
		\wpd{\Call P}{\decl}(\boldsymbol{0}) ~=~		&\sup_n \wp{\Calln{\PName}{n}{\decl}}(\boldsymbol{0})
\intertext{and}
		\wlpd{\Call P}{\decl}(\boldsymbol{1}) ~=~		&\inf_n \wlp{\Calln{\PName}{n}{\decl}}(\boldsymbol{1})
\end{align*}
Since $\Calln{\PName}{n}{\decl}$ is call--free for every $n$ and we have already proven preservation of $\boldsymbol{0}$ and $\boldsymbol{1}$ for all call--free programs, we have
\begin{align*}
	\wp{\Calln{\PName}{n}{\decl}}(\boldsymbol{0}) ~=~	& \boldsymbol{0}
\intertext{and}
	\wlp{\Calln{\PName}{n}{\decl}}(\boldsymbol{1}) ~=~	& \boldsymbol{1}
		\end{align*}
for every $n$ and hence
\begin{align*}
	\wpd{\Call P}{\decl}(\boldsymbol{0}) ~=~ 	&\sup_n \wp{\Calln{\PName}{n}{\decl}}(\boldsymbol{0}) ~=~ \boldsymbol{0}
\intertext{and}
	\wlpd{\Call P}{\decl}(\boldsymbol{1}) ~=~ 	&\inf_n \wlp{\Calln{\PName}{n}{\decl}}(\boldsymbol{1}) ~=~ \boldsymbol{1}~.
\end{align*}
\end{proof}

\subsection{Fixed Point Characterization of Recursive Procedures}
\label{sec:app-fixed-point-sem}

Establishing the results from \autoref{thm:fp-rec} requires a subsidiary
result connecting $\wllp{\cdot}$ with $\ewllp{\cdot}{}$ in the presence of
non--recursive procedure calls.
\begin{lemma}
\label{thm:wp-ewp}
 For every command $c$ and closed command $c'$,
\[
\ewp{c}{\wp{c'}}  \:=\: \wpd{c}{\PName \triangleright c'}~.
\]
\end{lemma}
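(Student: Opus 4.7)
The plan is to prove the statement by structural induction on $c$, exploiting the fact that the environment $\wp{c'}$ supplied to $\ewpsymbol$ is precisely what $\wpd{\Call \PName}{\PName \triangleright c'}$ should compute for calls, provided $c'$ is closed. Both transformers $\ewp{\cdot}{\theta}$ and $\wpd{\cdot}{\decl}$ act identically on all program constructs \emph{except} procedure calls (\autoref{fig:wp-sem} and \autoref{fig:ewp-sem}), so the only interesting case is $\Call \PName$.

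For the base cases ($\Skip$, $\Ass{x}{E}$, $\Abort$) both sides are independent of the declaration/environment and the equality is immediate. For the inductive constructs $\Cond{G}{c_1}{c_2}$, $\PChoice{c_1}{p}{c_2}$ and $c_1; c_2$, both transformers apply the same operator (pointwise combination, or functional composition) to their inductively-defined components, so the induction hypothesis applied to $c_1$ and $c_2$ yields the claim directly. (In the sequential case the IH is used twice, first on $c_2$ and then on $c_1$ applied to the resulting post-expectation.)

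The crucial case is $c = \Call{\PName}$. By definition, $\ewp{\Call \PName}{\wp{c'}}(f) = \wp{c'}(f)$. On the other side,
\[
\wpd{\Call \PName}{\PName \triangleright c'}(f) \:=\: \sup\nolimits_n \, \wp{\Calln{\PName}{n}{\PName \triangleright c'}}(f)~.
\]
The key syntactic observation is that since $c'$ is closed, the substitution $c'\subst{\Call \PName}{\cdot}$ is the identity on $c'$; a routine induction on the structure of $c'$ (using that $c'$ has no occurrences of $\Call \PName$) confirms this. Hence, unwinding the inlining definition gives $\Calln{\PName}{0}{\PName \triangleright c'} = \Abort$ and $\Calln{\PName}{n+1}{\PName \triangleright c'} = c'$ for all $n \geq 0$. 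Therefore the sequence of weakest pre-expectations is $\CteFun{0}, \wp{c'}(f), \wp{c'}(f), \ldots$, whose supremum is $\wp{c'}(f)$, matching the left-hand side.

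I do not foresee a serious obstacle; the only mildly delicate point is the syntactic fact that substitution of $\Call \PName$ is trivial on a closed command, which is pedestrian and could be quoted from the formal definition of the replacement operator referenced in \autoref{sec:subst}. Everything else is bookkeeping: verifying that the two transformer definitions coincide clause-by-clause on non-call constructs and invoking the induction hypothesis.
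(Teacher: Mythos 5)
Your proof is correct and follows essentially the same route as the paper: structural induction on $c$, with the non-call cases immediate from the clause-by-clause agreement of the two transformers, and the $\Call{\PName}$ case resolved by observing that $\Calln{\PName}{n}{\PName \triangleright c'} = c'$ for all $n \geq 1$ because substitution into a closed command is trivial, so the approximation sequence is $\CteFun{0}, \wp{c'}(f), \wp{c'}(f), \ldots$ with supremum $\wp{c'}(f)$. This matches the paper's argument step for step.
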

\begin{proof}
  By induction on the structure of $c$. Except for procedure calls, the proof
  for all other program constructs follows immediately from de definition of
  $\wp{\cdot}$, $\ewp{\cdot}{(\cdot)}$ and the inductive hypotheses in the case of
  compound instructions. For the case of procedure calls, the proof relies on
  the fact that as $c'$ is a closed command,
  $\Calln{\PName}{n}{\PName \, \triangleright \, c'} = c'$ for all $n \geq
  1$. Concretely, we reason as follows:
\begin{align*}
\begin{array}{c@{\:\:} l@{} }
& \ewp{c}{\wp{c'}}\!(f) \displaybreak[0]\\[2pt]
= & \qquad \by{def.~$\ewp{\cdot}{(\cdot)}$}\displaybreak[0]\\[2pt]
& \wp{c'}\!(f) \displaybreak[0]\\[2pt]
= & \qquad \by{sup.\ of a constant sequence}\displaybreak[0]\\[2pt]
& \sup_{n} \wp{c'}\!(f)  \displaybreak[0]\\[2pt]
= & \qquad \by{observation above}\displaybreak[0]\\[2pt]
& \sup_{n} \wp{\Calln{\PName}{n+1}{\PName \, \triangleright \, c'}} \!(f)  \\[6pt]
= & \qquad \by{$\wp{\Calln{\PName}{0}{\PName \, \triangleright \, c'}} \!(f) = \CteFun{0}$}\\[6pt]
& \sup_{n} \wp{\Calln{\PName}{n}{\PName \, \triangleright \, c'}}  \!(f)  \\[6pt]
= & \qquad \by{def.~\wp{\cdot}}\displaybreak[0]\\[2pt]
& \wpd{\Call{\PName}}{\PName \triangleright c'}
\end{array} \\[-\normalbaselineskip]\tag*{\qedhere}
\end{align*}

\end{proof}

Now we are in a position to prove \autoref{thm:fp-rec}. Consider first the case
of fixed point characterization
\[
\wpd{\Call{\PName}}{\decl} \:=\: \lfpsymbol_{\sqsubseteq} \, \Bigl(\underbrace{\lambda \theta\!:\!\SEnv \mydot
\ewp{\decl(\PName)}{\theta}}_F \Bigr)~.
\]
Its proof comprises two major steps:

\begin{enumerate}

\item \emph{Use the continuity of $F \colon (\SEnv, \sqsubseteq) \To
  (\SEnv, \sqsubseteq)$ established by \autoref{thm:ewp-cont-env} to 
conclude that
\[
\lfp{\sqsubseteq}{F} \:=\: \sup\nolimits_n F^n(\bot_\SEnv)~,
\]
}
where $F^n$ denotes the composition of $F$ with itself $n$ times (\ie
$F^{0}=\mathit{id}$ and $F^{n+1} = F \circ F^n$) and
$\bot_\SEnv = \lambda f\!:\!\UEX \mydot \CteFun{0}$ is the constantly
$\CteFun{0}$ environment.
\item \emph{Show that
\[
\forall f\colon\! \UEX \mydot F^n(\bot_\SEnv)(f) \:=\: \wp{\Calln{\PName}{n}{\decl}}\!(f) 
\]
for all $n\geq 0$.}  
\end{enumerate}

Then the proof follows immediately since by definition of \wpsymbol, we have
\begin{multline*}
\wpd{\Call{\PName}}{\decl}\!(f) \:=\: \sup\nolimits_{n}
\wp{\Calln{\PName}{n}{\decl}}\!(f) \\ 
\:=\: \sup\nolimits_n F^n(\bot_\SEnv) (f) \:=\: \lfp{\sqsubseteq}{F} (f)~.
\end{multline*}
 
We now consider each of these two steps in details. Step 1 follows immediately
from an application of Kleene's Fixed Point Theorem. Step 2 proceeds by
induction on $n$. The base case is straightforward:
\begin{multline*}
F^0(\bot_\SEnv)(f) \:=\: \bot_\SEnv (f) \:=\: \CteFun{0} \\
=\: \wp{\Abort}(f) \:=\: \wp{\Calln{\PName}{0}{\decl}}\!(f)~.
\end{multline*}
For the inductive case we have
\begin{align*}
\begin{array}{c@{\:\:} l@{} }
& F^{n+1}(\bot_\SEnv)(f) \displaybreak[0]\\[2pt] 
= & \qquad \by{def.~of $F^{n+1}$}\displaybreak[0]\\[2pt]
& F \bigl(F^{n}(\bot_\SEnv) \bigr) (f) \displaybreak[0]\\[2pt]
= & \qquad \by{def.~of $F$}\displaybreak[0]\\[2pt]
& \ewp{\decl(\PName)}{F^{n}(\bot_\SEnv)} \!(f)\displaybreak[0]\\[2pt]
= & \qquad \by{I.{}H.{}}\displaybreak[0]\\[2pt]
& \ewp{\decl(\PName)}{\wp{\Calln{\PName}{n}{\decl}}} \!(f)\displaybreak[0]\\[2pt]
= & \qquad \by{\autoref{thm:wp-ewp}}\displaybreak[0]\\[2pt]
& \wpd{\decl(\PName)}{\PName \triangleright \Calln{\PName}{n}{\decl}}\!(f)\displaybreak[0]\\[2pt]
= & \qquad \by{\autoref{thm:subst-env}}\displaybreak[0]\\[2pt]
& \wp{\decl(\PName)
 \subst{\Call{\PName}}{\Calln{\PName}{n}{\decl}}}\!(f)\displaybreak[0]\\[2pt]
= & \qquad \by{def. $n$-inl.}\displaybreak[0]\\[2pt]
& \wp{\Calln{\PName}{n+1}{\decl}}\!(f) \displaybreak[0]\\[2pt]
\end{array}
\end{align*}

Now we turn to the fixed point characterization 
\[
\wlpd{\Call{\PName}}{\decl} \:=\: \gfpsymbol_{\sqsubseteq} \, \Bigl(\underbrace{\lambda \theta\!:\!\LSEnv \mydot
\ewlp{\decl(\PName)}{\theta}}_G \Bigr)~.
\]
The proof follows a dual argument. We first apply Kleene's Fixed Point Theorem
to show that
\[
\gfp{\sqsubseteq}{G} \:=\: \inf\nolimits_n G^n(\top_\LSEnv)~,
\]
where $\top_\LSEnv = \lambda f\!:\!\BEX \mydot \CteFun{1}$ is the constantly
$\CteFun{1}$ environment. Next we show by induction on $n$ that
\[
\forall f\colon\! \BEX \mydot G^n(\top_\LSEnv)(f) \:=\: \wlp{\Calln{\PName}{n}{\decl}}\!(f) 
\]
The proof concludes combining these two results since
\begin{multline*}
\wlpd{\Call{\PName}}{\decl}\!(f) \:=\: \inf\nolimits_{n}
\wlp{\Calln{\PName}{n}{\decl}}\!(f) \\ 
\:=\: \inf\nolimits_n G^n(\top_\LSEnv) (f) \:=\: \gfp{\sqsubseteq}{G} (f)~.
\end{multline*}

\begin{lemma}
\label{thm:diag-sup-cpo}
\textnormal{\cite[p.~127]{Winskel:1993}}
 Suppose $a_{n,m}$ are elements of upper $\omega$-cpo $(A,\leq)$
  with the property that $a_{n,m} \leq a_{n',m'}$ whenever $n \leq n'$ and $m
  \leq m'$. Then,
\[
\sup\nolimits_n \, (\sup\nolimits_m a_{n,m}) \:=\: \sup\nolimits_m \,
(\sup\nolimits_n a_{n,m}) \:=\: \sup\nolimits_i \,a_{i,i}~.
\]
\end{lemma}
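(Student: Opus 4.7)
The plan is to reduce everything to two inequalities against the diagonal supremum $\sup_i a_{i,i}$, using only monotonicity of the doubly-indexed family and the definition of least upper bound. The symmetry of the hypothesis (swapping $n$ and $m$ leaves the monotonicity condition invariant) shows that once we prove $\sup_n\sup_m a_{n,m}=\sup_i a_{i,i}$, the identity $\sup_m\sup_n a_{n,m}=\sup_i a_{i,i}$ follows by renaming, so the two iterated suprema are equal. Therefore I only need to establish one of them.

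Before that, I first verify that all suprema in sight actually exist in $(A,\leq)$. For each fixed $n$, the sequence $\langle a_{n,m}\rangle_{m\in\mathbb{N}}$ is an increasing $\omega$-chain by monotonicity in the second index, so $b_n \eqdef \sup_m a_{n,m}$ exists. Again by the monotonicity hypothesis, $b_n\leq b_{n'}$ whenever $n\leq n'$, so $\langle b_n\rangle$ is an increasing $\omega$-chain and $\sup_n b_n=\sup_n\sup_m a_{n,m}$ exists. The diagonal $\langle a_{i,i}\rangle$ is also an increasing $\omega$-chain (take $n=n'=i\leq j=m=m'$), so $\sup_i a_{i,i}$ exists.

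Now the two inequalities. For $\sup_i a_{i,i}\leq \sup_n\sup_m a_{n,m}$: for each $i$, $a_{i,i}\leq \sup_m a_{i,m}\leq \sup_n\sup_m a_{n,m}$, so the right-hand side is an upper bound of the diagonal chain and dominates its supremum. For the reverse inequality $\sup_n\sup_m a_{n,m}\leq \sup_i a_{i,i}$: fix $n,m$ and set $k=\max(n,m)$; then $n\leq k$ and $m\leq k$, so by the monotonicity hypothesis $a_{n,m}\leq a_{k,k}\leq \sup_i a_{i,i}$. Hence $\sup_i a_{i,i}$ is an upper bound for $\{a_{n,m}\mid m\in\mathbb{N}\}$, so $\sup_m a_{n,m}\leq \sup_i a_{i,i}$ for every $n$, and therefore $\sup_n\sup_m a_{n,m}\leq \sup_i a_{i,i}$.

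There is no real obstacle here: the argument is a completely standard cofinality/interleaving trick. The only mild subtlety is that the cpo need not have binary joins, so one must avoid invoking $a_{n,m}\leq a_{n,m}\vee a_{m,n}$ or similar constructions and instead use the $\max$ of the indices together with the joint monotonicity, as above. Everything else is purely definitional manipulation of least upper bounds.
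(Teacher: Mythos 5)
Your proof is correct and complete: the existence checks for the iterated and diagonal suprema, the symmetry reduction, and the two inequalities via $k=\max(n,m)$ are all sound. The paper itself gives no proof of this lemma (it simply cites Winskel, p.~127), and your argument is precisely the standard diagonalization that the cited reference uses, so there is nothing to reconcile.
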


\begin{lemma}[Monotone Sequence Theorem]
\label{thm:MST}
If $\langle a_n \rangle$ is a monotonic increasing sequence in a closed interval
$[L,\, U] \subseteq [-\infty, \allowbreak +\infty]$, then the supremum $\sup_n a_n$
coincides with $\lim_{n \To \infty} a_n$. Dually, if $\langle a_n \rangle$ is a
monotonic decreasing sequence in a closed interval
$[L,\, U] \subseteq [-\infty, +\infty]$, the infimum $\inf_n a_n$ coincides with
$\lim_{n \To \infty} a_n$.
\end{lemma}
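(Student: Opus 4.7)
The plan is to prove the increasing case and then invoke duality (the decreasing case follows by applying the increasing case to $\langle -a_n\rangle$, or by an analogous direct argument). I would not appeal to any of the domain-theoretic machinery of the paper: the statement is a purely classical fact about real sequences, and the only subtlety is the handling of the endpoints $\pm\infty$ of the extended real line.

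First I would set $s \triangleq \sup_n a_n$, which exists in $[L,U]$ because the sequence is bounded above by $U$ (and bounded below by $L$). I would then split into two cases according to whether $s$ is finite or $s = +\infty$. In the finite case, fix $\varepsilon > 0$; by the defining property of the supremum there exists $N$ with $s - \varepsilon < a_N \le s$, and by monotonicity $a_N \le a_n \le s$ for every $n \ge N$, whence $|a_n - s| < \varepsilon$. This gives $\lim_{n\to\infty} a_n = s$. In the case $s = +\infty$, fix an arbitrary $M > 0$; since $M$ is not an upper bound there is some $N$ with $a_N > M$, and by monotonicity $a_n \ge a_N > M$ for all $n \ge N$, so $\lim_{n\to\infty} a_n = +\infty = s$.

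For the decreasing case I would run the dual argument: set $i \triangleq \inf_n a_n \in [L,U]$, and split on whether $i > -\infty$ or $i = -\infty$, using exactly the symmetric reasoning (replacing ``upper bound'' by ``lower bound'', $s - \varepsilon$ by $i + \varepsilon$, and $M$ by $-M$).

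There is no real obstacle here; the only point that needs a bit of care is that the closed interval is allowed to include $\pm\infty$, so one must explicitly handle the infinite supremum/infimum case rather than relying solely on the $\varepsilon$-characterisation that is standard for finite reals. Once that case split is made the argument is immediate from the definitions of $\sup$, $\inf$ and the limit in the extended real line.
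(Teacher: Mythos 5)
Your argument is correct and is the standard proof of this classical fact; the paper itself states \autoref{thm:MST} without proof, treating it as a known result about monotone sequences in the extended real line, so there is no authorial proof to diverge from. The only (entirely cosmetic) point you could tighten is the case split in the increasing direction: besides $s$ finite and $s = +\infty$, the supremum can also be $-\infty$ when $L = -\infty$ and every $a_n = -\infty$, in which case the sequence is constant and the claim is trivial; the symmetric remark applies to $i = +\infty$ in the decreasing case.
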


\subsection{Soundness of \wllpsymbol Rules}
\label{sec:app-om-rule-sound}

\begin{fact}
\label{fact:deriv-elim}
To carry on the proofs we use the fact that from 
\[
\deriv{\wllp{\Call{\PName}}\!(f_1) \bowtie g_1 }{\wllp{c}\!(f_2) \bowtie g_2}~,
\]
it follows that for all environment $\decl^\star$,
\begin{equation*}
\wllpd{\Call{\PName}}{\decl^\star}\!(f_1) \bowtie g_1   \implies  
\wllpd{c}{\decl^\star}\!(f_2) \bowtie g_2~.
\end{equation*}
\end{fact}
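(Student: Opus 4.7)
The plan is to argue by structural induction on the derivation witnessing $\deriv{\wllp{\Call{\PName}}\!(f_1) \bowtie g_1}{\wllp{c}\!(f_2) \bowtie g_2}$. Such a derivation is built from three kinds of primitive justifications: (i) the defining equations for $\wllp{c'}$ on each program construct $c'$ given in \autoref{fig:wp-sem}, (ii) the basic algebraic properties (continuity, monotonicity, linearity, preservation of $\CteFun{0}$ and $\CteFun{1}$) established in \autoref{thm:wp-basic-prop}, and (iii) the assumption $\wllp{\Call{\PName}}\!(f_1) \bowtie g_1$ itself. The crucial observation is that none of these primitive justifications refer to the actual body of $\PName$: the equations in \autoref{fig:wp-sem} unfold $\wllp{\cdot}$ on non-call statements without touching any declaration, the properties in \autoref{thm:wp-basic-prop} hold uniformly in the declaration, and the assumption treats $\wllp{\Call{\PName}}$ as an opaque symbol.

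Given any fixed $\decl^\star$ satisfying the premise $\wllpd{\Call{\PName}}{\decl^\star}\!(f_1) \bowtie g_1$, I would transform the original derivation step by step into a derivation of $\wllpd{c}{\decl^\star}\!(f_2) \bowtie g_2$, obtained by annotating every occurrence of $\wllp{\cdot}$ with $\decl^\star$. Each definitional step transfers verbatim since the clauses for $\Skip$, $\Ass{x}{E}$, $\Abort$, conditionals, probabilistic choice and sequential composition in \autoref{fig:wp-sem} make no use of any declaration. Each algebraic step transfers because \autoref{thm:wp-basic-prop} applies to $\wllpd{\cdot}{\decl^\star}$ just as well as to the unannotated variant. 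Each appeal to the assumption $\wllp{\Call{\PName}}\!(f_1) \bowtie g_1$ is replaced by an appeal to the concrete hypothesis $\wllpd{\Call{\PName}}{\decl^\star}\!(f_1) \bowtie g_1$, which is precisely what the premise of the implication grants us.

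The main obstacle is that the paper keeps $\derivsymbol$ at an informal meta-level (``$B$ can be derived assuming $A$'') rather than fixing a formal inference system. To make the induction above fully rigorous, one must first pin down the set of admissible primitive steps --- essentially the three kinds listed above --- and argue that every concrete derivation produced in the examples (e.g.\ \autoref{ex:puzz-3-upp-bound}) decomposes into such steps. Once such a minimal calculus is fixed, the induction becomes a routine syntactic translation, and the fact reduces to the observation that the translation preserves the validity of every rule instance. Notably, no continuity or fixed-point argument concerning the semantics of $\decl^\star(\PName)$ is needed, since the derivation never opens up the body of $\PName$; this is exactly the virtue that makes $\derivsymbol$ ``independent of $P$'s body'' as advertised in \autoref{sec:proof-rules}.
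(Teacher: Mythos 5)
The paper does not actually prove this statement: it is labelled a \emph{Fact} and used axiomatically in the soundness proofs of Appendix~\ref{sec:app-om-rule-sound}, with the notion of constructive derivability $\derivsymbol$ left at the informal level of ``$B$ can be derived assuming $A$''. Your proposal therefore does not so much diverge from the paper's proof as supply the justification the paper leaves implicit, and your reading is the intended one: a derivation witnessing $\deriv{\wllp{\Call{\PName}}\!(f_1) \bowtie g_1}{\wllp{c}\!(f_2) \bowtie g_2}$ treats $\wllp{\Call{\PName}}$ as an opaque symbol constrained only by the hypothesis and by declaration--uniform facts (the defining clauses of \autoref{fig:wp-sem} for non--call constructs and the algebraic properties of \autoref{thm:wp-basic-prop}), so re--annotating every transformer with a concrete $\decl^\star$ validating the hypothesis yields a valid chain of (in)equalities ending in $\wllpd{c}{\decl^\star}\!(f_2) \bowtie g_2$. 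You are also right that turning this into a theorem rather than a convention requires first fixing the inference system over which the induction runs; the paper sidesteps this by effectively \emph{defining} admissible derivations to be those whose individual steps are sound under every interpretation of $\wllp{\Call{\PName}}$, which is precisely the invariant your step--by--step translation checks. One point deserves care in your item (ii): concrete derivations such as the one in \autoref{ex:puzz-3-upp-bound} invoke monotonicity and scalability of $\wp{\Call{\PName}}$ \emph{itself}, not merely of closed commands, so ``uniform in the declaration'' must be read as \autoref{thm:wp-basic-prop} applied to $\wpd{\Call{\PName}}{\decl^\star}$ for the arbitrary $\decl^\star$ --- which holds, since that lemma is proved for all programs $\prog{c}{\decl}$. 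With that reading your argument is complete and matches the role the Fact plays in the paper.
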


We provide detailed proofs for rules \lrule{wp-rec} and \lrule{wp-rec$_\omega$};
the proof of rules \lrule{wlp-rec} and \lrule{wlp-rec$_\omega$} follows a dual
argument.

\medskip
\noindent \textbf{Soundness of rule \lrule{wp-rec}.} 
Since by definition,
  $\wpd{\Call{\PName}}{\decl}\!(f) \allowbreak = \allowbreak \sup_{n}
  \wpd{\Calln{\PName}{n}{\decl}}{\decl}\!(f)$,
  to establish the conclusion of the rule it suffices to show that
\[
\forall n \mydot \wp{\Calln{\PName}{n}{\decl}}\!(f) \preceq g~,
\]
which we do by induction on $n$. The base case is immediate since
$\Calln{\PName}{0}{\decl} = \Abort$ and $\wp{\Abort}(f) = \CteFun{0}$. For the
inductive case, we reason as follows:

%
\belowdisplayskip=0pt
\[
\!\!\!\begin{array}{c@{\:\:} l@{} r}
& \wp{\Calln{\PName}{n+1}{\decl}}\!(f) \preceq g & \by{def. $n$-inl.}\displaybreak[0]\\[2pt]
\Leftrightarrow & \wp{\decl(\PName)
 \subst{\Call{\PName}}{\Calln{\PName}{n}{\decl}}}\!(f) \preceq g &
                                                       \by{\autoref{thm:subst-env}
                                                                   }\displaybreak[0]\\[2pt]
\Leftrightarrow & \wpd{\decl(\PName)}
  {\PName \, \triangleright \, \Calln{\PName}{n}{\decl}}
  \!(f) \preceq g & \by{rule prem, \autoref{fact:deriv-elim}}\displaybreak[0]\\[2pt]
 \Leftarrow & \wpd{\Call{\PName}}
   {\PName \, \triangleright \, \Calln{\PName}{n}{\decl}}
   \!(f) \preceq g & \by{\autoref{thm:subst-env}}\displaybreak[0]\\[2pt]
 \Leftrightarrow & \wp{ \Call{\PName}
  \subst{\Call{\PName}}{\Calln{\PName}{n}{\decl}}}\!(f) \preceq g & \by{def.~subst.}\displaybreak[0]\\[2pt]
\Leftrightarrow & \wp{\Calln{\PName}{n}{\decl}}\!(f) \preceq g &
                                                                  \by{I.H.} 
\end{array}
\]

\medskip
\noindent \textbf{Soundness of rule \lrule{wp-rec$_\omega$}.}  We prove that the
rule's premises entail
$l_n \preceq \wp{\Calln{\PName}{n}{\decl}}(f) \preceq u_n$ for all
$n \in \Nats$. The conclusion of the rule then follows immediately by taking the
supremum over $n$ on the three sides of the equation. We proceed by induction on
$n$. The base case is trivial since by definition,
$\wp{\Calln{\PName}{0}{\decl}}(f) = \wp{\Abort}(f) = \CteFun{0}$ and by the
rule's premise, $l_0 = u_0 = \CteFun{0}$. For the inductive case we reason as
follows:
\begin{align*}
\begin{array}{c@{\:\:} l@{} }
& l_{n+1} \preceq \wp{\Calln{\PName}{n+1}{\decl}}\!(f) \preceq u_{n+1} \displaybreak[0]\\[2pt]
\Leftrightarrow & \qquad \by{def. $n$-inl.}\displaybreak[0]\\[2pt]
& l_{n+1} \preceq  \wp{\decl(\PName)
 \subst{\Call{\PName}}{\Calln{\PName}{n}{\decl}}}\!(f) \preceq u_{n+1}\displaybreak[0]\\[2pt]
\Leftrightarrow & \qquad  \by{\autoref{thm:subst-env}}\displaybreak[0]\\[2pt]
& l_{n+1} \preceq \wpd{\decl(\PName)}
  {\PName \, \triangleright \, \Calln{\PName}{n}{\decl}}
  \!(f) \preceq u_{n+1}\displaybreak[0]\\[2pt]
\Leftrightarrow & \qquad \by{rule prem, \autoref{fact:deriv-elim}}\displaybreak[0]\\[2pt]
& l_{n} \preceq \wpd{\Call{\PName}}
   {\PName \, \triangleright \, \Calln{\PName}{n}{\decl}}
   \!(f) \preceq u_{n}\displaybreak[0]\\[2pt]
 \Leftarrow & \qquad \by{\autoref{thm:subst-env}}\displaybreak[0]\\[2pt]
& l_{n} \preceq  \wp{ \Call{\PName}
  \subst{\Call{\PName}}{\Calln{\PName}{n}{\decl}}}\!(f) \preceq u_{n}\displaybreak[0]\\[2pt]
 \Leftrightarrow & \qquad \by{def.~subst.}\displaybreak[0]\\[2pt]
& l_{n} \preceq  \wp{\Calln{\PName}{n}{\decl}}\!(f) \preceq u_{n}\displaybreak[0]\\[2pt]
\Leftrightarrow & \qquad \by{I.H.}\displaybreak[0]\\[2pt]
& \true
\end{array} \\[-\normalbaselineskip]\tag*{\qedhere}
\end{align*}

\subsection{Substitution of Procedure Calls}
\label{sec:subst}

\begin{figure}[h]
\scalebox{0.87}{
$
\begin{array}{ll}
\specialrule{0.8pt}{0pt}{2pt}
\boldsymbol{c} & \boldsymbol{c \subst{\Call{\PName}}{c'}}\\
\specialrule{0.8pt}{2pt}{6pt}
%
\Skip   &  \Skip \\[1.5pt]
\Ass{x}{E}  & \Ass{x}{E}  \\[1.5pt]
\Abort & \Abort \\[1.5pt]
\Call{\PName} & c' \\[3pt]
\Cond{G}{c_1}{c_2}  & \Cond{G}{c_1 \subst{\Call{\PName}}{c'}}{c_2 \subst{\Call{\PName}}{c'}} \\[3pt]
\PChoice{c_1}{p}{c_2} &  \PChoice{c_1 \subst{\Call{\PName}}{c'}}{p}{c_2 \subst{\Call{\PName}}{c'}} \\[3pt]
c_1;c_2 & c_1\subst{\Call{\PName}}{c'};\; c_2\subst{\Call{\PName}}{c'} 
\end{array} 
$}
\caption{Syntactic replacement of procedure calls.}
\label{fig:command-subst}
\end{figure}

\begin{lemma} 
\label{thm:subst-env}
  For every command $c$ and closed command $c'$,
\[
\wp{c \subst{\Call{\PName}}{c'}} \:=\: \wpd{c}{\PName \triangleright c'}~.
\]
\end{lemma}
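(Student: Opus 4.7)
The plan is to prove the substitution lemma by structural induction on $c$, using the definition of $c\subst{\Call{\PName}}{c'}$ given in \autoref{fig:command-subst} and the inductive characterization of $\wpsymbol$ in \autoref{fig:wp-sem}.

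For the atomic cases $c \in \{\Skip,\, \Ass{x}{E},\, \Abort\}$, the substitution $c\subst{\Call{\PName}}{c'} = c$ and the semantic clauses for these constructs are independent of the declaration, so both sides are literally equal. For the compound constructs $\Cond{G}{c_1}{c_2}$, $\PChoice{c_1}{p}{c_2}$ and $c_1;c_2$, the substitution commutes with the constructor, and the $\wp{\cdot}$-rule is defined pointwise from $\wp{c_1}$ and $\wp{c_2}$ without mentioning $\decl$; thus applying the inductive hypothesis to $c_1$ and $c_2$ closes these cases. For example, for sequential composition,
\begin{align*}
  \wp{(c_1;c_2)\subst{\Call{\PName}}{c'}}\!(f)
  &= \wp{c_1\subst{\Call{\PName}}{c'}}\!\bigl(\wp{c_2\subst{\Call{\PName}}{c'}}\!(f)\bigr) \\
  &= \wpd{c_1}{\PName \triangleright c'}\!\bigl(\wpd{c_2}{\PName \triangleright c'}\!(f)\bigr) \\
  &= \wpd{c_1;c_2}{\PName \triangleright c'}\!(f).
\end{align*}

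The interesting case is $c = \Call{\PName}$. The LHS reduces by definition of substitution to $\wp{c'}$. For the RHS we compute $\wpd{\Call{\PName}}{\PName \triangleright c'}\!(f) = \sup_{n}\wp{\Calln{\PName}{n}{\PName \triangleright c'}}\!(f)$. The crux is the auxiliary fact that since $c'$ is closed (hence contains no occurrence of $\Call{\PName}$), we have $\Calln{\PName}{n}{\PName \triangleright c'} = c'$ for every $n \geq 1$: a straightforward induction on $n$ shows $\Calln{\PName}{n+1}{\PName \triangleright c'} = c'\subst{\Call{\PName}}{\Calln{\PName}{n}{\PName \triangleright c'}} = c'$, because the substitution is a no-op on closed commands. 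Therefore the sequence inside the supremum is $\CteFun{0}, \wp{c'}\!(f), \wp{c'}\!(f),\ldots$, whose supremum is $\wp{c'}\!(f)$, matching the LHS.

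The only genuine obstacle is the procedure-call case, and it is mild: one needs the small syntactic observation that closed commands are fixed points of the substitution $(\cdot)\subst{\Call{\PName}}{-}$, which gives $\Calln{\PName}{n}{\PName \triangleright c'} = c'$ for all $n \geq 1$ and collapses the supremum. All other cases are purely compositional applications of the induction hypothesis.
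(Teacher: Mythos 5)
Your proof is correct and follows essentially the same route as the paper's: structural induction on $c$, with the atomic and compound cases handled compositionally, and the procedure-call case resolved by the observation that $\Calln{\PName}{n}{\PName \triangleright c'} = c'$ for all $n \geq 1$ because $c'$ is closed, which collapses the supremum to $\wp{c'}\!(f)$. The paper's Appendix~\ref{sec:subst} argues the call case in exactly this way.
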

\begin{proof}
  By induction on the structure of $c$. Except for procedure calls, the proof
  for all other program constructs follows from de definition of
  $\wpsymbol$ and some simple calculations (and the inductive hypotheses in the case
  of compound instructions). For the case of procedure calls, the proof relies
  on the fact that as $c'$ is a closed command,
  $\Calln{\PName}{n}{\PName \, \triangleright \, c'} = c'$ for all $n \geq
  1$. Concretely, we reason as follows:
\begin{align*}
\begin{array}{c@{\:\:} l@{} }
& \wp{\Call{\PName}\, \subst{\Call{\PName}}{c'}}\!(f) \displaybreak[0]\\[2pt]
= & \qquad \by{def.~subst.}\displaybreak[0]\\[2pt]
& \wp{c'}\!(f) \displaybreak[0]\\[2pt]
= & \qquad \by{sup.\ of a constant sequence}\displaybreak[0]\\[2pt]
& \sup_{n} \wp{c'}\!(f)  \displaybreak[0]\\[2pt]
= & \qquad \by{observation above}\displaybreak[0]\\[2pt]
& \sup_{n} \wp{\Calln{\PName}{n+1}{\PName \, \triangleright \, c'}} \!(f)  \\[6pt]
= & \qquad \by{$\wp{\Calln{\PName}{0}{\PName \, \triangleright \, c'}} \!(f) = \CteFun{0}$}\\[6pt]
& \sup_{n} \wp{\Calln{\PName}{n}{\PName \, \triangleright \, c'}}  \!(f)  \\[6pt]
= & \qquad \by{def.~\wp{\cdot}}\displaybreak[0]\\[2pt]
& \wpd{\Call{\PName}}{\PName \triangleright c'}
\end{array} \\[-\normalbaselineskip]\tag*{\qedhere}
\end{align*}

\end{proof}

\subsection{Continuity of Transformer \ewllp{\cdot}{\theta}}

\begin{figure}[h]
\scalebox{0.9}{
$
\begin{array}{ll}
\specialrule{0.8pt}{0pt}{2pt}
\boldsymbol{c} & \boldsymbol{\ewp{c}{\theta}(f)}\\
\specialrule{0.8pt}{2pt}{6pt}
%
\Skip   & f \\[1.5pt]
\Ass{x}{E}  & f\!\subst{x}{E} \\[1.5pt]
\Abort & \CteFun{0} \\[1.5pt]
\Cond{G}{c_1}{c_2}  & 
    \ToExp{G} \cdot \ewp{c_1}{\theta} (f) + \ToExp{\lnot G} \cdot \ewp{c_2}{\theta}
    (f) \\[3pt]
\PChoice{c_1}{p}{c_2} &   p \cdot \ewp{c_1}{\theta} (f)  + (1{-}p) \cdot
\ewp{c_2}{\theta} (f) \\[3pt]
\Call{\PName} &  \theta(f) \\[3pt]
c_1;c_2 & \ewp{c_1}{\theta} \bigl(\ewp{c_2}{\theta} (f)\bigr)
\\[10pt]
\specialrule{0.8pt}{0pt}{2pt}
\boldsymbol{c} & \boldsymbol{\ewlp{c}{\theta}(f)}\\
\specialrule{0.8pt}{2pt}{6pt}
\Abort & \CteFun{1} \\[1.5pt]
\end{array} 
$}
\caption{Expectation transformer \ewllp{\cdot}{\theta}. Transformer
  $\ewlp{\cdot}{\theta}$ differs from $\ewp{\cdot}{\theta}$ only in $\Abort$
  instructions.}
\label{fig:ewp-sem}
\end{figure}

As a preliminary step to discuss the continuity of \ewllp{\cdot}{(\cdot)} we
observe that order relation ``$\sqsubseteq$'' (see paragraph below
\autoref{thm:fp-rec}) endows the set of environments \SEnv with the structure of
an upper $\omega$--cpo with botom element
$\bot_\SEnv = \lambda f \colon \UEX \mydot \CteFun{0}$, where the supremum of an
increasing $\omega$--chain \chain{\theta}{\sqsubseteq} is given pointwise, \ie
$(\sup_n \theta_i) (f) = \sup_n \theta_i(f)$. Likewise, ``$\sqsubseteq$' endows
the set of liberal environments \LSEnv with the structure of a lower
$\omega$--cpo with top element
$\top_\LSEnv = \lambda f \colon \BEX \mydot \CteFun{1}$, where the infimum of a
decreasing $\omega$--chain \chain{\theta}{\sqsupseteq} is given pointwise, \ie
$(\inf_n \theta_i) (f) = \inf_n \theta_i(f)$.

We will discuss two kind of continuity results for
\ewllp{\cdot}{(\cdot)}. First, we show that for every environment $\theta$,
expectation transformer $\ewllp{\cdot}{\theta}$ is continuous, or equivalently,
that
\begin{align*}
  \ewp{c}{(\cdot)} &\colon (\SEnv, \sqsubseteq) \To
  (\SEnv, \sqsubseteq) \\
  \ewlp{c}{(\cdot)} &\colon (\LSEnv, \sqsubseteq) \To
  (\LSEnv, \sqsubseteq)
\end{align*}
This result will be established in \autoref{thm:ewp-cont-exp}. Second, we show
that the above environment transformers are themselves continuous, \ie that
\begin{align*}
  \ewp{c}{(\cdot)} &\colon \ucont{(\SEnv, \sqsubseteq)}{(\SEnv, \sqsubseteq)}\\
  \ewlp{c}{(\cdot)} &\colon \lcont{(\LSEnv, \sqsubseteq)}{(\LSEnv, \sqsubseteq)}
\end{align*}
This result will be established in \autoref{thm:ewp-cont-env}.

\begin{lemma}
\label{thm:ewp-cont-exp}
  Let $\theta  \in \SEnv$ and \chain{f}{\preceq} be an ascending $\omega$--chain
  of expectations in \UEX. Then for every command $c$,
  \[
  \ewp{c}{\theta} (\sup\nolimits_n f_n) \:=\: \sup\nolimits_n \,
  \ewp{c}{\theta} \!(f_n)~.
  \]
  Analogously, if \chain{f}{\succeq} is a descending $\omega$--chain
  of expectations in \BEX,
  \[
  \ewlp{c}{\theta} (\inf\nolimits_n f_n) \:=\: \inf\nolimits_n \,
  \ewlp{c}{\theta} \!(f_n)~.
  \]
\end{lemma}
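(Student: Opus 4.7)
The plan is to prove both statements simultaneously by structural induction on $c$, exploiting that the only case where the environment $\theta$ is actually invoked is $\Call{\PName}$, where the hypothesis $\theta \in \SEnv$ (resp.\ $\theta \in \LSEnv$) gives precisely the needed continuity for free. All other cases are either base cases in which the chain is passed through unchanged (up to substitution or replacement by a constant), or compound constructs in which continuity follows from the induction hypothesis together with the elementary fact that pointwise suprema of monotone $\omega$-chains of non-negative extended reals commute with finite sums and non-negative scalar multiplication (and dually for infima of descending chains in $[0,1]$).

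Concretely, for the \wpsymbol--case I would handle the base cases as follows: for $\Skip$, $\ewp{\Skip}{\theta}(f) = f$ so both sides equal $\sup_n f_n$; for $\Ass{x}{E}$, observe that $(\sup_n f_n)\subst{x}{E} = \sup_n\, f_n\subst{x}{E}$ because substitution is defined pointwise on states; for $\Abort$, both sides equal $\CteFun{0}$ (a constant chain has itself as supremum); and for $\Call{\PName}$, $\ewp{\Call{\PName}}{\theta}(\sup_n f_n) = \theta(\sup_n f_n) = \sup_n \theta(f_n) = \sup_n \ewp{\Call{\PName}}{\theta}(f_n)$ by upper continuity of $\theta \in \SEnv$. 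For the inductive step, the conditional and probabilistic choice cases reduce to pulling the supremum out of finite pointwise sums weighted by $\CharFun{G},\CharFun{\lnot G}$ or $p,1-p$, using the induction hypothesis on $c_1$ and $c_2$. Sequential composition is the most delicate: I would argue
\begin{align*}
\ewp{c_1;c_2}{\theta}(\sup\nolimits_n f_n) &\:=\: \ewp{c_1}{\theta}\bigl(\ewp{c_2}{\theta}(\sup\nolimits_n f_n)\bigr)\\
&\:=\: \ewp{c_1}{\theta}\bigl(\sup\nolimits_n \ewp{c_2}{\theta}(f_n)\bigr)\\
&\:=\: \sup\nolimits_n \ewp{c_1}{\theta}\bigl(\ewp{c_2}{\theta}(f_n)\bigr),
\end{align*}
where the second equality uses the induction hypothesis on $c_2$, and the third uses the induction hypothesis on $c_1$ applied to the chain $\langle \ewp{c_2}{\theta}(f_n)\rangle$, which is ascending because $\ewp{c_2}{\theta}$ is monotone (this monotonicity is an easy corollary of continuity, exactly as in Appendix~\ref{app:basicproperties}).

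The \wlpsymbol--case is entirely dual: the chain is descending in $\BEX$, $\Abort$ is handled by the constant $\CteFun{1}$, and $\Call{\PName}$ uses the assumed lower continuity of $\theta \in \LSEnv$; the remaining clauses coincide with the \wpsymbol--clauses and the argument transfers verbatim with $\sup$ replaced by $\inf$. I do not anticipate a genuine obstacle: the whole proof is essentially the same induction already carried out in Appendix~\ref{app:basicproperties} for $\wllpsymbol$, but \emph{simpler}, because the recursion is replaced by a single application of the hypothesis that $\theta$ is itself continuous, eliminating the need for the diagonal-supremum argument used there for $\Call{\PName}$.
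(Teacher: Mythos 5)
Your proof is correct and follows essentially the same route as the paper: structural induction on $c$, with the procedure-call case discharged directly by the assumed continuity of $\theta$ and all other cases handled exactly as in the continuity proof for the plain $\wllpsymbol$ transformer (including the monotonicity-from-continuity step needed to see that $\langle \ewp{c_2}{\theta}(f_n)\rangle$ is again an ascending chain in the sequential-composition case). The only difference is that the paper states the non-call cases by reference to the earlier argument rather than spelling them out.
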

\begin{proof}
By induction on the structure of $c$. Except for
procedure calls, all program constructs use the same proof argument as for the
continuity of plain transformer $\wllp{\cdot}$, which has already been dealt with in \eg
\cite{DBLP:journals/pe/GretzKM14}. For procedure calls we reason as follows.
\[
\begin{array}{c@{\:\:} l@{} }
& \ewp{\Call{\PName}}{\theta} (\sup\nolimits_n  f_n) \\[3pt]
= & \qquad \by{def.~\ewp{\cdot}{\theta}}\displaybreak[0]\\[2pt]
& \theta \, (\sup\nolimits_n  f_n)\displaybreak[0]\\[2pt]
= & \qquad \by{$\theta$ is continuous by hypothesis}\displaybreak[0]\\[2pt]
& \sup\nolimits_n \, \theta (f_n) \displaybreak[0]\\[2pt]
= & \qquad \by{def.~\ewp{\cdot}{\theta}}\displaybreak[0]\\[2pt]
& \sup\nolimits_n \, \ewp{\Call{\PName}}{\theta} \!(f_n)~.
\end{array}
\]
The reasoning to show that
\begin{align*}
	\ewlp{\Call{\PName}}{\theta} (\inf\nolimits_n  f_n) ~=~ \inf\nolimits_n \ewlp{\Call{\PName}}{\theta} (f_n)
\end{align*}
is analogous. 
\end{proof}

\begin{lemma}
\label{thm:ewp-cont-env}
  Let \chain{\theta}{\sqsubseteq} be an ascending $\omega$--chain in \SEnv. Then
  for every command $c$,
  \[
  \ewp{c}{\sup\nolimits_n \theta_n} \:=\: \sup\nolimits_n
  \ewp{c}{\theta_n}~.
  \]
  Analogously, if \chain{\theta}{\sqsupseteq} is a descending $\omega$--chain
  in \LSEnv,
  \[
   \ewlp{c}{\inf\nolimits_n \theta_n} \:=\: \inf\nolimits_n
  \ewlp{c}{\theta_n}~.
  \]
  
\end{lemma}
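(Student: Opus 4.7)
The plan is to prove both statements simultaneously by structural induction on $c$, mirroring the structure of the proof of \autoref{thm:ewp-cont-exp}. For the base cases $\Skip$, $\Ass{x}{E}$, and $\Abort$, the transformers $\ewllp{c}{\theta}$ do not mention $\theta$ at all, so the required equalities reduce to $f = f$, $f\subst{x}{E} = f\subst{x}{E}$, $\CteFun{0} = \sup_n \CteFun{0}$, and $\CteFun{1} = \inf_n \CteFun{1}$. For $\Call{\PName}$, we just unfold the definition: $\ewp{\Call{\PName}}{\sup_n \theta_n}(f) = (\sup_n \theta_n)(f) = \sup_n \theta_n(f) = \sup_n \ewp{\Call{\PName}}{\theta_n}(f)$, where the middle equality is by the pointwise definition of ``$\sqsubseteq$''; dually for $\ewlp{\cdot}{(\cdot)}$. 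The cases $\Cond{G}{c_1}{c_2}$ and $\PChoice{c_1}{p}{c_2}$ reduce, via the inductive hypotheses on $c_1$ and $c_2$, to the fact that scalar multiplication and finite sums commute with suprema of increasing chains (resp.\ infima of decreasing chains) in $\UEX$ (resp.\ $\BEX$).

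The substantive case is sequential composition $c_1;c_2$, which unfolds to a double application of $\ewp{\cdot}{(\cdot)}$ and therefore requires commuting two suprema. Concretely, I would compute
\[
\ewp{c_1;c_2}{\sup\nolimits_n \theta_n}(f) \:=\: \ewp{c_1}{\sup\nolimits_n \theta_n}\bigl(\ewp{c_2}{\sup\nolimits_n \theta_n}(f)\bigr),
\]
then apply the inductive hypothesis for $c_2$ to the inner argument and the inductive hypothesis for $c_1$ to the outer call, obtaining
\[
\sup\nolimits_m \ewp{c_1}{\theta_m}\bigl(\sup\nolimits_k \ewp{c_2}{\theta_k}(f)\bigr).
\]
Next, I would pull the inner supremum out past $\ewp{c_1}{\theta_m}$ using continuity of the expectation transformer in its argument (\autoref{thm:ewp-cont-exp}, applicable because $\theta_m \in \SEnv$). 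This yields $\sup_m \sup_k \ewp{c_1}{\theta_m}(\ewp{c_2}{\theta_k}(f))$, which collapses to $\sup_i \ewp{c_1}{\theta_i}(\ewp{c_2}{\theta_i}(f)) = \sup_i \ewp{c_1;c_2}{\theta_i}(f)$ by the diagonal supremum principle of \autoref{thm:diag-sup-cpo}. The $\wlpsymbol$ case is dual, using the infimum version of the diagonal principle in the lower $\omega$-cpo $(\BEX,\preceq)$.

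To apply \autoref{thm:diag-sup-cpo}, I need joint monotonicity of the array $a_{m,k} = \ewp{c_1}{\theta_m}(\ewp{c_2}{\theta_k}(f))$ in both indices. Monotonicity in $k$ follows from monotonicity of $\ewp{c_1}{\theta_m}$ (itself a consequence of \autoref{thm:ewp-cont-exp}) combined with monotonicity of $\ewp{c_2}{\cdot}$ in its environment argument; monotonicity in $m$ is the environment-monotonicity of $\ewp{c_1}{\cdot}$ directly. The main obstacle is therefore establishing this auxiliary monotonicity-in-$\theta$ lemma: $\theta \sqsubseteq \theta'$ implies $\ewllp{c}{\theta}(f) \bowtie \ewllp{c}{\theta'}(f)$ (with the appropriate direction for $\wpsymbol$ vs.\ $\wlpsymbol$). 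This is, however, a routine structural induction, analogous to the proof of monotonicity in \autoref{thm:wp-basic-prop}, since every syntactic construct combines its subcomputations using operations that preserve the pointwise order on $\UEX$ and $\BEX$. With that auxiliary fact in hand, the inductive argument above goes through cleanly for every constructor.
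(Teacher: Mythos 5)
Your proposal is correct and follows essentially the same route as the paper's proof: structural induction on $c$, with the only substantive case being sequential composition, handled exactly as in the paper by applying the inductive hypotheses, pulling the inner supremum out via \autoref{thm:ewp-cont-exp}, and collapsing the double supremum with \autoref{thm:diag-sup-cpo} after checking joint monotonicity. The sole (cosmetic) difference is that you introduce environment--monotonicity as a separate auxiliary lemma, whereas the paper extracts it directly from the continuity claim of the induction hypothesis on the subcommands (continuity entails monotonicity); both are sound.
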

\begin{proof}
  By induction on the structure of $c$. We consider only the case of
  $\ewp{c}{\theta}$; the case of $\ewlp{c}{\theta}$ is analogous. For the three
  basic instructions $c=\Skip$, $c=\Ass{x}{E}$ and $c=\Abort$ the proof is
  straightforward since the action of transformer $\ewp{\cdot}{(\cdot)}$ on
  these instructions is independent of the semantic environment at stake (\ie
  constant functions are always continuous). For the remaining program
  constructs we reason as follows:

\medskip
\noindent \emph{Procedure Call:}
\[
\begin{array}{c@{\:\:} l@{} }
& \ewp{\Call{\PName}}{\sup\nolimits_n \theta_n} \!(f) \\[3pt]
= & \qquad \by{def.~\ewp{\cdot}{\theta}}\displaybreak[0]\\[2pt]
& (\sup\nolimits_n \theta_n) (f)\displaybreak[0]\\[2pt]
= & \qquad \by{def.~$\sup\nolimits_n \theta_n$}\displaybreak[0]\\[2pt]
& \sup\nolimits_n \theta_n(f) \displaybreak[0]\\[2pt]
= & \qquad \by{def.~\ewp{\cdot}{\theta}}\displaybreak[0]\\[2pt]
& \sup\nolimits_n \ewp{\Call{\PName}}{\theta_n} \!(f)~.
\end{array}
\]

\medskip
\noindent \emph{Sequential Composition:}
\[
\begin{array}{c@{\:\:} l@{} }
& \ewp{c_1;\, c_2}{\sup\nolimits_n \theta_n} \!(f)\displaybreak[0]\\[2pt]
= & \qquad \by{def.~\ewp{\cdot}{\theta}}\displaybreak[0]\\[2pt]
& \ewp{c_1}{\sup\nolimits_m \theta_m} \!\bigl(\ewp{c_2}{\sup\nolimits_n \theta_n}\!(f)
   \bigr)\displaybreak[0]\\[2pt]
= & \qquad \by{I.H.~on $c_2$}\displaybreak[0]\\[2pt]
& \ewp{c_1}{\sup\nolimits_m \theta_m} \!\bigl(\sup\nolimits_n \ewp{c_2}{\theta_n}\!(f)
   \bigr)\displaybreak[0]\\[2pt]
= & \qquad \by{\autoref{thm:ewp-cont-exp}}\displaybreak[0]\\[2pt]
& \sup\nolimits_n \ewp{c_1}{\sup\nolimits_m \theta_m} \!\bigl(\ewp{c_2}{\theta_n}\!(f)
   \bigr)\displaybreak[0]\\[2pt]
= & \qquad \by{I.H.~on $c_1$}\displaybreak[0]\\[2pt]
& \sup\nolimits_n \sup\nolimits_m \ewp{c_1}{\theta_m} \!\bigl(\ewp{c_2}{\theta_n}\!(f)
   \bigr)\displaybreak[0]\\[2pt]
\overset{\star}{=} & \qquad \by{\autoref{thm:diag-sup-cpo}}\displaybreak[0]\\[2pt]
& \sup\nolimits_i \ewp{c_1}{\theta_i} \!\bigl(\ewp{c_2}{\theta_i}\!(f)
   \bigr)\displaybreak[0]\\[2pt]
= & \qquad \by{def.~\ewp{\cdot}{\theta}}\displaybreak[0]\\[2pt]
& \sup\nolimits_i \ewp{c_1;\, c_2}{\theta_i} \!(f)
\end{array}
\]
For applying \autoref{thm:diag-sup-cpo} in step (*) we have to show that 
\[
\ewp{c_1}{\theta_m} \!\bigl(\ewp{c_2}{\theta_n}\!(f)
   \bigr) \preceq \ewp{c_1}{\theta_{m'}} \!\bigl(\ewp{c_2}{\theta_{n'}}\!(f)
   \bigr)
\] 
whenever $n \leq n'$  and $m \leq m'$. To this end, we use a transitivity
argument and show that
\begin{align}
\ewp{c_1}{\theta_m} \!\bigl(\ewp{c_2}{\theta_n}\!(f)
   \bigr) \preceq \ewp{c_1}{\theta_{m}} \!\bigl(\ewp{c_2}{\theta_{n'}}\!(f)
   \bigr) \label{eq:1}\\
\ewp{c_1}{\theta_m} \!\bigl(\ewp{c_2}{\theta_{n'}}\!(f)
   \bigr) \preceq \ewp{c_1}{\theta_{m'}} \!\bigl(\ewp{c_2}{\theta_{n'}}\!(f)
   \bigr) \label{eq:2}
\end{align}
To prove Equation~\eqref{eq:1} we first apply the I.H.\ on $c_2$. Since continuity
entails monotonicity, we obtain $\ewp{c_2}{\theta_n} \sqsubseteq 
\ewp{c_2}{\theta_{n'}}$, which itself gives $\ewp{c_2}{\theta_n}\!(f) \preceq
\ewp{c_2}{\theta_{n'}}\!(f)$. We are left to show that
$\ewp{c_1}{\theta_m}(\cdot)$ is monotonic, which follows by its continuity
guaranteed by \autoref{thm:ewp-cont-exp}. To prove Equation~\eqref{eq:2}, we
apply the I.H.\ on $c_1$.  Again, since the continuity of $\ewp{c_1}{(\cdot)}$ implies
its monotonicity, we obtain $\ewp{c_1}{\theta_m} \sqsubseteq \ewp{c_1}{\theta_{m'}}$,
which establishes Equation~\eqref{eq:2}.

\medskip
\noindent \emph{Conditional Branching:}
\[
\begin{array}{c@{\:\:} l@{} }
& \ewp{\Call{\Cond{G}{c_1}{c_2}}}{\sup\nolimits_n \theta_n} \!(f) \\[3pt]
= & \qquad \by{def.~\ewp{\cdot}{\theta}}\displaybreak[0]\\[2pt]
& 
\ToExp{G} \cdot \ewp{c_1}{\sup\nolimits_n \theta_n} (f) + \ToExp{\lnot G} \cdot
  \ewp{c_2}{\sup\nolimits_n \theta_n}(f) \displaybreak[0]\\[2pt]
= & \qquad \by{I.H.~on $c_1$,$c_2$}\displaybreak[0]\\[2pt]
& 
\ToExp{G} \cdot \sup\nolimits_n \ewp{c_1}{\theta_n} (f) + \ToExp{\lnot G} \cdot
  \sup\nolimits_n \ewp{c_2}{\theta_n}(f) \displaybreak[0]\\[2pt]
\overset{(*)}{=} & \qquad \by{\autoref{thm:MST}}\displaybreak[0]\\[2pt]
& \ToExp{G} \cdot \lim\limits_{n \To \infty} \ewp{c_1}{\theta_n} (f) + \ToExp{\lnot G} \cdot
  \lim\limits_{n \To \infty} \ewp{c_2}{\theta_n}(f) \\[5pt]
= & \qquad \by{algebra of limits}\displaybreak[0]\\[2pt]
& \lim\limits_{n \To \infty} \bigl( \ToExp{G} \cdot \ewp{c_1}{\theta_n} (f) +
  \ToExp{\lnot G} \cdot \ewp{c_2}{\theta_n}(f) \bigr) \displaybreak[0]\\[2pt]
\overset{(**)}{=} & \qquad \by{\autoref{thm:MST}}\displaybreak[0]\\[2pt]
& \sup_n \bigl( \ToExp{G} \cdot \ewp{c_1}{\theta_n} (f) +
  \ToExp{\lnot G} \cdot \ewp{c_2}{\theta_n}(f) \bigr) \displaybreak[0]\\[2pt]
= & \qquad \by{def.~\ewp{\cdot}{\theta}}\displaybreak[0]\\[2pt]
& \sup\nolimits_n \ewp{\Call{\Cond{G}{c_1}{c_2}}}{\theta_n}
\end{array}
\]
To apply \autoref{thm:MST} in steps (*) and (**) we have to show that sequences $\langle
\ewp{c_1}{\theta_n} (f) \rangle$ and $\langle
\ewp{c_2}{\theta_n} (f) \rangle$ are increasing. This follows by I.H. on $c_1$
and $c_2$ since continuity entails monotonicity.

\medskip
\noindent \emph{Probabilistic Choice:} follows the same argument as conditional
branching. 

\end{proof}

\subsection{Basic Properties of Transformer \boldeetsymbol}
\label{sec:eet-basic}

We begin by presenting some preliminary results that will be necessary for
establishing the main results about the \eetsymbol\ transformer. 

\begin{fact}[$(\RtEnv,\sqsubseteq)$ is an $\omega$--cpo]
\label{fact:RtEnv-cpo}
Let ``$\sqsubseteq$" denotes the pointwise order between runtime environments,
\ie for $\eta_1, \eta_2 \in \RtEnv$, $\eta_1 \sqsubseteq \eta_2$ iff
$\eta_1(\rt) \preceq \eta_2(\rt)$ for every $\rt \in \Runtimes$. Relation
``$\sqsubseteq$'' endows the set of runtime environments \RtEnv with the
structure of an upper $\omega$--cpo with botom element
$\bot_\RtEnv = \lambda \rt \colon \Runtimes \mydot \CteFun{0}$, where the
supremum of an increasing $\omega$--chain \chain{\eta}{\sqsubseteq} is given
pointwise, \ie $(\sup_n \eta_i) (\rt) = \sup_n \eta_i(\rt)$.
\end{fact}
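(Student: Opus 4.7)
The plan is to verify the three claims in the fact: that $\sqsubseteq$ is a partial order on $\RtEnv$, that $\bot_\RtEnv$ is a bottom element of $\RtEnv$, and that suprema of increasing $\omega$--chains exist in $\RtEnv$ and are given pointwise. The order-theoretic properties (reflexivity, transitivity, antisymmetry of $\sqsubseteq$) follow immediately by lifting the corresponding properties of the pointwise order $\preceq$ on $\Runtimes$, and well-typedness of $\bot_\RtEnv$ amounts to checking that $\lambda \rt \mydot \CteFun{0}$ is upper continuous, which is trivial since $\sup_n \CteFun{0} = \CteFun{0}$.

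The substantive step is to show that the pointwise supremum of an ascending $\omega$--chain $\chain{\eta}{\sqsubseteq}$ of \emph{upper continuous} runtime transformers is itself upper continuous, so that it actually lies in $\RtEnv$. Define $\eta^\star \eqdef \lambda \rt \mydot \sup_n \eta_n(\rt)$. First I would observe that $\eta^\star$ is an upper bound by construction and is least among upper bounds by a routine pointwise argument. Then for any ascending chain $\chain{\rt}{\preceq}$ in $\Runtimes$, I would compute
\begin{align*}
\eta^\star\bigl(\sup\nolimits_m \rt_m\bigr)
 \:&=\: \sup\nolimits_n\, \eta_n\bigl(\sup\nolimits_m \rt_m\bigr)\\
 \:&=\: \sup\nolimits_n\, \sup\nolimits_m\, \eta_n(\rt_m)\\
 \:&=\: \sup\nolimits_m\, \sup\nolimits_n\, \eta_n(\rt_m)\\
 \:&=\: \sup\nolimits_m\, \eta^\star(\rt_m)~,
\end{align*}
where the first equality is the definition of $\eta^\star$, the second uses upper continuity of each $\eta_n$, the third is a swap of suprema, and the fourth is again the definition of $\eta^\star$.

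The key step, and the main obstacle worth spelling out, is the swap of the double supremum. I would justify it by invoking \autoref{thm:diag-sup-cpo} applied to $a_{n,m} \eqdef \eta_n(\rt_m) \in \Runtimes$. The hypothesis of that lemma is that $a_{n,m}$ is monotone in each index, which holds here because: (i) $n \leq n'$ gives $\eta_n \sqsubseteq \eta_{n'}$ by assumption on the chain, hence $\eta_n(\rt_m) \preceq \eta_{n'}(\rt_m)$; and (ii) $m \leq m'$ gives $\rt_m \preceq \rt_{m'}$, and upper continuity of $\eta_n$ implies its monotonicity (as in the proof of monotonicity of $\wpsymbol$ from continuity in Appendix~\ref{app:basicproperties}), so $\eta_n(\rt_m) \preceq \eta_n(\rt_{m'})$. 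This establishes $\eta^\star \in \RtEnv$ and completes the verification that $(\RtEnv, \sqsubseteq)$ is an upper $\omega$--cpo with bottom $\bot_\RtEnv$ and pointwise suprema.
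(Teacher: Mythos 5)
Your proof is correct. The paper states this as a \autoref{fact:RtEnv-cpo} without any accompanying proof, so there is no ``paper approach'' to compare against; your argument is the natural one and, importantly, you correctly isolate the only point that is not purely routine, namely that $\RtEnv$ contains \emph{only} upper continuous transformers, so one must check that the pointwise supremum $\eta^\star$ of a chain is again upper continuous before it can serve as the supremum \emph{in} $\RtEnv$. Your justification of the supremum swap via \autoref{thm:diag-sup-cpo}, with monotonicity in $n$ coming from the chain hypothesis and monotonicity in $m$ coming from the fact that continuity entails monotonicity (the same argument the paper uses for $\wpsymbol$ in Appendix~\ref{app:basicproperties}), is exactly what is needed and mirrors how the paper handles the analogous double-supremum step in the proof of \autoref{thm:ewp-cont-env}.
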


\begin{lemma}[Continuity of $\eeet{\cdot}{\eta}$ \wrt $\eta$]
\label{thm:eeet-cont-env}
  Let \chain{\eta}{\sqsubseteq} be an ascending $\omega$--chain in \RtEnv. Then
  for every command $c$,
  \[
  \eeet{c}{\sup\nolimits_n \eta_n} \:=\: \sup\nolimits_n
  \eeet{c}{\eta_n}~.
  \]
\end{lemma}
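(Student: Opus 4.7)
The plan is to prove this by structural induction on $c$, mirroring closely the proof of Lemma \ref{thm:ewp-cont-env} for $\ewp{\cdot}{(\cdot)}$. The three basic cases $c \in \{\Skip, \Ass{x}{E}, \Abort\}$ are immediate because $\eeet{c}{\eta}(\rt)$ does not depend on $\eta$ at all (the results are $\ctert{1}+\rt$, $\ctert{1}+\rt\subst{x}{E}$, and $\CteFun{0}$ respectively), so the equation reduces to the supremum of a constant sequence. For a procedure call, unfolding the definition yields
\begin{align*}
\eeet{\Call{\PName}}{\sup\nolimits_n \eta_n}(\rt) \:=\: \bigl(\sup\nolimits_n \eta_n\bigr)(\rt) \:=\: \sup\nolimits_n \eta_n(\rt) \:=\: \sup\nolimits_n \eeet{\Call{\PName}}{\eta_n}(\rt),
\end{align*}
where the middle step uses the pointwise nature of suprema in $(\RtEnv, \sqsubseteq)$ guaranteed by Fact \ref{fact:RtEnv-cpo}.

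For conditional branching and probabilistic choice, I would first push the supremum inside both branches via the inductive hypothesis and then merge the two resulting suprema into a single one exactly as in Lemma \ref{thm:ewp-cont-env}: the branch sequences are monotone (since the IH delivers continuity, hence monotonicity of $\eeet{c_i}{(\cdot)}$), so Lemma \ref{thm:MST} lets me rewrite the suprema as limits, the algebra of limits commutes them with the sum and the scalar products (and absorbs the additive constant $\ctert{1}$ in the conditional case), and Lemma \ref{thm:MST} then restores a single supremum. The delicate case, as in the expectation proof, is sequential composition, which requires handling
\begin{align*}
\eeet{c_1; c_2}{\sup\nolimits_n \eta_n}(\rt) \:=\: \eeet{c_1}{\sup\nolimits_m \eta_m}\bigl(\eeet{c_2}{\sup\nolimits_n \eta_n}(\rt)\bigr).
\end{align*}
I would apply IH on $c_2$, then pull the inner supremum out through $\eeet{c_1}{\sup_m \eta_m}(\cdot)$ using continuity of this transformer in its runtime argument, then apply IH on $c_1$ to reach a double supremum $\sup_n \sup_m \eeet{c_1}{\eta_m}\bigl(\eeet{c_2}{\eta_n}(\rt)\bigr)$, and finally collapse it to the diagonal $\sup_i \eeet{c_1;c_2}{\eta_i}(\rt)$ via Lemma \ref{thm:diag-sup-cpo}. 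The joint monotonicity needed to invoke the diagonal lemma decomposes, just as in Lemma \ref{thm:ewp-cont-env}, into monotonicity in the environment index (from IH plus transitivity) and monotonicity in the runtime argument (from continuity of $\eeet{c_1}{\eta_m}$).

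The main obstacle is the dependency on a companion continuity statement: that $\eeet{c}{\eta}(\rt)$ is upper--continuous in $\rt$ for every fixed $\eta \in \RtEnv$, i.e.\ the runtime analogue of Lemma \ref{thm:ewp-cont-exp}. Without this, the sequential composition step cannot pull the inner supremum through $\eeet{c_1}{\sup_m \eta_m}(\cdot)$. I would therefore first establish this companion result by a separate, routine structural induction on $c$, with the procedure call case following directly from the defining upper--continuity of every $\eta \in \RtEnv$, and the compound cases using the same MST-plus-algebra-of-limits recipe outlined above.
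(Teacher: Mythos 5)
Your proposal is correct and follows essentially the same route as the paper: the paper's proof of this lemma simply says it "follows the same argument as that for establishing the continuity of transformer \wpsymbol" (Lemma~\ref{thm:ewp-cont-env}), and you have faithfully reproduced that argument, including correctly identifying the required companion result on continuity in the runtime argument, which the paper provides as Lemma~\ref{thm:eeet-cont-pres}.
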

\begin{proof}
  The proof follows the same argument as that for
    establishing the continuity of transformer \wpsymbol (see
    \autoref{thm:ewp-cont-env}).
\end{proof}

\begin{lemma}[$\eeet{c}{(\cdot)}$ preserves continuity]
\label{thm:eeet-cont-pres}
For every command $c$ and every (upper continuous) runtime environment
$\eta \in \RtEnv$, $\eeet{c}{\eta}$ is a continuous runtime transformer in
$\ucont{\Runtimes}{\Runtimes}$.
\end{lemma}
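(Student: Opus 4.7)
The plan is to prove the statement by structural induction on the command $c$, showing that for every increasing $\omega$--chain $\chain{t}{\preceq}$ of runtimes one has $\eeet{c}{\eta}(\sup_n t_n) = \sup_n \eeet{c}{\eta}(t_n)$. Since $\eeet{\cdot}{\eta}$ is defined by plain structural recursion (unlike $\eetsymbol$, whose action on procedure calls involves a fixed point), no fixed point reasoning is required.

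For the base cases I would dispatch $\Skip$, $\Ass{x}{E}$, and $\Abort$ directly from the definitions in \autoref{fig:eet}, using only the pointwise fact that $\ctert{1} + \sup_n t_n = \sup_n (\ctert{1} + t_n)$ in $[0,\infty]$ (true whether the sup is finite or $\infty$) and the fact that substitution $\bullet\subst{x}{E}$ is computed pointwise. The procedure--call case $\eeet{\Call{\PName}}{\eta}(t) = \eta(t)$ is precisely where the hypothesis $\eta \in \RtEnv$ enters: upper continuity of $\eta$ is built into the definition of $\RtEnv$, so $\eta(\sup_n t_n) = \sup_n \eta(t_n)$ is immediate.

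For the compound cases I would invoke the induction hypothesis on the subcommands. For $c_1;c_2$ the continuity of the composition follows in the usual two--step manner: first apply the IH for $c_2$ inside, then the IH for $c_1$ outside. For $\Cond{G}{c_1}{c_2}$ and $\PChoice{c_1}{p}{c_2}$ one obtains, after applying the IH to $\eeet{c_1}{\eta}$ and $\eeet{c_2}{\eta}$, an expression of the form $\ctert{1} + a \cdot \sup_n f_n + b \cdot \sup_n g_n$ (with $a,b \succeq \CteFun{0}$ and, in the probabilistic case, constants in $[0,1]$). Here I would pull the two suprema out by noting that the chains $\langle f_n \rangle$, $\langle g_n \rangle$ are ascending (continuity implies monotonicity, by the same argument used after \autoref{thm:eet-prop}), so both $a \cdot f_n$ and $b \cdot g_n$ are ascending; hence by \autoref{thm:MST} (or \autoref{thm:diag-sup-cpo}) their pointwise sum with $\ctert{1}$ has supremum equal to the sum of the suprema, giving the desired identity.

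The only genuinely delicate point is the interaction of the additive constant $\ctert{1}$ and the scalar factors $\ToExp{G}$, $\ToExp{\lnot G}$, $p$, $1{-}p$ with suprema in the extended non--negative reals, which I would handle exactly as in the proof of \autoref{thm:ewp-cont-env} for conditional branching and probabilistic choice: apply MST once to merge the two chains into a single ascending chain and once more to rewrite the resulting limit as a supremum. The remainder of the argument is bookkeeping with the rules of \autoref{fig:eet}.
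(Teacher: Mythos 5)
Your proposal is correct and follows essentially the same route as the paper: structural induction on $c$, with the procedure--call case discharged immediately from the upper continuity of $\eta$ built into the definition of $\RtEnv$, and the remaining cases handled as in the continuity proof for $\ewp{c}{\theta}$ (\autoref{thm:ewp-cont-exp}), including the same use of monotonicity--from--continuity and \autoref{thm:MST} to exchange suprema with the convex combinations and the additive $\ctert{1}$. The paper's own proof is just a pointer to that earlier argument, so your more explicit treatment of the base cases and of the interaction between $\ctert{1}$ and suprema in $[0,\infty]$ adds nothing incorrect, only detail.
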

\begin{proof}
  By induction on the program structure. For every program constructs different
  from a procedure call, the reasoning is similar to that used in
  \autoref{thm:ewp-cont-exp} to prove the same property for transformer
  $\ewp{\cdot}{}$. For a procedure call the statement follows immediately since
  $\eta$ is continuous by hypothesis.
\end{proof}

\begin{lemma}[Alternative characterization of $\eetd{\Call{\PName}}{\decl}$]
\label{thm:eet-proc-call-iter}
Let $F(\eta) = \ctertenv{1} \oplus \eeet{\decl(\PName)}{\eta}$. Then
\[
\eetd{\Call{\PName}}{\decl} \:=\: \sup\nolimits_n F^n(\bot_\RtEnv)~,
\]
where $\bot_\RtEnv = \lambda \rt \colon \Runtimes \mydot \CteFun{0}$ and
$F^n(\bot_\RtEnv)$ denotes the repeated
application of $F$ from $\bot_\RtEnv$ $n$ times (\ie
$F^0(\bot_\RtEnv) = \mathit{id}$ and
$F^{n+1}(\bot_\RtEnv) = F (F^n(\bot_\RtEnv))$).
\end{lemma}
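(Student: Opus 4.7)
The statement is essentially Kleene's fixed point theorem applied to $F$ on the cpo $(\RtEnv,\sqsubseteq)$ recalled in Fact~\ref{fact:RtEnv-cpo}. The plan has three steps.

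First, I would verify that $F$ is well-defined as an operator $\RtEnv \to \RtEnv$. Given $\eta \in \RtEnv$ (i.e.\ $\eta$ upper continuous), Lemma~\ref{thm:eeet-cont-pres} yields that $\eeet{\decl(\PName)}{\eta} \in \ucont{\Runtimes}{\Runtimes}$. Since the pointwise sum of an upper continuous runtime transformer with the constant transformer $\ctertenv{1}$ is again upper continuous, $F(\eta) = \ctertenv{1} \oplus \eeet{\decl(\PName)}{\eta}$ indeed lies in $\RtEnv$, and moreover $F(\bot_\RtEnv)$ is well-defined, so all iterates $F^n(\bot_\RtEnv)$ are elements of $\RtEnv$.

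Second, I would show that $F$ is $\omega$-continuous with respect to $\sqsubseteq$. Let $\chain{\eta}{\sqsubseteq}$ be an increasing chain in $\RtEnv$. Then for every $\rt \in \Runtimes$,
\begin{align*}
F(\sup\nolimits_n \eta_n)(\rt)
&\:=\: \ctert{1} + \eeet{\decl(\PName)}{\sup\nolimits_n \eta_n}(\rt) \\
&\:=\: \ctert{1} + \sup\nolimits_n \eeet{\decl(\PName)}{\eta_n}(\rt) \\
&\:=\: \sup\nolimits_n \bigl(\ctert{1} + \eeet{\decl(\PName)}{\eta_n}(\rt)\bigr) \\
&\:=\: \sup\nolimits_n F(\eta_n)(\rt)~,
\end{align*}
where the second equality uses Lemma~\ref{thm:eeet-cont-env} and the third uses the fact that adding a constant commutes with taking suprema in $[0,\infty]$. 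Hence $F(\sup_n \eta_n) = \sup_n F(\eta_n)$.

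Third, by Kleene's fixed point theorem on the $\omega$-cpo $(\RtEnv,\sqsubseteq)$ with bottom $\bot_\RtEnv$, the least fixed point of $F$ coincides with the supremum of its iterates starting from the bottom, i.e.\
\[
\lfp{\sqsubseteq}{F} \:=\: \sup\nolimits_n F^n(\bot_\RtEnv)~.
\]
Combining this with the defining equation $\eetd{\Call{\PName}}{\decl} = \lfp{\sqsubseteq}{F}$ from Figure~\ref{fig:eet} yields the claim.

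The only non-routine obstacle is the continuity of $F$, and that is already absorbed by Lemma~\ref{thm:eeet-cont-env}; the rest is an invocation of a standard fixed point theorem together with the monotone convergence of sums of constants in $[0,\infty]$.
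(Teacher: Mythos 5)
Your proposal is correct and follows essentially the same route as the paper: the paper's proof likewise establishes continuity of $F$ via \autoref{thm:eeet-cont-env} and then invokes Kleene's Fixed Point Theorem together with \autoref{fact:RtEnv-cpo}. You merely spell out the details the paper leaves implicit (well-definedness of $F$ via \autoref{thm:eeet-cont-pres} and the pointwise computation showing that adding $\ctert{1}$ commutes with suprema), which is a faithful elaboration rather than a different argument.
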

\begin{proof}
  Using \autoref{thm:eeet-cont-env} one can show that $F$ is an (upper)
  continuous runtime transformer. The result then follows from a direct
  application of Kleene's Fixed Point Theorem and \autoref{fact:RtEnv-cpo}.
\end{proof}

To present the following lemma we use the notion of \emph{expanding}
runtime environments. Given $\eta_0, \eta_1 \in \RtEnv$, $\theta \in \SEnv$ and
$k, \Delta \in \PosReals$ we say that $\langle \eta_1,\eta_0,\theta\rangle$ are
$\langle k, \Delta \rangle$--\emph{expanding} iff 
\[
\rt_1 - \rt_0 \succeq k \cdot (\CteFun{1}
{-} f) + \CteFun{\Delta}
\]
implies
\[
\eta_1(\rt_1) - \eta_0(\rt_0) \succeq  k \cdot \bigl(\CteFun{1}
- \theta(f)\bigr) + \CteFun{\Delta}
\]
for all $\rt_0,\rt_1 \in \Runtimes$ and $f \in \BEX$.

\begin{lemma}\label{thm:expanding-env}
 Let $\langle \eta_1,\eta_0,\theta\rangle$ be
$\langle k, \Delta \rangle$--expanding environments\footnote{See paragraph above.} and $c$ be an
$\Abort$--free command. Then
\[
\rt_1 - \rt_0 \succeq k \cdot (\CteFun{1}
{-} f) + \CteFun{\Delta}
\]
implies
\[
\eeet{c}{\eta_1} \!(\rt_1) - \eeet{c}{\eta_0} \!(\rt_0) \succeq  k \cdot \bigl(\CteFun{1}
- \ewp{c}{\theta} \!(f) \bigr) + \CteFun{\Delta}
\]
for all $\rt_0,\rt_1 \in \Runtimes$ and $f \in \BEX$.
\end{lemma}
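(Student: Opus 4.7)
The plan is to proceed by structural induction on the command $c$, showing at each step that the expansion property of $\langle \eta_1, \eta_0, \theta \rangle$ lifts to $\langle \eeet{c}{\eta_1}, \eeet{c}{\eta_0}, \ewp{c}{\theta} \rangle$. The $\Abort$-free hypothesis is essential: for $\Abort$ the transformer $\eeet{\Abort}{\eta}$ returns the constant $\CteFun{0}$ regardless of $\rt$, so the left-hand side difference collapses to $\CteFun{0}$, which need not dominate $k \cdot (\CteFun{1} - \ewp{\Abort}{\theta}(f)) + \CteFun{\Delta} = k \cdot \CteFun{1} + \CteFun{\Delta}$.

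For the base cases, $\Skip$ is immediate since the extra $\ctert{1}$ contributed by $\eeet{\Skip}{\eta_i}$ cancels in the difference and $\ewp{\Skip}{\theta}$ acts as the identity, so the conclusion is literally the hypothesis. For an assignment $\Ass{x}{E}$ one applies the substitution $[x/E]$ to both sides of the hypothesis; since substitution distributes over addition, scalar multiplication, and constant expectations, the resulting inequality is exactly the required one. The procedure call case is dispatched by the very definition of expanding environments, since $\eeet{\Call{\PName}}{\eta_i}(\rt_i) = \eta_i(\rt_i)$ and $\ewp{\Call{\PName}}{\theta}(f) = \theta(f)$.

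The inductive cases for conditional and probabilistic choice are handled by linearity. For $\Cond{G}{c_1}{c_2}$, I would multiply the inductive hypotheses for $c_1$ and $c_2$ by $\ToExp{G}$ and $\ToExp{\lnot G}$, respectively, and sum; the constant $\ctert{1}$ prepended on the runtime side cancels in the difference, and the identity $\ToExp{G} + \ToExp{\lnot G} = \CteFun{1}$ recombines the two terms $k \cdot (\CteFun{1} - \ewp{c_i}{\theta}(f)) + \CteFun{\Delta}$ into $k \cdot (\CteFun{1} - \ewp{\Cond{G}{c_1}{c_2}}{\theta}(f)) + \CteFun{\Delta}$. The probabilistic choice case is verbatim the same, with weights $p$ and $1-p$ in place of $\ToExp{G}$ and $\ToExp{\lnot G}$.

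The main obstacle is sequential composition $c_1;\, c_2$. I would first apply the inductive hypothesis to $c_2$ at the original triple $\rt_1, \rt_0, f$, obtaining
\[
  \eeet{c_2}{\eta_1}(\rt_1) - \eeet{c_2}{\eta_0}(\rt_0) \:\succeq\: k \cdot (\CteFun{1} - \ewp{c_2}{\theta}(f)) + \CteFun{\Delta}~.
\]
Setting $\rt_i' = \eeet{c_2}{\eta_i}(\rt_i)$ and $f' = \ewp{c_2}{\theta}(f)$, a second application of the inductive hypothesis to $c_1$ at the triple $\rt_1', \rt_0', f'$ yields the claim once one recognises $\eeet{c_1;c_2}{\eta_i} = \eeet{c_1}{\eta_i} \circ \eeet{c_2}{\eta_i}$ and $\ewp{c_1;c_2}{\theta} = \ewp{c_1}{\theta} \circ \ewp{c_2}{\theta}$. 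The delicate point is that $f'$ must lie in $\BEX$ for the second invocation to be legal, since the expansion property is formulated only for bounded post-expectations; this is the step I expect to require the most care, and it should follow from the preliminary observation that $\ewp{c_2}{\theta}$ sends $\BEX$ to $\BEX$ whenever $\theta$ does — which is the case for the environments of interest arising from the \wpsymbol--calculus, because weakest pre-expectations of bounded post-expectations are themselves pointwise bounded by $\CteFun{1}$.
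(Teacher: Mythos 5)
Your proof follows essentially the same route as the paper's: structural induction on $c$, with the base cases discharged by cancellation of the unit cost and distribution of substitution, the call case by the expanding-environment hypothesis, the two choice constructs by linearity with weights summing to $\CteFun{1}$, and sequential composition by chaining the inductive hypothesis through $c_2$ and then $c_1$ at the shifted triple $\bigl(\eeet{c_2}{\eta_1}(\rt_1),\, \eeet{c_2}{\eta_0}(\rt_0),\, \ewp{c_2}{\theta}(f)\bigr)$ --- which is exactly the paper's argument, merely written forwards rather than backwards. The typing concern you flag for $\ewp{c_2}{\theta}(f) \in \BEX$ is legitimate but is one the paper's own proof silently glosses over; in the lemma's sole application ($\theta = \wpd{\Call{\PName}}{\decl}$, $f = \CteFun{1}$) the pre-expectation stays bounded by $\CteFun{1}$, so nothing breaks.
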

\
\begin{proof}
By induction on the structure of $c$.

\medskip
\noindent \emph{No--op:}
\[
\begin{array}{c@{\:\:} l@{}}
& \eeet{\Skip}{\eta_1}\!(\rt_1) - \eeet{\Skip}{\eta_0}\!(\rt_0)\displaybreak[0]\\[2pt]
& \succeq k \cdot \bigl(\CteFun{1}
- \ewp{\Skip}{\theta} \!(f) \bigr) + \CteFun{\Delta} \displaybreak[0]\\[2pt]
\Leftrightarrow & \qquad \by{def.~of $\eeet{\cdot}{\eta}$, $\ewp{\cdot}{\theta}$}\displaybreak[0]\\[2pt]
& (\CteFun{1} + \rt_1) - (\CteFun{1} + \rt_0) \succeq k \cdot (\CteFun{1}
{-} f) + \CteFun{\Delta} \displaybreak[0]\\[2pt]
\Leftarrow & \qquad \by{hypothesis}\displaybreak[0]\\[2pt]
& \true
\end{array}
\]

\medskip
\noindent \emph{Assignment:}
\[
\begin{array}{c@{\:\:} l@{} }
& \eeet{\Ass{x}{E}}{\eta_1}\!(\rt_1) - \eeet{\Ass{x}{E}}{\eta_0}\!(\rt_0)\displaybreak[0]\\[2pt]
& \succeq k \cdot \bigl(\CteFun{1}
- \ewp{\Ass{x}{E}}{\theta} \!(f) \bigr) + \CteFun{\Delta} \displaybreak[0]\\[2pt]
\Leftrightarrow & \qquad \by{def.~of $\eeet{\cdot}{\eta}$, $\ewp{\cdot}{\theta}$}\displaybreak[0]\\[2pt]
& (\CteFun{1} + \rt_1)\subst{x}{E} - (\CteFun{1} + \rt_0)\subst{x}{E} \succeq k \cdot \bigl(\CteFun{1}
- f\subst{x}{E} \bigr) + \CteFun{\Delta} \displaybreak[0]\\[2pt]
\Leftrightarrow & \qquad \by{algebra}\displaybreak[0]\\[2pt]
& (\rt_1 - \rt_0)\subst{x}{E} \succeq \bigl( k \cdot (\CteFun{1}
{-} f) + \CteFun{\Delta}\bigr) \subst{x}{E} \displaybreak[0]\\[2pt]
\Leftarrow & \qquad \by{hypothesis}\displaybreak[0]\\[2pt]
& \true
\end{array}
\]

\medskip
\noindent \emph{Procedure Call:}
\[
\begin{array}{c@{\:\:} l@{} }
& \eeet{\Call{\PName}}{\eta_1}\!(\rt_1) - \eeet{\Call{\PName}}{\eta_0}\!(\rt_0)\displaybreak[0]\\[2pt]
& \succeq k \cdot \bigl(\CteFun{1}
- \ewp{\Call{\PName}}{\theta} \!(f) \bigr) + \CteFun{\Delta} \displaybreak[0]\\[2pt]
\Leftrightarrow & \qquad \by{def.~of $\eeet{\cdot}{\eta}$, $\ewp{\cdot}{\theta}$}\displaybreak[0]\\[2pt]
& \eta_1(\rt_1) - \eta_0(\rt_0) \succeq k \cdot \bigl(\CteFun{1}
- \theta (f) \bigr) + \CteFun{\Delta} \displaybreak[0]\\[2pt]
\Leftarrow & \qquad \by{$\langle \eta_1,\eta_0,\theta\rangle$ is
$\langle k, \CteFun{\Delta} \rangle$--\emph{expanding}}\displaybreak[0]\\[2pt]
& \rt_1 - \rt_0 \succeq k \cdot (\CteFun{1}
{-} f) + \CteFun{\Delta} \displaybreak[0]\\[2pt]
\Leftarrow & \qquad \by{hypothesis}\displaybreak[0]\\[2pt]
& \true
\end{array}
\]

\medskip
\noindent \emph{Probabilistic Choice:} 
\[
\begin{array}{c@{\:\:} l@{} }
& \eeet{\PChoice{c_1}{p}{c_2}}{\eta_1}\!(\rt_1) - \eeet{\PChoice{c_1}{p}{c_2}}{\eta_0}\!(\rt_0)\displaybreak[0]\\[2pt]
& \succeq k \cdot \bigl(\CteFun{1}
- \ewp{\PChoice{c_1}{p}{c_2}}{\theta} \!(f) \bigr) + \CteFun{\Delta} \displaybreak[0]\\[2pt]
\Leftrightarrow & \qquad \by{def.~of $\eeet{\cdot}{\eta}$, $\ewp{\cdot}{\theta}$}\displaybreak[0]\\[2pt]
& p \cdot  \bigl(  \eeet{c_1}{\eta_1}\!(\rt_1)  - \eeet{c_1}{\eta_0}\!(\rt_0)
  \bigr) \displaybreak[0]\\[2pt]
&+ \:(1{-}p) \cdot  \bigl(  \eeet{c_2}{\eta_1}\!(\rt_1)  - \eeet{c_2}{\eta_0}\!(\rt_0)
  \bigr) \displaybreak[0]\\[2pt]
& \succeq k \cdot \bigl(\CteFun{1}
- \bigl( p \cdot \ewp{c_1}{\theta}\!(f) + (1{-}p) \cdot \ewp{c_2}{\theta}\!(f)
  \bigr) \bigr) + \CteFun{\Delta} \displaybreak[0]\\[2pt]
\Leftarrow & \qquad \by{IH on $c_1$, $c_2$}\displaybreak[0]\\[2pt]
& p \cdot  \bigl( k \cdot \bigl(\CteFun{1}
- \ewp{c_1}{\theta} \!(f) \bigr) + \CteFun{\Delta}
  \bigr) \displaybreak[0]\\[2pt]
&+ \:(1{-}p) \cdot  \bigl(  k \cdot \bigl(\CteFun{1}
- \ewp{c_2}{\theta} \!(f) \bigr) + \CteFun{\Delta}
  \bigr) \displaybreak[0]\\[2pt]
& \succeq k \cdot \bigl(\CteFun{1}
- \bigl( p \cdot \ewp{c_1}{\theta}\!(f) + (1{-}p) \cdot \ewp{c_1}{\theta}\!(f)
  \bigr) \bigr) + \CteFun{\Delta} \displaybreak[0]\\[2pt]
\Leftarrow & \qquad \by{algebra (equality holds)}\displaybreak[0]\\[2pt]
& \true
\end{array}
\]

\medskip
\noindent \emph{Conditional Branching:} analogous to the case of probabilistic choice.

\medskip
\noindent \emph{Sequential Composition:}
\begin{align*}
\begin{array}{c@{\:\:} l@{} }
& \eeet{c_1;c_2}{\eta_1}\!(\rt_1) - \eeet{c_1;c_2}{\eta_0}\!(\rt_0)\displaybreak[0]\\[2pt]
& \succeq k \cdot \bigl(\CteFun{1}
- \ewp{c_1;c_2}{\theta} \!(f) \bigr) + \CteFun{\Delta} \displaybreak[0]\\[2pt]
\Leftrightarrow & \qquad \by{def.~of $\eeet{\cdot}{\eta}$, $\ewp{\cdot}{\theta}$}\displaybreak[0]\\[2pt]
& \eeet{c_1}{\eta_1}\!\bigl(\eeet{c_2}{\eta_1} \!(\rt_1)  -
  \eeet{c_1}{\eta_1}\!\bigl(\eeet{c_2}{\eta_0} \!(\rt_0)\displaybreak[0]\\[2pt]
& \succeq k \cdot \bigl(\CteFun{1}
- \ewp{c_1}{\theta} \bigl( \ewp{c_2}{\theta}\!(f) \bigr) \bigr) + \CteFun{\Delta} \displaybreak[0]\\[2pt]
\Leftarrow & \qquad \by{IH on $c_1$}\displaybreak[0]\\[2pt]
& \eeet{c_2}{\eta_1} \!(\rt_1)  -
  \eeet{c_2}{\eta_0} \!(\rt_0)  \succeq k \cdot \bigl(\CteFun{1}
- \ewp{c_2}{\theta}\!(f) \bigr) + \CteFun{\Delta} \displaybreak[0]\\[2pt]
\Leftarrow & \qquad \by{IH on $c_2$}\displaybreak[0]\\[2pt]
& \rt_1 - \rt_0 \succeq k \cdot (\CteFun{1}
{-} f) + \CteFun{\Delta} \displaybreak[0]\\[2pt]
\Leftarrow & \qquad \by{hypothesis}\displaybreak[0]\\[2pt]
& \true
\end{array}\\[-\normalbaselineskip]\tag*{\qedhere}
\end{align*}

\end{proof}

\begin{lemma} \label{thm:ert-call-div}
 Let \PName be an $\Abort$--free procedure with declaration $\decl$. Then for
 every runtime $\rt$,
\[
\eet{\Call{\PName}}\!(\rt) \: \succeq\: \sup\nolimits_{n} \: n \,{\cdot}\, \bigl(\CteFun{1}
- \wpd{\Call{\PName}}{\decl}\!(\CteFun{1}) \bigr)~.
\]
\end{lemma}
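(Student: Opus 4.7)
The plan is to exploit the Kleene-style characterizations of both $\eet{\Call{\PName}}$ and $\wpd{\Call{\PName}}{\decl}$. Setting $F(\eta) \eqdef \ctertenv{1} \oplus \eeet{\decl(\PName)}{\eta}$ and $G(\theta) \eqdef \ewp{\decl(\PName)}{\theta}$, \autoref{thm:eet-proc-call-iter} gives $\eet{\Call{\PName}}(\rt) = \sup_n F^n(\bot_\RtEnv)(\rt)$, and the Kleene iteration behind \autoref{thm:fp-rec} yields $\wpd{\Call{\PName}}{\decl}(\CteFun{1}) = \sup_n G^n(\bot_\SEnv)(\CteFun{1})$ as a monotone limit. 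Since $G^n(\bot_\SEnv)(\CteFun{1}) \preceq \wpd{\Call{\PName}}{\decl}(\CteFun{1})$ for every $n$, the lemma reduces to the sharper pointwise bound $F^n(\bot_\RtEnv)(\rt) \succeq n \cdot \bigl(\CteFun{1} - G^n(\bot_\SEnv)(\CteFun{1})\bigr)$ for every $n$, after which taking suprema concludes.

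I will obtain this sharpened bound as a special case of a stronger invariant that is better suited to induction: for every $n$, the triple $\langle F^n(\bot_\RtEnv),\, \bot_\RtEnv,\, G^n(\bot_\SEnv)\rangle$ is $\langle n, 0\rangle$-expanding in the sense preceding \autoref{thm:expanding-env}. The case $n = 0$ is immediate because $\bot_\RtEnv$ and $\bot_\SEnv$ are identically $\CteFun{0}$, so both sides of the expansion inequality vanish. Once the invariant is established, the sharpened pointwise bound follows by instantiating it at $\rt_1 \eqdef \rt$, $\rt_0 \eqdef \CteFun{0}$, and $f \eqdef \CteFun{1}$, so that its premise collapses to the trivial $\rt \succeq \CteFun{0}$.

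For the inductive step, assume the invariant for $n$ and suppose $\rt_1 - \rt_0 \succeq (n{+}1)(\CteFun{1} - f)$ with $f \in \BEX$. Unfolding $F^{n+1}(\bot_\RtEnv)(\rt_1) = \CteFun{1} + \eeet{\decl(\PName)}{F^n(\bot_\RtEnv)}(\rt_1)$ and $G^{n+1}(\bot_\SEnv)(f) = \ewp{\decl(\PName)}{G^n(\bot_\SEnv)}(f)$, the goal becomes $\eeet{\decl(\PName)}{F^n(\bot_\RtEnv)}(\rt_1) \succeq n \cdot \CteFun{1} - (n{+}1) \cdot \ewp{\decl(\PName)}{G^n(\bot_\SEnv)}(f)$. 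I apply \autoref{thm:expanding-env} to the $\Abort$-free body $\decl(\PName)$ and to the inductive expanding property, instantiated at the shifted input $\rt_0' \eqdef \rt_1 - n \cdot (\CteFun{1} - f)$; this is a bona fide runtime because the premise together with $\rt_0 \succeq \CteFun{0}$ yields $\rt_1 \succeq (n{+}1)(\CteFun{1} - f) \succeq n(\CteFun{1} - f)$, and here I crucially use $f \preceq \CteFun{1}$ so that $\CteFun{1} - f \succeq \CteFun{0}$. The lemma then delivers $\eeet{\decl(\PName)}{F^n(\bot_\RtEnv)}(\rt_1) - \eeet{\decl(\PName)}{\bot_\RtEnv}(\rt_0') \succeq n \bigl(\CteFun{1} - \ewp{\decl(\PName)}{G^n(\bot_\SEnv)}(f)\bigr)$, and the required inequality follows after observing that both $\eeet{\decl(\PName)}{\bot_\RtEnv}(\rt_0')$ and $\ewp{\decl(\PName)}{G^n(\bot_\SEnv)}(f)$ are non-negative (a runtime and an expectation, respectively). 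The main delicacy is precisely this bookkeeping: routing the additive $\ctertenv{1}$ increment contributed by one unfolding of $F$ against the multiplicative trade-off in \autoref{thm:expanding-env}, while keeping the witness $\rt_0'$ a valid non-negative runtime.
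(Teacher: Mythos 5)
Your proof is correct, and it reorganizes the paper's argument in a genuinely different way while resting on the same two pillars, namely the Kleene characterization $\eetd{\Call{\PName}}{\decl} = \sup_n F^n(\bot_\RtEnv)$ from \autoref{thm:eet-proc-call-iter} and the expanding-environments lemma (\autoref{thm:expanding-env}), where $F(\eta) = \ctertenv{1} \oplus \eeet{\decl(\PName)}{\eta}$ and $G(\theta) = \ewp{\decl(\PName)}{\theta}$. The paper proves the \emph{unit-step} statement that whenever $\rt_1 - \rt_0 \succeq \CteFun{0}$ one has $F^{i+1}(\bot_\RtEnv)(\rt_1) - F^{i}(\bot_\RtEnv)(\rt_0) \succeq \CteFun{1} - \wpd{\Call{\PName}}{\decl}\!(\CteFun{1})$, i.e.\ that consecutive iterates form a $\langle 1,0\rangle$--expanding triple against the \emph{fixed point} $\wpd{\Call{\PName}}{\decl}$ --- which forces it to invoke the fixed-point equation $\wpd{\Call{\PName}}{\decl} = \ewp{\decl(\PName)}{\wpd{\Call{\PName}}{\decl}}$ from \autoref{thm:fp-rec} inside the induction --- and then telescopes the $n{+}1$ unit gains. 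You instead let the expansion factor grow with $n$, comparing $F^n(\bot_\RtEnv)$ directly against $\bot_\RtEnv$ and pairing it with the finite approximant $G^n(\bot_\SEnv)$ rather than the fixed point. The price is the shifted-witness bookkeeping $\rt_0' = \rt_1 - n\cdot(\CteFun{1}-f)$, whose admissibility as a non-negative runtime you correctly justify from $f \preceq \CteFun{1}$, together with the slack absorptions $\eeet{\decl(\PName)}{\bot_\RtEnv}\!(\rt_0') \succeq \CteFun{0}$ and $-n\cdot x \succeq -(n{+}1)\cdot x$ for $x \succeq \CteFun{0}$; the payoff is that you never need the fixed-point characterization of \wpsymbol inside the induction, only the monotone convergence $G^n(\bot_\SEnv)(\CteFun{1}) \preceq \wpd{\Call{\PName}}{\decl}\!(\CteFun{1})$ at the very end, and your per-level bound $F^n(\bot_\RtEnv)(\rt) \succeq n\cdot\bigl(\CteFun{1} - G^n(\bot_\SEnv)(\CteFun{1})\bigr)$ is in fact slightly sharper than the one the paper telescopes to. In short: same engine, different induction invariant --- the paper sums unit differences, you accumulate the factor $n$ in the expanding parameter itself.
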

\begin{proof}
  Let $F(\eta) = \ctertenv{1} \oplus \eeet{\decl(\PName)}{\eta}$.  Since by
  \autoref{thm:eet-proc-call-iter},
  $\eetd{\Call{\PName}}{\decl} = \sup\nolimits_n F^n(\bot)$, the result follows
  from showing that for all $n\geq 0$,
  \[ 
   F^{n+1}(\bot) (\rt) \succeq (n+1) \,{\cdot}\, \bigl(\CteFun{1}
- \wpd{\Call{\PName}}{\decl}\!(\CteFun{1}) \bigr)~.
\]  
To establish this, we first prove by induction on $i$ that whenever $\rt_1 -
\rt_0 \succeq 0$,
\[
F^{i+1}(\bot) (\rt_1) -
F^{i}(\bot) (\rt_0) \succeq \CteFun{1} - \wpd{\Call{\PName}}{\decl}\!(\CteFun{1})~,
\]
and then conclude using a telescopic sum argument as follows:
\begin{align*}
F^{n+1}(\bot) (\rt) \:=\;& F^{0}(\bot) (\rt) +  \sum\nolimits_{i=0}^{n} F^{i+1}(\bot) (\rt) -
  F^{i}(\bot) (\rt) \\
\succeq\:& \sum\nolimits_{i=0}^{n} F^{i+1}(\bot) (\rt) -
  F^{i}(\bot) (\rt) \\
\succeq\:& \: (n+1) \,{\cdot}\, \bigl(\CteFun{1}
- \wpd{\Call{\PName}}{\decl}\!(\CteFun{1}) \bigr)~.
\end{align*}
For the inductive proof we reason as follows. For the base case we have
\[
\begin{array}{c@{\:\:} l@{} }
& F^{1}(\bot) (\rt_1) - F^{0}(\bot)(\rt_0) \succeq \CteFun{1} - \wpd{\Call{\PName}}{\decl}\!(\CteFun{1}) \displaybreak[0]\\[2pt]
\Leftrightarrow & \qquad \by{def.~of $F^n$, $\bot$}\displaybreak[0]\\[2pt]
&  \CteFun{1} + \eeet{\decl(\PName)}{\bot}\!(\rt_1)  - \bot(t_0) \succeq \CteFun{1} - \wpd{\Call{\PName}}{\decl}\!(\CteFun{1}) \displaybreak[0]\\[2pt]
\Leftrightarrow & \qquad \by{def.~of $\bot$}\displaybreak[0]\\[2pt]
&  \CteFun{1} + \eeet{\decl(\PName)}{\bot}\!(\rt_1) \succeq \CteFun{1} - \wpd{\Call{\PName}}{\decl}\!(\CteFun{1}) \displaybreak[0]\\[2pt]
\Leftarrow & \qquad \by{$\wpd{\Call{\PName}}{\decl}\!(\CteFun{1}) \succeq
                  \CteFun{0}$}\displaybreak[0]\\[2pt]
& \true
\end{array}
\]
while for the inductive case we have, 
\begin{align*}
\begin{array}{c@{\:\:} l@{} }
& F^{i+2}(\bot) (\rt_1) - F^{i+1}(\bot)(\rt_0) \succeq \CteFun{1} - \wpd{\Call{\PName}}{\decl}\!(\CteFun{1}) \displaybreak[0]\\[2pt]
\Leftrightarrow & \qquad \by{def.~of $F^n$}\displaybreak[0]\\[2pt]
& \bigl( \CteFun{1} + \eeet{\decl(\PName)}{ F^{i+1}(\bot)}\!(\rt_1) \bigr) - 
\bigl( \CteFun{1} + \eeet{\decl(\PName)}{ F^{i}(\bot)}\! (\rt_0) \bigr) \displaybreak[0]\\[2pt]
& \succeq \CteFun{1} - \wpd{\Call{\PName}}{\decl}\!(\CteFun{1}) \displaybreak[0]\\[2pt]
\Leftrightarrow & \qquad \by{algebra}\displaybreak[0]\\[2pt]
&\eeet{\decl(\PName)}{ F^{i+1}(\bot)}\!(\rt_1) - 
 \eeet{\decl(\PName)}{ F^{i}(\bot)}\!(\rt_0)  \displaybreak[0]\\[2pt]
& \succeq 1 \cdot \bigl(\CteFun{1} - \wpd{\Call{\PName}}{\decl}\!(\CteFun{1}) \bigr) + 0 \displaybreak[0]\\[2pt]
\Leftrightarrow & \qquad \by{$ \wpd{\Call{\PName}}{\decl} =
                  \ewp{\decl(\PName)}{\wpd{\Call{\PName}}{\decl}}$ by \autoref{thm:fp-rec}}\displaybreak[0]\\[2pt]
&\eeet{\decl(\PName)}{ F^{i+1}(\bot)}\!(\rt_1) - 
 \eeet{\decl(\PName)}{ F^{i}(\bot)}\!(\rt_0)  \displaybreak[0]\\[2pt]
& \succeq 1 \cdot \bigl(\CteFun{1} - \ewp{\decl(\PName)}{\wpd{\Call{\PName}}{\decl}}\!(\CteFun{1}) \bigr) + 0 \displaybreak[0]\\[2pt]
\Leftarrow & \qquad \by{\autoref{thm:expanding-env}}\displaybreak[0]\\[2pt]
& \rt_1 - \rt_0 \succeq 1 \cdot (\CteFun{1} - \CteFun{1}) + \CteFun{0} \text{ and}\displaybreak[0]\\[2pt]
&\text{$\bigl\langle F^{i+1}(\bot),F^{i}(\bot),\wpd{\Call{\PName}}{\decl}\bigr\rangle$ are
$\langle 1, 0 \rangle$--expanding}\displaybreak[0]\\[2pt]
\Leftarrow & \qquad \by{hypothesis}\displaybreak[0]\\[2pt]
&\text{$\bigl\langle F^{i+1}(\bot),F^{i}(\bot),\wpd{\Call{\PName}}{\decl}\bigr\rangle$ are
$\langle 1, 0 \rangle$--expanding}\displaybreak[0]\\[2pt]
\Leftarrow & \qquad \by{IH}\displaybreak[0]\\[2pt]
& \true
\end{array}\\[-\normalbaselineskip]\tag*{\qedhere}
\end{align*}
\end{proof}

\begin{lemma}\label{thm:ert-of-constant}
For every constant $k \in \PosReals$ and $\Abort$--free program
$\prog{c}{\decl}$, 
\[
\eetd{c}{\decl} \!(\ctert{k}) \succeq \ctert{k}~.
\]
\end{lemma}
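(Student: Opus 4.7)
My plan is to avoid a direct structural induction on $c$ by reducing the claim to the contrapositive of \autoref{thm:rt-ast}. First, I would apply \autoref{thm:eet-wp} to obtain $\eetd{c}{\decl}(\ctert{k}) = \eetd{c}{\decl}(\ctert{0}) + \wpd{c}{\decl}(\ctert{k})$. Since $\ctert{k} = k \cdot \ctert{1}$, the linearity of \wpsymbol\ from \autoref{thm:wp-basic-prop} lets me rewrite this as $\eetd{c}{\decl}(\ctert{k}) = \eetd{c}{\decl}(\ctert{0}) + k \cdot \wpd{c}{\decl}(\ctert{1})$, reducing the lemma to the pointwise inequality $\eetd{c}{\decl}(\ctert{0})(s) + k \cdot \wpd{c}{\decl}(\ctert{1})(s) \geq k$ for every state $s$.

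I would then split cases at each state $s$. If $\eetd{c}{\decl}(\ctert{0})(s) = \infty$, the inequality is immediate. Otherwise $\eetd{c}{\decl}(\ctert{0})(s) < \infty$, and since $c$ is \Abort--free, \autoref{thm:rt-ast} forces $\wpd{c}{\decl}(\ctert{1})(s) = 1$; hence the pointwise sum equals $\eetd{c}{\decl}(\ctert{0})(s) + k \geq k$ since runtimes are non-negative.

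The key insight is that \autoref{thm:rt-ast} precludes the only scenario where the inequality could fail, namely a finite expected runtime coexisting with a nonzero divergence probability, which cannot occur for \Abort--free programs. This short-circuits what would otherwise require a more involved structural induction whose procedure-call case would need to unpack the fixed-point characterization of $\eetd{\Call{\PName}}{\decl}$ and appeal to \autoref{thm:ert-call-div} to conclude that divergent states contribute infinite expected runtime.
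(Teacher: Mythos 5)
Your reduction via \autoref{thm:eet-wp} and linearity is the right first move---it is exactly how the paper handles the procedure--call case when the call terminates almost--surely---but the appeal to \autoref{thm:rt-ast} introduces a genuine circularity in the paper's development. \autoref{thm:rt-ast} is proved by combining \autoref{thm:eet-wp} with the \emph{propagation of constants} property of \autoref{thm:eet-prop} (to decompose $\eetd{c}{\decl}\!(\ctert{1})$ as $\ctert{1} + \eetd{c}{\decl}\!(\ctert{0})$ and then cancel the finite term). The proof of propagation of constants, in its procedure--call case, discharges one of the hypotheses of \autoref{thm:fix-point-compos} precisely by invoking $\ctert{k} \preceq \eetd{\Call{\PName}}{\decl}\!(\ctert{k}+\rt)$, i.e.\ by invoking \autoref{thm:ert-of-constant} itself (together with monotonicity). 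So the dependency chain runs \autoref{thm:ert-of-constant} $\to$ propagation of constants $\to$ \autoref{thm:rt-ast} $\to$ your proof of \autoref{thm:ert-of-constant}, and the argument does not close.

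The non--circular route, which the paper takes, is to case--split not on finiteness of $\eetd{c}{\decl}\!(\ctert{0})(s)$ but on the termination probability $\wpd{\Call{\PName}}{\decl}\!(\CteFun{1})(s)$. When it equals $1$, your computation $\eetd{\Call{\PName}}{\decl}\!(\ctert{k})(s) \geq k \cdot \wpd{\Call{\PName}}{\decl}\!(\CteFun{1})(s) = k$ goes through verbatim. When it is strictly less than $1$, one appeals to \autoref{thm:ert-call-div}---proved directly from the Kleene iteration of the runtime fixed point and the expanding--environments lemma, with no dependence on the present lemma---to conclude that the expected runtime at $s$ is $\infty \geq k$. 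This is exactly the ``unpacking of the fixed--point characterization'' that your plan was designed to avoid, but it cannot be avoided here: \autoref{thm:ert-call-div} (or something equivalent) is what rules out a finite runtime coexisting with sub--one termination probability \emph{before} \autoref{thm:rt-ast} is available. Note also that the special argument is only needed for $\Call{\PName}$; the remaining constructs are handled by a routine structural induction.
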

\begin{proof}
By induction on the structure of $c$. Except for the case of procedure calls,
all other program constructs pose no difficulty. For the case of a procedure
call, we make a case distinction on the termination behaviour of the procedure. 
If from state $s$ the procedure terminates almost surely, \ie 
$\wpd{\Call{\PName}}{\decl}(\ctert{1})(s) = 1$, the result follows from
\autoref{thm:eet-wp} and the linearity of $\wp{\cdot}$ (see
\autoref{thm:wp-basic-prop}) since
\begin{align*}
\MoveEqLeft[2] \eetd{\Call{\PName}}{\decl}\!(\ctert{k})(s)\\
&\:=\: \eetd{\Call{\PName}}{\decl}\!(\ctert{0}) (s) +  \wpd{\Call{\PName}}{\decl}\!(\ctert{k}) (s)\\
&\:\geq\: \wpd{\Call{\PName}}{\decl}\!(\ctert{k}) (s)\\
&\:=\: k \cdot \wpd{\Call{\PName}}{\decl}\!(1) (s) \:=\: k
\end{align*}
If, on the contrary, the procedure terminates with probability strictly less
than $1$ from state $s$, we conclude applying \autoref{thm:ert-call-div} since
\begin{align*}
\MoveEqLeft[2] \eetd{\Call{\PName}}{\decl}\!(\ctert{k})(s) \\
&\:\geq\: \sup\nolimits_{n} \: n \,{\cdot}\, \bigl(\underbrace{\CteFun{1}
- \wpd{\Call{\PName}}{\decl}\!(\CteFun{1})(s)}_{>0} \bigr)\\[-4pt]
&\:=\: \infty \:\geq\: k~.\qedhere
\end{align*}

\end{proof}

For stating the following lemma we use the notion of ``constant separable''
runtime environment. We say that $\eta \in \RtEnv$ is
\emph{constant separable into} $\upsilon \in \RtEnv$ iff for all $k \in
\PosReals$ and $\rt \in \Runtimes$, $\eta(\ctert{k}+t) = \ctert{k} + \upsilon(t)$. 

\begin{lemma}\label{thm:cte-sep-rtenv}
Let $\eta$ be a runtime environment constant separable\footnote{See paragraph
  above.} into $\upsilon$. Then for all command $c$,
\[
\eeet{c}{\eta}\!(\ctert{k}+t) \:=\: \ctert{k} + \eeet{c}{\upsilon}\!(t)~.
\]
\end{lemma}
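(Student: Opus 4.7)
\bigskip

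\noindent\textbf{Proof proposal.} The plan is to prove the statement by straightforward structural induction on the command $c$, in exact parallel with the proof of the propagation--of--constants property of $\eetsymbol$ in \autoref{thm:eet-prop}. As the $\Abort$ case immediately fails (we would get $\eeet{\Abort}{\eta}(\ctert{k}+t)=\CteFun{0}$, but $\ctert{k}+\eeet{\Abort}{\upsilon}(t)=\ctert{k}$), I would implicitly restrict attention to $\Abort$--free commands, matching the way this lemma is later invoked when discharging the propagation--of--constants premise of \autoref{thm:eet-prop}.

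For the base cases, $\Skip$ and $\Ass{x}{E}$ both require only the trivial observation that adding $\ctert{1}$ and performing a syntactic substitution commute with adding a constant (since constant expectations are unaffected by substitution). The procedure--call case is immediate from the hypothesis that $\eta$ is constant separable into $\upsilon$: by definition of $\eeet{\cdot}{\eta}$, $\eeet{\Call{\PName}}{\eta}(\ctert{k}+t)=\eta(\ctert{k}+t)=\ctert{k}+\upsilon(t)=\ctert{k}+\eeet{\Call{\PName}}{\upsilon}(t)$.

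For the compound constructs, I would chain the induction hypothesis together with the arithmetic structure of the defining rules in \autoref{fig:eet}. In the conditional case, after applying the IH to each branch, the additive constant $\ctert{k}$ factors out because $\ToExp{G}+\ToExp{\lnot G}=\ctert{1}$; an analogous calculation with weights $p$ and $1-p$ handles probabilistic choice. Sequential composition $c_1;c_2$ is the most illustrative: applying the IH to $c_2$ gives $\eeet{c_2}{\eta}(\ctert{k}+t)=\ctert{k}+\eeet{c_2}{\upsilon}(t)$, and then applying the IH to $c_1$ at runtime $\eeet{c_2}{\upsilon}(t)$ (playing the role of $t$) yields $\eeet{c_1}{\eta}(\ctert{k}+\eeet{c_2}{\upsilon}(t))=\ctert{k}+\eeet{c_1}{\upsilon}(\eeet{c_2}{\upsilon}(t))=\ctert{k}+\eeet{c_1;c_2}{\upsilon}(t)$.

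The only conceptual subtlety is verifying that the induction goes through in the sequential case, where the IH on $c_1$ must be instantiated at a runtime that is itself the output of $\eeet{c_2}{\upsilon}$ rather than a raw continuation; but since the IH is quantified over all $t\in\Runtimes$, this poses no difficulty. I do not expect any genuine obstacle beyond flagging the $\Abort$--free restriction that makes the statement true.
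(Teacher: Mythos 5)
Your proof follows essentially the same structural induction as the paper's: identical base cases, the procedure--call case discharged directly from the constant--separability hypothesis, and the same chaining of the induction hypothesis for the compound constructs (including instantiating the IH for $c_1$ at the runtime $\eeet{c_2}{\upsilon}(t)$ in the sequential case). Your observation that the statement fails for $\Abort$ and must be restricted to $\Abort$--free commands is correct and is in fact a point the paper's own proof silently elides by omitting the $\Abort$ case altogether; the restriction is harmless because the lemma is only invoked under the $\Abort$--freeness side condition of the propagation--of--constants property.
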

\begin{proof}
By induction on the structure of $c$. 

\medskip
\noindent \emph{No--op:}
\[
\begin{array}{c@{\:\:} l@{}}
& \eeet{\Skip}{\eta}\!(\ctert{k} + \rt) \displaybreak[0]\\[2pt]
= & \qquad \by{def.~of $\eeet{\cdot}{\eta}$}\displaybreak[0]\\[2pt]
& \CteFun{1} + \ctert{k} + t \displaybreak[0]\\[2pt]
= & \qquad \by{def.~of $\eeet{\cdot}{\upsilon}$}\displaybreak[0]\\[2pt]
& \ctert{k} + \eeet{\Skip}{\upsilon}\!(\rt) \displaybreak[0]\\[2pt]
\end{array}
\]

\medskip
\noindent \emph{Assignment:}
\[
\begin{array}{c@{\:\:} l@{} }
& \eeet{\Ass{x}{E}}{\eta}\!(\ctert{k} + \rt) \displaybreak[0]\\[2pt]
= & \qquad \by{def.~of $\eeet{\cdot}{\eta}$}\displaybreak[0]\\[2pt]
& (\ctert{k} + \rt)\subst{x}{E} \displaybreak[0]\\[2pt]
= & \qquad \by{$\ctert{k}\subst{x}{E} = \ctert{k}$}\displaybreak[0]\\[2pt]
& \ctert{k} + \rt\subst{x}{E} \displaybreak[0]\\[2pt]
= & \qquad \by{def.~of $\eeet{\cdot}{\upsilon}$}\displaybreak[0]\\[2pt]
& \ctert{k} + \eeet{\Ass{x}{E}}{\upsilon}\!(\rt) 
\end{array}
\]

\medskip
\noindent \emph{Procedure Call:}
\[
\begin{array}{c@{\:\:} l@{} }
& \eeet{\Call{\PName}}{\eta_1}\!(\ctert{k} + \rt) \displaybreak[0]\\[2pt]
= & \qquad \by{def.~of $\eeet{\cdot}{\eta}$}\displaybreak[0]\\[2pt]
& \eta(\ctert{k} + \rt) \displaybreak[0]\\[2pt]
= & \qquad \by{$\eta$ constant separable into $\upsilon$}\displaybreak[0]\\[2pt]
& \ctert{k} + \upsilon(\rt) \displaybreak[0]\\[2pt]
= & \qquad \by{def.~of $\eeet{\cdot}{\upsilon}$}\displaybreak[0]\\[2pt]
& \ctert{k} + \eeet{\Call{\PName}}{\upsilon}\!(\rt) 
\end{array}
\]

\medskip
\noindent \emph{Probabilistic Choice:} 
\[
\begin{array}{c@{\:\:} l@{} }
& \eeet{\PChoice{c_1}{p}{c_2}}{\eta}\!(\ctert{k} + \rt) \displaybreak[0]\\[2pt]
= & \qquad \by{def.~of $\eeet{\cdot}{\eta}$}\displaybreak[0]\\[2pt]
& p \cdot  \eeet{c_1}{\eta}\!(\ctert{k} + \rt) + (1{-}p) \cdot
  \eeet{c_2}{\eta}\!(\ctert{k} + \rt) \displaybreak[0]\\[2pt]
= & \qquad \by{I.H.~on $c_1$, $c_2$}\displaybreak[0]\\[2pt]
& p \cdot  \bigl( \ctert{k} + \eeet{c_1}{\upsilon}\!(\rt) \bigr) + (1{-}p) \cdot
 \bigl(\ctert{k} + \eeet{c_2}{\upsilon}\!(\rt) \bigr) \displaybreak[0]\\[2pt] 
= & \qquad \by{algebra}\displaybreak[0]\\[2pt]
& \ctert{k} +  p \cdot  \eeet{c_1}{\upsilon}\!(\rt) + (1{-}p) \cdot
 \eeet{c_2}{\upsilon}\!(\rt) \displaybreak[0]\\[2pt] 
= & \qquad \by{def.~of $\eeet{\cdot}{\upsilon}$}\displaybreak[0]\\[2pt]
& \ctert{k} + \eeet{\PChoice{c_1}{p}{c_2}}{\upsilon}\!(\rt) 
\end{array}
\]

\medskip
\noindent \emph{Conditional Branching:} analogous to the case of probabilistic choice.

\medskip
\noindent \emph{Sequential Composition:}
\begin{align*}
\begin{array}{c@{\:\:} l@{} }
& \eeet{c_1;c_2}{\eta_1}\!(\ctert{k} + \rt) \displaybreak[0]\\[2pt]
= & \qquad \by{def.~of $\eeet{\cdot}{\eta}$}\displaybreak[0]\\[2pt]
& \eeet{c_1}{\eta}\!\bigl(\eeet{c_2}{\eta} \!(\ctert{k} + \rt)\bigr)  \displaybreak[0]\\[2pt]
= & \qquad \by{I.H.~on $c_2$}\displaybreak[0]\\[2pt]
& \eeet{c_1}{\eta}\!\bigl(\ctert{k} + \eeet{c_2}{\upsilon}\!(\rt) \bigr)
  \displaybreak[0]\\[2pt]
= & \qquad \by{I.H.~on $c_1$}\displaybreak[0]\\[2pt]
& \ctert{k} + \eeet{c_1}{\upsilon}\!\bigl(\eeet{c_2}{\upsilon}\!(\rt) \bigr)
  \displaybreak[0]\\[2pt]
= & \qquad \by{def.~of $\eeet{\cdot}{\upsilon}$}\displaybreak[0]\\[2pt]
& \ctert{k} + \eeet{c_1; c_2}{\upsilon}\!(\rt) 
\end{array}\\[-\normalbaselineskip]\tag*{\qedhere}
\end{align*}
  
\end{proof}

\begin{lemma}\label{thm:fix-point-compos}
  Let $(\mathcal{D}_1, \leq_1)$,  $(\mathcal{D}_2, \leq_2)$ and  $(\mathcal{D},
  \leq)$ be upper $\omega$--cpos with botom elements $\bot_1$, $\bot_2$ and $\bot$,
  respectively. Moreover let $F_1 \colon \mathcal{D}_1 \To \mathcal{D}_1$, $F_2
  \colon \mathcal{D}_2 \To \mathcal{D}_2$, $f_1 \colon \mathcal{D}_1 \To
  \mathcal{D}$, $f_2 \colon \mathcal{D}_2 \To \mathcal{D}$ be upper continuous
  and $h_1, h_2 \colon  \mathcal{D} \To \mathcal{D}$. If
  \begin{enumerate}
  \item \label{it-1} $\forall d_1 \mydot f_1(F_1(d_1)) \leq h_1(f_1(d_1))$ and  $\forall d_2
    \mydot f_2(F_2(d_2)) \leq h_2(f_2(d_2))$,
  \item \label{it-2} $f_1(\bot_1) \leq f_2 (\lfp{\!}{F_2})$ and $f_2(\bot_2) \leq f_1
    (\lfp{\!}{F_1})$, and
  \item \label{it-3} $h_1(f_2(\lfp{\!}{F_2})) \leq f_2(\lfp{\!}{F_2})$ and $h_2(f_1(\lfp{\!}{F_1})) \leq f_1(\lfp{\!}{F_1})$,
  \end{enumerate}
then
\[
f_1(\lfp{\!}{F_1}) \:=\: f_2(\lfp{\!}{F_2})~.
\]
\end{lemma}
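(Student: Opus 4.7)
The plan is to use Kleene's Fixed Point Theorem to express each least fixed point as the supremum of an iteration sequence starting from the bottom element, then push $f_1$ and $f_2$ through these suprema using their continuity, and finally compare the two iteration sequences componentwise via an induction that uses conditions \ref{it-1}, \ref{it-2} and \ref{it-3}. By symmetry in the hypotheses, one direction of the desired equality will suffice, the other being proved identically; antisymmetry of $\leq$ then finishes the proof.

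Concretely, I would first observe that since $F_1$ is upper continuous on the $\omega$--cpo $(\mathcal{D}_1, \leq_1)$ with bottom $\bot_1$, Kleene gives $\lfp{\!}{F_1} = \sup_n F_1^n(\bot_1)$, and analogously for $F_2$. Using the upper continuity of $f_1$ and $f_2$ one then obtains $f_1(\lfp{\!}{F_1}) = \sup_n f_1(F_1^n(\bot_1))$ and similarly for $f_2$. Thus it suffices to show that $f_1(F_1^n(\bot_1)) \leq f_2(\lfp{\!}{F_2})$ for every $n$, and dually $f_2(F_2^n(\bot_2)) \leq f_1(\lfp{\!}{F_1})$, as taking suprema on the left then yields both inequalities.

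The induction on $n$ proceeds as follows. For the base case, $f_1(F_1^0(\bot_1)) = f_1(\bot_1) \leq f_2(\lfp{\!}{F_2})$ is exactly the first half of condition \ref{it-2}. For the inductive step, assuming $f_1(F_1^n(\bot_1)) \leq f_2(\lfp{\!}{F_2})$, condition \ref{it-1} gives
\[
 f_1(F_1^{n+1}(\bot_1)) \;=\; f_1(F_1(F_1^n(\bot_1))) \;\leq\; h_1(f_1(F_1^n(\bot_1))).
\]
Applying monotonicity of $h_1$ (implicit from the intended use of this lemma; the analogous role is played by $h_2$) to the induction hypothesis yields $h_1(f_1(F_1^n(\bot_1))) \leq h_1(f_2(\lfp{\!}{F_2}))$, and this in turn is $\leq f_2(\lfp{\!}{F_2})$ by the first half of condition \ref{it-3}. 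Chaining these three inequalities closes the step. The symmetric argument using the other halves of \ref{it-2} and \ref{it-3} (and condition \ref{it-1} for $F_2$) gives the reverse inequality.

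The main obstacle, and the only subtle point, is the use of monotonicity of $h_1$ (resp.\ $h_2$) in the inductive step: without it one cannot carry the induction hypothesis through $h_1$. In the applications of this lemma within the paper the maps $h_i$ will always be obtained from continuous expectation or runtime transformers and are therefore monotone, so this is not a real gap; alternatively, one can strengthen condition \ref{it-3} to a closure condition $h_1(x) \leq f_2(\lfp{\!}{F_2})$ whenever $x \leq f_2(\lfp{\!}{F_2})$, which avoids invoking monotonicity explicitly. Everything else amounts to bookkeeping with Kleene iteration and continuity, both already available from Lemma \ref{thm:wp-basic-prop} and the $\omega$--cpo structure recalled at the start of the appendix.
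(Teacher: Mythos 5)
Your proof is correct and follows essentially the same route as the paper's: Kleene iteration for both least fixed points, continuity of $f_1,f_2$ to push them through the suprema, and a componentwise induction on $n$ whose base case is condition (2) and whose step chains condition (1), monotonicity of $h_i$, and condition (3). You are also right that monotonicity of $h_1,h_2$ is not among the stated hypotheses yet is indispensable in the inductive step --- the paper's own proof silently invokes ``monot.~of $h_1$'' at exactly the same point, so your observation identifies a small omission in the lemma's statement rather than a gap in your argument.
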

\begin{proof}

\begin{align*}
\begin{array}{c@{\:\:} l@{} }
& f_1(\lfp{\!}{F_1}) = f_2(\lfp{\!}{F_2}) \displaybreak[0]\\[2pt]
\Leftrightarrow & \qquad \by{"$\leq$" is a partial order over $\mathcal{D}$}\displaybreak[0]\\[2pt]
& f_1(\lfp{\!}{F_1}) \leq f_2(\lfp{\!}{F_2})  \land f_2(\lfp{\!}{F_2})  \leq f_1(\lfp{\!}{F_1}) \displaybreak[0]\\[2pt]
\Leftrightarrow & \qquad \by{Kleene's Fixed Point Theorem, $F_1, F_2$ continuous}\displaybreak[0]\\[2pt]
& f_1\bigl(\sup_n F_1^n(\bot_1)\bigr) \leq f_2(\lfp{\!}{F_2})\\
& \land\: f_2\bigl(\sup_n F_2^n(\bot_2)\bigr)  \leq f_1(\lfp{\!}{F_1}) \displaybreak[0]\\[2pt]
\Leftrightarrow & \qquad \by{$f_1, f_2$ continuous}\displaybreak[0]\\[2pt]
& \sup_n f_1 \bigl( F_1^n(\bot_1)\bigr) \leq f_2(\lfp{\!}{F_2})\\
& \land\: \sup_n f_2 \bigl(F_2^n(\bot_2)\bigr)  \leq f_1(\lfp{\!}{F_1}) \displaybreak[0]\\[2pt]
\Leftarrow & \qquad \by{$\forall n\mydot a_n \leq S \implies \sup_n a_n \leq S$}\displaybreak[0]\\[2pt]
& \forall n\mydot f_1 \bigl( F_1^n(\bot_1)\bigr) \leq f_2(\lfp{\!}{F_2})\\
& \land\: \forall n\mydot f_2 \bigl(F_2^n(\bot_2)\bigr)  \leq f_1(\lfp{\!}{F_1}) \displaybreak[0]\\[2pt]
\end{array}
\end{align*}
We prove the above pair of inequalities by induction on $n$. We exhibit the
details only for the first one; the second one follows a similar argument. The
base case  $f_1 \bigl( F_1^0(\bot_1)\bigr) \leq f_2(\lfp{\!}{F_2})$ follows from
hypothesis \ref{it-2}. For the inductive case $ f_1 \bigl(
F_1^{n+1}(\bot_1)\bigr) \leq f_2(\lfp{\!}{F_2})$ we reason as follows:
\begin{align*}
\begin{array}{c@{\:\:} l@{} }
& f_1 \bigl(F_1^{n+1}(\bot_1)\bigr) \displaybreak[0]\\[2pt]
= & \qquad \by{def.~of $F^{n+1}$}\displaybreak[0]\\[2pt]
& f_1 \bigl(F \bigl(F_1^n(\bot_1)\bigr)\bigr) \displaybreak[0]\\[2pt]
\leq & \qquad \by{hyp.~\ref{it-1}}\displaybreak[0]\\[2pt]
& h_1 \bigl(f_1 \bigl(F_1^n(\bot_1)\bigr)\bigr) \displaybreak[0]\\[2pt]
\leq & \qquad \by{IH, monot.~of $h_1$}\displaybreak[0]\\[2pt]
& h_1 (f_2(\lfp{\!}{F_2})) \displaybreak[0]\\[2pt]
\leq & \qquad \by{hyp.~\ref{it-3}}\displaybreak[0]\\[2pt]
& f_2(\lfp{\!}{F_2})
\end{array} \\[-\normalbaselineskip]\tag*{\qedhere}
\end{align*}
  
\end{proof}

\paragraph{Proof of \autoref{thm:eet-prop}.} 
The proof of all properties proceeds by induction on the program
structure. Except for the case of probabilistic choice and procedure call, all
other programs constructs have already been dealt with in
\cite{Kaminski:arXiv:2016,Kaminski:ETAPS:2016}. For probabilistic choice we
follow the same reasoning as for conditional branches. We are left to analyze
then only the case of procedure calls. For each of the properties we reason as
follows:

\medskip
\noindent \emph{Continuity.} Let
$F(\eta) = \ctertenv{1} \oplus \eeet{\decl(\PName)}{\eta}$. 

\[
\begin{array}{c@{\:\:} l@{} }
& \eetd{\Call{\PName}}{\decl}\!(\sup_n \rt_n) \displaybreak[0]\\[2pt]
= & \qquad \by{\autoref{thm:eet-proc-call-iter}}\displaybreak[0]\\[2pt]
& \sup_m F^m(\bot_\RtEnv) (\sup_n \rt_n) \displaybreak[0]\\[2pt]
= & \qquad \by{$F^m(\bot_\RtEnv)$ continuous; see below}\displaybreak[0]\\[2pt]
& \sup_m \sup_n F^m(\bot_\RtEnv)(\rt_n) \displaybreak[0]\\[2pt]
= & \qquad \by{\autoref{thm:diag-sup-cpo}}\displaybreak[0]\\[2pt]
& \sup_n \sup_m F^m(\bot_\RtEnv)(\rt_n) \displaybreak[0]\\[2pt]
= & \qquad \by{\autoref{thm:eet-proc-call-iter}}\displaybreak[0]\\[2pt]
& \sup_n \eetd{\Call{\PName}}{\decl}\!(\rt_n) \displaybreak[0]\\[2pt]
\end{array}
\]
We are only left to prove that $F^m(\bot_\RtEnv)$ is continuous for all
$m \in \Nats$. We prove this by induction on $m$. The base case is immediate
since $F^0(\bot_\RtEnv) = \bot_\RtEnv$ and $\bot_\RtEnv$ is continuous. For the
inductive case we have $F^{m+1}(\bot_\RtEnv) = F (F^m(\bot_\RtEnv))$. The
continuity of $F^{m+1}(\bot_\RtEnv)$ follows from the I.H. and the fact that 
$F$ preserves continuity, \ie $\eta$ continuous implies $F(\eta)$
continuous (see \autoref{thm:eeet-cont-pres}).

\medskip
\noindent \emph{Propagation of constants.} By letting $F(\eta) = \ctertenv{1} \oplus
\eeet{\decl(\PName)}{\eta}$ we can recast the property as $\lfp{\!}{F}
(\ctert{k} + \rt) = \ctert{k} + \lfp{\!}{F}(\rt)$, or equivalently, as 
\[
\bigl( \lambda \eta^\star \!\mydot \lambda \rt^\star \!\mydot \eta^\star (\ctert{k} +
\rt^\star)  \bigr) (\lfp{\!}{F}) \:=\: \bigl( \lambda \eta^\star \!\mydot \lambda
\rt^\star \!\mydot \ctert{k} + \eta^\star (\rt^\star)  \bigr) (\lfp{\!}{F})~.
\] 
To prove this equation, we apply \autoref{thm:fix-point-compos} with
instantiations
\begin{align*}
  F_1 &= F_2 = F\\
 f_1 &= \lambda \eta^\star \!\mydot \lambda \rt^\star \!\mydot \eta^\star (\ctert{k} +
\rt^\star) \\
f_2 &=  \lambda \eta^\star \!\mydot \lambda
\rt^\star \!\mydot \ctert{k} + \eta^\star (\rt^\star) \\
h_1 &= \lambda \eta^\star \!\mydot \lambda \rt^\star \!\mydot \ctert{1} +
      \eeet{\decl(\PName)}{\lambda t' \mydot \eta^\star(t' - \ctert{k})}
      (\ctert{k} + \rt^\star)\\
h_2 &= \lambda \eta^\star \!\mydot \lambda \rt^\star \!\mydot \ctert{k} + \ctert{1} +
      \eeet{\decl(\PName)}{\lambda t' \mydot \eta^\star(t') - \ctert{k}}
      (\rt^\star )
\end{align*}
and underlying $\omega$-cpos
$(\mathcal{D}_1,\leq_1) = (\mathcal{D}_2,\leq_2) = (\mathcal{D},\leq) =
(\RtEnv, \sqsubseteq)$
and botom elements $\bot_1 = \bot_2 = \bot = \bot_\RtEnv$. The application of
\autoref{thm:fix-point-compos} requires the continuity of $F$ which follows from
\autoref{thm:eeet-cont-env}, the continuity of $f_1$ and $f2$, which holds
because runtime environments are continuous by definition, and finally the
monotonicity of $h_1$ and $h_2$. This latter fact, together with the fact that
$h_1$ and $h_2$ are effectively well--defined (\ie have type
$\RtEnv \To \RtEnv$) can be proved with an inductive argument (on the structure
of $\decl(\PName)$).

We are left to discharge hypotheses \ref{it-1}--\ref{it-3} of
\autoref{thm:fix-point-compos}. A simple unfolding of the involved functions
yields $f_1(F(\eta)) \sqsubseteq  h_1(f_1(\eta))$ and  $f_2(F(\eta)) \sqsubseteq
h_2(f_2(\eta))$ for all $\eta \in \RtEnv$; this establishes hypothesis
\ref{it-1}. As for hypothesis \ref{it-2}, $f_1(\bot_\RtEnv) \sqsubseteq f_2
(\lfp{\!}{F})$ holds because $f_1(\bot_\RtEnv) = \bot_\RtEnv$ and
$f_2(\bot_\RtEnv) \sqsubseteq f_1(\lfp{\!}{F})$ reduces to $\ctert{k} \preceq
\eetd{\Call{\PName}}{\decl}\!(\ctert{k} + t)$, which holds in view of the
monotonicity of transformer \eetsymbol\ and
\autoref{thm:ert-of-constant}. Finally, to discharge hypothesis \ref{it-3} we
reason as follows:

\begin{align*}
\begin{array}{c@{\:\:} l@{} }
& h_1(f_2(\lfp{\!}{F}))(t) \preceq f_2(\lfp{\!}{F})(t) \displaybreak[0]\\[2pt]
\Leftrightarrow & \qquad \by{def.~of $h_1,f_2, F$; let $\eta (t') = \ctert{k} +
    \eetd{\Call{\PName}}{\decl}\!(t' {-} \ctert{k})$}\displaybreak[0]\\[2pt]
& \ctert{1} + \eeet{\decl(\PName)}{\eta}(\ctert{k} + \rt) \preceq \ctert{k} +
  \eetd{\Call{\PName}}{\decl}\!(\rt)\displaybreak[0]\\[2pt]
\Leftrightarrow & \qquad \by{$\eta$ is constant separable into
                  $\eetd{\Call{\PName}}{\decl}$; \autoref{thm:cte-sep-rtenv}
             }\displaybreak[0]\\[2pt]
& \ctert{1} + \ctert{k} + \eeet{\decl(\PName)}{\eetd{\Call{\PName}}{\decl}} (t)
  \preceq \ctert{k} + \eetd{\Call{\PName}}{\decl}\!(\rt)\displaybreak[0]\\[2pt]
\Leftrightarrow & \qquad \by{def.~of $F$}\displaybreak[0]\\[2pt]
& \ctert{k} + F (\eetd{\Call{\PName}}{\decl}) (t)
  \preceq \ctert{k} + \eetd{\Call{\PName}}{\decl}\!(\rt)\displaybreak[0]\\[2pt]
\Leftrightarrow & \qquad \by{def.~of $\eetsymbol$}\displaybreak[0]\\[2pt]
& \ctert{k} + F (\lfp{\!}{F}) (t)
  \preceq \ctert{k} + \lfp{\!}{F}(\rt)\displaybreak[0]\\[2pt]
\Leftrightarrow & \qquad \by{def.~of $\lfpsymbol$}\displaybreak[0]\\[2pt]
& \ctert{k} + \lfp{\!}{F} (t)
  \preceq \ctert{k} + \lfp{\!}{F}(\rt)\displaybreak[0]\\[2pt]
\Leftarrow & \qquad \by{"$\preceq$" is a partial order}\displaybreak[0]\\[2pt]
& \true
\end{array} 
\end{align*}

\begin{align*}
\begin{array}{c@{\:\:} l@{} }
& h_2(f_1(\lfp{\!}{F}))(t) \preceq f_1(\lfp{\!}{F})(t) \displaybreak[0]\\[2pt]
\Leftrightarrow & \qquad \by{def.~of $h_2,f_1, F$; let $\upsilon (t') =
                  \eetd{\Call{\PName}}{\decl}\!(t' {+} \ctert{k}) - \ctert{k}$}\displaybreak[0]\\[2pt]
& \ctert{k} + \ctert{1} + \eeet{\decl(\PName)}{\upsilon}(\rt) \preceq 
  \eetd{\Call{\PName}}{\decl}\!(\ctert{k} + \rt)\displaybreak[0]\\[2pt]
\Leftrightarrow & \qquad \by{$\eetd{\Call{\PName}}{\decl}$ is constant separable into
                $\upsilon$; \autoref{thm:cte-sep-rtenv}
             }\displaybreak[0]\\[2pt]
& \ctert{k} + \ctert{1} + \bigl(
  \eeet{\decl(\PName)}{\eetd{\Call{\PName}}{\decl}} (\ctert{k} {+} \rt) -
  \ctert{k} \bigr)\displaybreak[0]\\[2pt]
& \preceq \eetd{\Call{\PName}}{\decl}\!(\ctert{k} + \rt)\displaybreak[0]\\[2pt]
\Leftrightarrow & \qquad \by{algebra; def.~of $F$}\displaybreak[0]\\[2pt]
& F (\eetd{\Call{\PName}}{\decl}) (\ctert{k} + \rt)
  \preceq \eetd{\Call{\PName}}{\decl}\!(\ctert{k} + \rt)\displaybreak[0]\\[2pt]
\Leftrightarrow & \qquad \by{def.~of $\eetsymbol$}\displaybreak[0]\\[2pt]
& F (\lfp{\!}{F}) (\ctert{k} + \rt)
  \preceq \lfp{\!}{F}(\ctert{k} +\rt)\displaybreak[0]\\[2pt]
\Leftrightarrow & \qquad \by{def.~of $\lfpsymbol$}\displaybreak[0]\\[2pt]
& \lfp{\!}{F} (\ctert{k} + \rt)
  \preceq \lfp{\!}{F}(\ctert{k} +\rt)\displaybreak[0]\\[2pt]
\Leftarrow & \qquad \by{"$\preceq$" is a partial order}\displaybreak[0]\\[2pt]
& \true
\end{array} 
\end{align*}

\medskip
\noindent \emph{Preservation of infinity.} By the monotonicity of
$\eetd{c}{\decl}$ and \autoref{thm:ert-of-constant}, we have 
\[
\eetd{c}{\decl}\!(\CteFun{\infty}) \succeq \ctert{k} \quad \forall k \in
\PosReals~,
\]
which itself entails $\eetd{c}{\decl}\!(\CteFun{\infty}) = \CteFun{\infty}$. 

\subsection{Relation between Transformers \boldeetsymbol\ and \boldwpsymbol}
\label{sec:eet-wp}
To establish \autoref{thm:eet-wp} we make use of a subsidiary result. This
result relies on the notion of \emph{separable} runtime environment. We
say that a runtime environment $\eta$ is \emph{separable} into runtimes
environments $\eta_1$ and $\eta_2$ iff we have $\eta(\rt_1 + \rt_2) =
\eta_1(\rt_1) + \eta_2(\rt_2)$ for every any two runtimes $\rt_1$ and $\rt_2$.

\begin{lemma}
\label{thm:eeet-ewp-sep}
 For every command $c$ and runtime environment $\eta$ separable into $\eta_1$ and
 $\eta_2$,
\[
\eeet{c}{\eta} \!(\rt_1 + \rt_2) \:=\: \eeet{c}{\eta_1} \!(\rt_1) +
\ewp{c}{\eta_2} \!(\rt_2)~.
\]
\end{lemma}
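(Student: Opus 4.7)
The plan is to induct on the structure of $c$, mirroring closely the argument for \autoref{thm:eet-wp} (which is essentially the special case obtained by choosing $\eta = \eta_1 = \eetd{\Call{\PName}}{\decl}$ and $\eta_2 = \wpd{\Call{\PName}}{\decl}$ and reading the result off against a declaration). The whole point of the separability hypothesis on $\eta$ is to make the procedure call case go through: unfolding $\eeet{\Call{\PName}}{\eta}(\rt_1 + \rt_2) = \eta(\rt_1 + \rt_2)$ and applying separability yields $\eta_1(\rt_1) + \eta_2(\rt_2)$, which is precisely $\eeet{\Call{\PName}}{\eta_1}(\rt_1) + \ewp{\Call{\PName}}{\eta_2}(\rt_2)$ by the respective definitions.

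The base cases $\Skip$, $\Ass{x}{E}$ and $\Abort$ are immediate from the rules in \autoref{fig:eet} and \autoref{fig:ewp-sem}. For $\Skip$, both sides evaluate to $\ctert{1} + \rt_1 + \rt_2$ after trivial regrouping. For assignment, I would use that substitution distributes over sums, so $(\ctert{1} + \rt_1 + \rt_2)\subst{x}{E} = (\ctert{1} + \rt_1)\subst{x}{E} + \rt_2\subst{x}{E}$. For $\Abort$, both sides collapse to $\CteFun{0}$, since $\ewp{\Abort}{\eta_2}(\rt_2) = \CteFun{0}$ by definition.

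For the compound constructs $\Cond{G}{c_1}{c_2}$ and $\PChoice{c_1}{p}{c_2}$, I would apply the IH to each branch and regroup using distributivity of $\cdot$ over $+$ to separate the $\rt_1$-- and $\rt_2$--contributions. The slightly delicate case is sequential composition: starting from
\[
\eeet{c_1; c_2}{\eta}(\rt_1 + \rt_2) \:=\: \eeet{c_1}{\eta}\bigl(\eeet{c_2}{\eta}(\rt_1 + \rt_2)\bigr),
\]
applying the IH on $c_2$ rewrites this as $\eeet{c_1}{\eta}\bigl(\eeet{c_2}{\eta_1}(\rt_1) + \ewp{c_2}{\eta_2}(\rt_2)\bigr)$. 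I would then apply the IH on $c_1$ to the pair $\bigl(\eeet{c_2}{\eta_1}(\rt_1),\, \ewp{c_2}{\eta_2}(\rt_2)\bigr)$, obtaining $\eeet{c_1}{\eta_1}\bigl(\eeet{c_2}{\eta_1}(\rt_1)\bigr) + \ewp{c_1}{\eta_2}\bigl(\ewp{c_2}{\eta_2}(\rt_2)\bigr)$, which is $\eeet{c_1; c_2}{\eta_1}(\rt_1) + \ewp{c_1; c_2}{\eta_2}(\rt_2)$ by definition.

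The main obstacle is really only conceptual and sits in the sequential composition case: one must notice that even though the two runtimes fed into $\eeet{c_1}{\eta}$ after invoking the IH on $c_2$ are no longer the $\rt_1, \rt_2$ we started with, the IH on $c_1$ still applies because the separability hypothesis on $\eta$ is quantified over \emph{all} pairs of runtimes, not just the original ones. Apart from this observation, the lemma is pure structural bookkeeping enabled by the separability of $\eta$.
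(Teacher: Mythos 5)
Your proposal is correct and matches the paper's proof essentially step for step: a structural induction where the base cases are immediate from the transformer definitions, the branching constructs follow by the IH plus regrouping, sequential composition nests the IH on $c_2$ inside the IH on $c_1$ (valid precisely because the statement is quantified over all runtime pairs, as you note), and the procedure-call case is where separability of $\eta$ does its work. No gaps.
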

\begin{proof}
 For the basic
instructions ($\Skip$, $\Abort$ and assignment), the statement follows
immediately from the definitions of $\eetsymbol$ and $\wpsymbol$.  For the
remaining program constructs we reason as follows:

\medskip
\noindent \emph{Conditional Branching:}
\[
\begin{array}{c@{\:\:} l@{} }
& \eeet{\Cond{G}{c_1}{c_2}}{\eta}\!(\rt_1 + \rt_2) \displaybreak[0]\\[2pt]
= & \qquad \by{def.~of $\eeet{\cdot}{\eta}$}\displaybreak[0]\\[2pt]
& \ctert{1} + 
    \ToExp{G} \cdot \eeet{c_1}{\eta} \!(\rt_1 + \rt_2) + \ToExp{\lnot G} \cdot \eeet{c_2}{\eta}\!(\rt_1 + \rt_2) \displaybreak[0]\\[2pt]
= & \qquad \by{I.H. on $c_1$,$c_1$}\displaybreak[0]\\[2pt]
& \ctert{1} + 
    \ToExp{G} \cdot \bigl(\eeet{c_1}{\eta_1} \!(\rt_1) +
\ewp{c_1}{\eta_2} \!(\rt_2) \bigr) \displaybreak[0]\\[2pt]
& \quad\! + \ToExp{\lnot G} \cdot \bigl(\eeet{c_2}{\eta_1} \!(\rt_1) +
\ewp{c_2}{\eta_2} \!(\rt_2) \bigr)\displaybreak[0]\\[2pt]
= & \qquad \by{algebra}\displaybreak[0]\\[2pt]
& \ctert{1} + 
    \ToExp{G} \cdot \eeet{c_1}{\eta_1} \!(\rt_1) + 
 \ToExp{\lnot G} \cdot \eeet{c_2}{\eta_1} \!(\rt_1) \displaybreak[0]\\[2pt]
& \quad\! + \ToExp{G} \cdot \ewp{c_1}{\eta_2} \!(\rt_2) +
\ToExp{\lnot G} \cdot \ewp{c_2}{\eta_2} \!(\rt_2) \displaybreak[0]\\[2pt]
= & \qquad \by{def.~of $\eeet{\cdot}{\eta}, \ewp{\cdot}{\eta}$}\displaybreak[0]\\[2pt]
& \eeet{\Cond{G}{c_1}{c_2}}{\eta_1}\!(\rt_1) \displaybreak[0]\\[2pt]
& \quad\! + \ewp{\Cond{G}{c_1}{c_2}}{\eta_2}\!(\rt_2)
\end{array}
\]

\medskip
\noindent \emph{Probabilistic Choice:} analogous to the conditional branching case.

\medskip
\noindent \emph{Sequential Composition:}
\[
\begin{array}{c@{\:\:} l@{} }
& \eeet{c_1;c_2}{\eta}\!(\rt_1 + \rt_2) \displaybreak[0]\\[2pt]
= & \qquad \by{def.~of $\eeet{\cdot}{\eta}$}\displaybreak[0]\\[2pt]
& \eeet{c_1}{\eta}\!\bigl(\eeet{c_2}{\eta}(\rt_1 + \rt_2)\bigr) \displaybreak[0]\\[2pt]
= & \qquad \by{I.H. on $c_2$}\displaybreak[0]\\[2pt]
& \eeet{c_1}{\eta}\!\bigl(\eeet{c_2}{\eta_1} \!(\rt_1) +
\ewp{c_2}{\eta_2} \!(\rt_2)  \bigr)
  \displaybreak[0]\\[2pt]
= & \qquad \by{I.H. on $c_1$}\displaybreak[0]\\[2pt]
& \eeet{c_1}{\eta_1}\!\bigl(\eeet{c_2}{\eta_1} \!(\rt_1) \bigr)  +  \ewp{c_1}{\eta_2}\!\bigl(\ewp{c_2}{\eta_2} \!(\rt_2)  \bigr)
  \displaybreak[0]\\[2pt]
= & \qquad \by{def.~of $\eeet{\cdot}{\eta}, \ewp{\cdot}{\eta}$}\displaybreak[0]\\[2pt]
& \eeet{c_1; c_2}{\eta}\!(\rt_1) + \ewp{c_1; c_2}{\eta}\!(\rt_2)
\end{array}
\]

\medskip
\noindent \emph{Procedure Call:}
\begin{align*}
\begin{array}{c@{\:\:} l@{} }
& \eeet{\Call{\PName}}{\eta}\!(\rt_1 + \rt_2) \displaybreak[0]\\[2pt]
= & \qquad \by{def.~of $\eeet{\cdot}{\eta}$}\displaybreak[0]\\[2pt]
& \eta(\rt_1 + \rt_2)\displaybreak[0]\\[2pt]
= & \qquad \by{$\eta$ sep.~into $\eta_1, \eta_2$}\displaybreak[0]\\[2pt]
& \eta_1(\rt_1) + \eta_2(\rt_2)\displaybreak[0]\\[2pt]
= & \qquad \by{def.~of $\eeet{\cdot}{\eta}, \ewp{\cdot}{\eta}$}\displaybreak[0]\\[2pt]
& \eeet{\Call{\PName}}{\eta_1}\!(\rt_1) + \ewp{\Call{\PName}}{\eta_2}\!(\rt_2)
  \hfill 
\end{array} \\[-\normalbaselineskip]\tag*{\qedhere}
\end{align*}
\end{proof}

\paragraph{Proof of \autoref{thm:eet-wp}.} 
The proof proceeds by induction on the program structure, but for the inductive
reasoning to work we need to consider a stronger statement, namely
\begin{equation} 
\label{eq:ert-wp}
\eetd{c}{\decl} \!(\rt_1 + \rt_2) \:=\: \eetd{c}{\decl} \!(\rt_1) +
\wpd{c}{\decl} \!(\rt_2)~.
\end{equation}
(We recover the original statement by taking $\rt_1 = \ctert{0}$). For all
program constructs $c$ different from a procedure call, establishing
\autoref{eq:ert-wp} follows exactly the same argument as that used in
\autoref{thm:eeet-ewp-sep} for establishing
\[
\eeet{c}{\eta} \!(\rt_1 + \rt_2) \:=\: \eeet{c}{\eta_1} \!(\rt_1) +
\ewp{c}{\eta_2} \!(\rt_2)
\]
since $\eeet{\cdot}{\eta}$ and $\eet{\cdot}$ obey the same definition rule for
such program constructs. 

For the case of a procedure call we have to prove that
\[
\eetd{\Call{\PName}}{\decl}(\rt_1 + \rt_2) \:=\: \eetd{\Call{\PName}}{\decl}(\rt_1) +  \wpd{\Call{\PName}}{\decl}(\rt_2)~.
\]
Since
\begin{align*}
  \eetd{\Call{\PName}}{\decl} &= \lfp{}{F} \;\text{where }
F(\eta) = \ctertenv{1} \oplus \eeet{\decl(\PName)}{\eta}\\
  \wpd{\Call{\PName}}{\decl} &=  \lfp{}{G} \;\text{where }
G(\theta) = \ewp{\decl(\PName)}{\theta}~,
\end{align*}
and both $F$ and $G$ are continuous (see \autoref{thm:ewp-cont-env} and
\autoref{thm:eeet-cont-env}), by Kleene's Fixed Point Theorem our statement can
be recast as
\
\begin{multline*}
\sup\nolimits_n F^n(\bot_\RtEnv) (\rt_1 + \rt_2) \:=\\
 \sup\nolimits_n F^n(\bot_\RtEnv)(\rt_1) + \sup\nolimits_n G^n(\bot_\SEnv)(\rt_2)~,
\end{multline*}
where $\bot_\SEnv = \lambda f\!:\!  \UEX\mydot \ctert{0}$,
$\bot_\RtEnv = \lambda \rt\!:\!  \Runtimes \mydot \ctert{0}$,
$F^n(\bot_\RtEnv) = F(\ldots F(F(\bot_\RtEnv)) \ldots)$ denotes the repeated
application of $F$ from $\bot_\RtEnv$ $n$ times and likewise for
$G^n(\bot_\SEnv)$. Since a standard property of complete partial orders ensures
that $F^n(\bot_\RtEnv)$ and $G^n(\bot_\SEnv)$ are monotonic \wrt $n$, we can use
the Monotone Sequence Theorem (\autoref{thm:MST}) to replace $\sup_n$ with
$\lim_{n \To \infty}$ in the above equation and this way ``merge'' the two
limits in the RHS into a single limit. The above equation is then entailed by
formula
\[
\forall n\mydot F^n(\bot_\RtEnv) (\rt_1 {+} \rt_2) = F^n(\bot_\RtEnv)(\rt_1) +
G^n(\bot_\SEnv)(\rt_2)~,
\]
which we prove by induction on $n$. The base case is immediate
since for every runtime $\rt$, $F^0(\bot_\RtEnv)(\rt) = G^0(\bot_\SEnv)(\rt) =
\ctert{0}$. For the inductive case we reason as follows:

\begin{align*}
\begin{array}{c@{\:\:} l@{} }
& F^{n+1}(\bot_\RtEnv) (\rt_1 + \rt_2) \:=\: \displaybreak[0]\\[2pt]
& \quad F^{n+1}(\bot_\RtEnv)(\rt_1) + G^{n+1}(\bot_\SEnv)(\rt_2) \displaybreak[0]\\[2pt]
\Leftrightarrow & \qquad \by{def.~$F^{n+1}, G^{n+1}$}\displaybreak[0]\\[2pt]
&  \ctert{1} + \eeet{\decl(\PName)}{F^{n}(\bot_\RtEnv)} (\rt_1 + \rt_2) \:=\displaybreak[0]\\[2pt]
& \quad \ctert{1} + \eeet{\decl(\PName)}{F^{n}(\bot_\RtEnv)}
  (\rt_1)  + \ewp{\decl(\PName)}{G^n(\bot_\SEnv)} (\rt_2) \displaybreak[0]\\[2pt]
\Leftrightarrow & \qquad \by{algebra}\displaybreak[0]\\[2pt]
&  \eeet{\decl(\PName)}{F^{n}(\bot_\RtEnv)} (\rt_1 + \rt_2) \:=\displaybreak[0]\\[2pt]
& \quad \eeet{\decl(\PName)}{F^{n}(\bot_\RtEnv)}
  (\rt_1)  + \ewp{\decl(\PName)}{G^n(\bot_\SEnv)} (\rt_2) \\[5pt]
\Leftarrow & \qquad \by{\autoref{thm:eeet-ewp-sep}, I.H.}\displaybreak[0]\\[2pt]
& \quad \true 
\end{array} \\[-\normalbaselineskip]\tag*{\qedsymbol}
\end{align*}

\subsection{Soundness of Proof Rules for \boldeetsymbol}
\label{sec:eet-rules-sound}

To establish the soundness of rules \lrule{eet-rec} and
\lrule{eet-rec$_\omega$} we make use of the following result.

\begin{fact}
\label{fact:deriv-eeet}
The derivability assertion 
\[
\deriv{\eet{\Call{\PName}}\!(\rt_1) \preceq u_1}{\eet{c}\!(\rt_2) \preceq u_2}
\]
implies that for every runtime environment $\eta$,
\[
\eta(\rt_1) \preceq u_1  \implies  
\eeet{c}{\eta}\!(\rt_2) \preceq u_2~.
\]
The result remain valid if we reverse all inequalities. 
\end{fact}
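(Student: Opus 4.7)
The plan is to read the derivability assertion $\deriv{\eet{\Call{\PName}}(\rt_1) \preceq u_1}{\eet{c}(\rt_2) \preceq u_2}$ as a meta-level proof built from the defining equations of $\eet{\cdot}$ in \autoref{fig:eet} together with standard arithmetic and order-theoretic reasoning, and then to exhibit a uniform translation that replays it step-by-step with $\eet{\cdot}$ replaced by $\eeet{\cdot}{\eta}$ and the assumption $\eet{\Call{\PName}}(\rt_1) \preceq u_1$ replaced by $\eta(\rt_1) \preceq u_1$. Concretely, the translation rewrites every subterm of the form $\eet{c'}(\rt')$ occurring anywhere in the derivation as $\eeet{c'}{\eta}(\rt')$ and leaves every subterm not involving the transformer unchanged.

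Soundness of this translation rests on two immediate observations about \autoref{fig:eet}. First, the defining equations of $\eeet{c'}{\eta}$ coincide syntactically with those of $\eet{c'}$ for every construct $c'$ other than $\Call{\PName}$ (both transformers treat $\Skip$, assignments, $\Abort$, conditionals, probabilistic choice and sequential composition identically), so every derivation step that unfolds one of these equations for $\eet{c'}$ is mirrored verbatim by the corresponding unfolding for $\eeet{c'}{\eta}$. Second, for the exceptional case $c' = \Call{\PName}$, the defining equation collapses to $\eeet{\Call{\PName}}{\eta}(\rt) = \eta(\rt)$, so that the translated premise $\eta(\rt_1) \preceq u_1$ is exactly the analogue $\eeet{\Call{\PName}}{\eta}(\rt_1) \preceq u_1$ of the original assumption and may be used in place of it at every leaf of the derivation where the original assumption was invoked. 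Purely algebraic and order-theoretic manipulations (monotonicity of $+$ and scalar multiplication, substitution of equals for equals, propagation of constants, etc.) mention no transformer and are preserved trivially. A routine structural induction on the derivation tree therefore produces a derivation of $\eeet{c}{\eta}(\rt_2) \preceq u_2$ from $\eta(\rt_1) \preceq u_1$, which yields the desired implication. The dual version with $\succeq$ in place of $\preceq$ is handled identically, since every rule in \autoref{fig:eet} is a defining equality and so compatible with either direction of the pointwise order.

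The main obstacle here is not mathematical but foundational: the paper does not pin down a concrete formal deductive system for $\deriv{\cdot}{\cdot}$, so the fact is best read as a metatheorem whose content is precisely the uniform replaceability described above. To turn the sketch into a fully rigorous proof one would first formalize the deductive apparatus, for example as an equational/inequational calculus whose axioms are the clauses of \autoref{fig:eet} together with the assumption $\eet{\Call{\PName}}(\rt_1) \preceq u_1$, and whose inference rules are congruence, monotonicity of the arithmetic operations, and pointwise ordering of suprema; once this is done, the replacement argument becomes a standard structural induction on proof trees with no further surprises.
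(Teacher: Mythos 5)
The paper itself offers no proof of this Fact: it is asserted without argument in Appendix~\ref{sec:eet-rules-sound}, just as its \wpsymbol--analogue \autoref{fact:deriv-elim} is, so there is no official proof to compare yours against. Your metatheoretic reading---a derivation under the assumption $\eet{\Call{\PName}}\!(\rt_1) \preceq u_1$ is a proof tree built from the clauses of \autoref{fig:eet} plus order--theoretic reasoning, and it can be replayed verbatim after substituting $\eeet{\cdot}{\eta}$ for $\eet{\cdot}$, since the two transformers have identical defining equations on every construct except $\Call{\PName}$, where the substituted hypothesis $\eta(\rt_1) \preceq u_1$ takes over exactly the role of the original assumption---is precisely the intended justification, and your remark that the statement cannot be made fully rigorous until $\derivsymbol$ is fixed as a concrete calculus is a fair observation about the paper.

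One concrete point needs repair, however. You list ``propagation of constants'' among the manipulations that ``mention no transformer and are preserved trivially.'' It is not transformer--free: it is the property $\eet{c'}\!(\ctert{k} + \rt) = \ctert{k} + \eet{c'}\!(\rt)$ of the transformer itself, and its translated form $\eeet{c'}{\eta}\!(\ctert{k} + \rt) = \ctert{k} + \eeet{c'}{\eta}\!(\rt)$ holds only when $\eta$ is constant separable (\autoref{thm:cte-sep-rtenv}). An arbitrary $\eta \in \RtEnv$ need not be, and neither are the environments $\ctertenv{1} \oplus \eta^{\star}$ and $F^{n+1}(\bot_{\RtEnv})$ to which the Fact is actually applied in Appendix~\ref{sec:eet-rules-sound}. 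Since the paper's own worked derivation for $\PName_{\mathsf{fact}}$ in \autoref{sec:eet} invokes propagation of constants across a command containing $\Call{\PName}$, your formalization of $\derivsymbol$ must either exclude such steps from the admissible inference rules or restrict the quantification over $\eta$ to environments for which they remain valid; otherwise the structural induction on proof trees fails at exactly those nodes. With that caveat made explicit, the replacement argument goes through as you describe.
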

\noindent We have already used a similar result for establishing the soundness of rules
\lrule{wp-rec} and \lrule{wp-rec$_\omega$} (even though in that case the
conclusion was stated using $\wp{\cdot}$ instead of $\ewp{\cdot}{\theta}$).

\medskip
\noindent \textbf{Soundness of rule \lrule{eet-rec}.} Let runtime environment
$\eta^\star$ map $\rt$ to $u$ and all other runtimes to (the constant runtime)
$\boldsymbol{\infty}$. The validity of the rule follows from the following
reasoning:

\begin{align*}
\begin{array}{c@{\:\:} l@{} }
& \eetd{\Call{\PName}}{\decl}\!(\rt)
  \preceq \ctert{1} + u\displaybreak[0]\\[2pt]
\Leftrightarrow & \qquad \by{def. $\eetsymbol$ (\autoref{fig:eet})}\displaybreak[0]\\[2pt]
& \lfpsymbol_\sqsubseteq \bigl( \lambda \eta\!:\!\RtEnv \mydot
\ctertenv{1} \oplus \eeet{\decl(\PName)}{\eta} \bigr) (\rt) \:\preceq\: \ctert{1} + u\displaybreak[0]\\[2pt]
\Leftrightarrow & \qquad \by{def. $\eta^\star$,$\sqsubseteq$}\displaybreak[0]\\[2pt]
& \lfpsymbol_\sqsubseteq \bigl( \lambda \eta\!:\!\RtEnv \mydot
\ctertenv{1} \oplus \eeet{\decl(\PName)}{\eta} \bigr) \:\sqsubseteq\: \ctertenv{\ctert{1}} \oplus \eta^\star\displaybreak[0]\\[2pt]
\Leftarrow & \qquad \by{Park's Lemma\footnotemark, \autoref{fact:RtEnv-cpo}, \autoref{thm:eeet-cont-env}}\displaybreak[0]\\[2pt]
& \ctertenv{1} \oplus \eeet{\decl(\PName)}{\ctertenv{\ctert{1}} \oplus \eta^\star} \:\sqsubseteq\: \ctertenv{\ctert{1}} \oplus \eta^\star\displaybreak[0]\\[2pt]
\Leftrightarrow & \qquad \by{def. $\eta^\star$,$\sqsubseteq$}\displaybreak[0]\\[2pt]
& \ctert{1} + \eeet{\decl(\PName)}{\ctertenv{\ctert{1}} \oplus \eta^\star} \! (\rt)
  \:\preceq\: \ctert{1} + u\displaybreak[0]\\[2pt]
\Leftrightarrow & \qquad \by{algebra}\displaybreak[0]\\[2pt]
& \eeet{\decl(\PName)}{\ctertenv{\ctert{1}} \oplus \eta^\star} \!(\rt)
  \:\preceq\: u\displaybreak[0]\\[2pt]
\Leftarrow & \qquad \by{\autoref{fact:deriv-eeet}, rule premise}\displaybreak[0]\\[2pt]
& (\ctertenv{\ctert{1}} \oplus \eta^\star) (\rt) \preceq \ctert{1} + u\displaybreak[0]\\[2pt]
\Leftrightarrow & \qquad \by{def. $\eta^\star$}\displaybreak[0]\\[2pt]
& \true
\end{array} \\[-\normalbaselineskip]\tag*{\qedhere}
\end{align*}
\footnotetext{If $H\colon \mathcal{D}
  \To \mathcal{D}$ is an upper continuous function over an upper $\omega$--cpo
  $(\mathcal{D},\sqsubseteq)$ with bottom element, then $H(d) \sqsubseteq d$
  implies $\lfp{\sqsubseteq}{H} \sqsubseteq d$ for every $d \in
  \mathcal{D}$~\cite{Wechler:MTCS:92}.}

\medskip
\noindent \textbf{Soundness of rule
  \lrule{eet-rec$_\omega$}.} For simplicity, we consider the one--side version
of the rule for obtaining lower bound only:   
\[
\begin{array}{c}
\infrule{ l_0=\CteFun{0} \\
 \ctert{1} + l_n \preceq \eet{\Call{\PName}}\!(\rt) \derivsymbol l_{n+1} \preceq \eet{\decl(\PName)}\!(\rt)} 
{~\ctert{1} {+} \sup_n l_n \preceq \eetd{\Call{\PName}}{\decl}\!(\rt) }\\
\end{array}
\]
\smallskip

\noindent The reasoning for the orignal---two--side rule---is analogous. The validity of
the above rule follows from the following reasoning:
\begin{align*}
\begin{array}{c@{\:\:} l@{} }
& \ctert{1} + \sup_n l_n \preceq \eetd{\Call{\PName}}{\decl}\!(\rt)\displaybreak[0]\\[2pt]
\Leftrightarrow & \qquad \by{def. $\eetsymbol$ (\autoref{fig:eet}), $F(\eta) =\ctertenv{1} \oplus \eeet{\decl(\PName)}{\eta}$}\displaybreak[0]\\[2pt]
&  \ctert{1} + \sup_n l_n \preceq \lfp{\sqsubseteq}{F} (\rt)\displaybreak[0]\\[2pt]
\Leftrightarrow & \qquad \by{Kleene's Fixed Point Thm, \autoref{thm:eeet-cont-env}}\displaybreak[0]\\[2pt]
& \ctert{1} + \sup_n l_n \preceq \sup_n F^n(\bot_\RtEnv) (t)
\end{array} 
\end{align*}

\smallskip
 
\noindent Since $ F^n(\bot_\RtEnv)$ is monotonic \wrt $n$, $\sup_n
F^n(\bot_\RtEnv) = \sup_n F^{n+1}(\bot_\RtEnv)$ and the reasoning continues as follows:
\begin{align*}
\hspace*{-4.5em} \begin{array}{c@{\:\:} l@{} }
\Leftrightarrow & \displaybreak[0]\\[2pt]
&  \ctert{1} + \sup_n l_n \preceq \sup_n F^{n+1}(\bot_\RtEnv) (t)\displaybreak[0]\\[2pt]
\Leftrightarrow & \qquad \by{$k + \sup_n a_n = \sup_n k + a_n$}\displaybreak[0]\\[2pt]
&  \sup_n \ctert{1}  + l_n \preceq \sup_n F^{n+1}(\bot_\RtEnv) (t)\displaybreak[0]\\[2pt]
\Leftarrow & \displaybreak[0]\\[2pt]
& \forall n \mydot \ctert{1}  + l_n \preceq F^{n+1}(\bot_\RtEnv)(\rt)
\end{array} 
\end{align*}

\smallskip
\noindent  We prove the above statement by induction on $n$. For the base case we have
\begin{align*}
\begin{array}{c@{\:\:} l@{} }
&  \ctert{1}  + l_0 \preceq F^1(\bot_\RtEnv)(\rt)\displaybreak[0]\\[2pt]
\Leftrightarrow & \qquad \by{rule premise, def $F^1(\bot_\RtEnv)$}\displaybreak[0]\\[2pt]
& \ctert{1}  \preceq \ctert{1} + \eeet{\decl(\PName)}{\bot_\RtEnv}\!(\rt)\displaybreak[0]\\[2pt]
\Leftarrow & \qquad \by{$\eeet{\decl(\PName)}{\bot_\RtEnv}\!(\rt) \succeq \ctert{0}$} \displaybreak[0]\\[2pt]
& \true
\end{array} 
\end{align*}
For the inductive case we have
\begin{align*}
\begin{array}{c@{\:\:} l@{} }
&  \ctert{1}  + l_{n+1} \preceq F^{n+2}(\bot_\RtEnv)(\rt)\displaybreak[0]\\[2pt]
\Leftrightarrow & \qquad \by{def $F^{n+2}(\bot_\RtEnv)$}\displaybreak[0]\\[2pt]
& \ctert{1} + l_{n+1} \preceq \ctert{1} + \eeet{\decl(\PName)}{F^{n+1}(\bot_\RtEnv)}\!(\rt)\displaybreak[0]\\[2pt]
\Leftrightarrow & \qquad \by{algebra} \displaybreak[0]\\[2pt]
& l_{n+1} \preceq \eeet{\decl(\PName)}{F^{n+1}(\bot_\RtEnv)}\!(\rt)\\[4pt]
\Leftarrow & \qquad \by{\autoref{fact:deriv-eeet}, rule premise}\displaybreak[0]\\[2pt]
& \ctert{1} + l_n \preceq  F^{n+1}(\bot_\RtEnv) (\rt) \displaybreak[0]\\[2pt]
\Leftrightarrow & \qquad \by{$I.H.$} \displaybreak[0]\\[2pt]
& \true
\end{array} 
\end{align*}

\subsection{Operational Model of \pGCL}
\label{sec:operational-model}

\begin{definition}[Pushdown Markov Chains with Rewards]
	A \textbf{pushdown Markov chain with rewards (PRMC)} is a tuple $\boldsymbol{\Automaton} = (Q,\, q_\mathit{init},\, \Gamma,\, \gamma_0,\, \Delta,\, \rewardsymbol)$, where
	\begin{itemize}
		\item $Q$ is a countable set of control states,
		\item $q_{\mathit{init}} \in Q$ is the initial control state,
		\item $\Gamma$ is a finite stack alphabet,
		\item $\gamma_0 \in \Gamma$ is a special bottom--of--stack symbol,
		\item $\Delta \colon Q \times \Gamma \dashrightarrow \SD{Q} \times \big(\Gamma \setminus \{\gamma_0\}\big)^*$ (where $\SD{Q}$ denotes the set of probability distributions over $Q$) is a probabilistic transition relation,
		\item $\rewardsymbol\colon Q \rightarrow \PosReals$ is a reward function.
	\end{itemize}
	A \textbf{path of $\boldsymbol{\Automaton}$} is a finite sequence $\boldsymbol{\rho} = (q_0,\, \beta_0) \stackrel{a_1}{\longrightarrow} \cdots \stackrel{a_k}{\longrightarrow} (q_k,\, \beta_k)$, where $q_0 = q_\mathit{init}$, $\beta_0 = \gamma_0$, and for all $1 \leq i \leq k$ holds $\beta_i \in \gamma_0 \cdot \big(\Gamma\setminus\{\gamma_0\}\big)^*$ and $\exists\, \mu \in \SD{Q}$ and $\exists\,  \gamma_1 \in \Gamma$ and $\exists\, \gamma_2 \in \Gamma \setminus \{\gamma_0\} \cup \{\varepsilon\}$, such that $\Delta (q_{i-1},\, \gamma_1) = (\mu,\, \gamma_2)$ and $\beta_{i-1} = w \cdot \gamma_1$ and $\beta_i = w \cdot \gamma_2$ and $\mu(q_i) = a_i > 0$.
	The \textbf{set of paths in $\boldsymbol{\Automaton}$} is denoted by $\boldsymbol{\PathsP}$.
	In the following let $\rho = (q_0,\, \beta_0) \stackrel{a_1}{\longrightarrow} \cdots \stackrel{a_k}{\longrightarrow} (q_k,\, \beta_k)$.
	The \textbf{probability of $\boldsymbol\rho$} is given by $\boldsymbol{\Prob{\Automaton}{\rho}} = \prod_{i=1}^k a_i$ be a path.
	The \textbf{reward of a path $\boldsymbol\rho$} is given by $\boldsymbol{\rew{\rho}} = \Prob{\Automaton}{\rho} \cdot \sum_{i=0}^k \rew{q_i}$.
	The \textbf{expected reward for reaching a set of target states} $T \subseteq Q$ is given by $\boldsymbol{\ExpRew{\Automaton}{T}} = \sum_{\rho' \in P}\rew{\rho'}$ where $P = \{\rho' \in \PathsP ~|~ \rho' = (q_0,\, \beta_0) \stackrel{a_1}{\longrightarrow} \cdots \stackrel{a_j}{\longrightarrow} (q_j,\, \beta_j),\, q_j \in T,\, \forall \, 0 \leq \ell < j\colon q_\ell \not\in T\}$.
	We stick to the convention that an empty sum yields value zero, i.e.\ in particular $\sum_{\rho' \in \emptyset} \rew{\rho'} = 0$.
\end{definition}
We assume a given labeling for each program $c \in \Cmd$ that specifies the control flow of $c$ as illustrated in \autoref{sec:operational-model}. Let $\LabUsed$ denote the finite set of labels used in a given program $\Cmd$. We assume a special symbol $\Term$ to denote successful termination of a program. Furthermore, we make use of the following operations between statements and labels.
\begin{itemize}
 \item $\Init\colon \Cmd \to \LabUsed$ gives the label corresponding to the beginning of a given program.
 \item $\StmtOfLabelSymbol\colon \LabUsed \to (\Cmd \cup \{ \Term \})$ gives the statement associated to a label used in a program,
 \item $\SuccOneSymbol, \SuccTwoSymbol\colon \LabUsed \to \big(\LabUsed \cup \{ \Term \}\big)$ give the first and second successor label of a given program label. 
       In case $\ell \in \LabUsed$ has no such successor, we define $\SuccOne{\ell} = \Term$ and $\SuccTwo{\ell} = \Term$, respectively.
\end{itemize}
\begin{definition}[Operational PRMCs]
 Let $\sigma_0 \in \State$ and $f \in \UEX$. 
 The \emph{operational PRMC} of program $\prog{c}{\decl}$ starting in initial state $\sigma_0$ with respect to post--expectation $f$ is given by $\OPRMC{c,\decl}{\sigma_0}{f} = (Q,\, q_\mathit{init},\, \Gamma,\allowbreak\, \gamma_0,\allowbreak\, \Delta,\allowbreak\, \rewardsymbol)$ where 
 \begin{itemize}
  \item $Q ~=~ \big\{ (\ell,\sigma) ~|~ \ell \in \LabUsed \cup \{ \Term,\, \Sink \},\, \sigma \in \State \big\}$,
  \item $q_\mathit{init} ~=~ \OpState{\Init(c)}{\sigma_0}$,
  \item $\Gamma ~=~ \LabUsed \cup \{\gamma_0\}$,
  \item $\Delta$ is given by the least partial function satisfying the rules provided in \autoref{fig:operational},
  \item $\rew{\OpState{\Sink}{\sigma}} = f(\sigma)$ for each $\sigma \in \State$  and $\rew{q} = 0$, if $q$ is not of the form $\OpState{\Sink}{\sigma}$.
 \end{itemize}
\end{definition}

\subsection{Soundness of Transformer \boldwpsymbol}
\label{sec:eet-soundness}

\paragraph{Proof of \autoref{thm:correspondance}.}
For simplicity in the remainder we will assume the program declaration $\decl$
fixed and therefore, omit it. Consider first an automaton $\leftsuper{n}{\OPRMC{c}{\sigma}{f}}$ that behaves exactly the same as $\OPRMC{c}{\sigma}{f}$, but counts the number of symbols that currently lie on top of $\gamma_0$ on the stack and which self--loops if that number is exactly $n$ and $\OPRMC{c}{\sigma}{f}$ would perform another push onto the stack.
It is evident that 
 \begin{align*}
    \ExpRew{\OPRMC{c}{\sigma}{f}}{\mathcal{T}} ~~=~~ \sup_{n \in \Nats}~\ExpRew{\leftsuper{n}{\OPRMC{c}{\sigma}{f}}}{\mathcal{T}}~,
 \end{align*}
since $\leftsuper{n}{\OPRMC{c}{\sigma}{f}}$ exhibits a partial behavior of $\OPRMC{c}{\sigma}{f}$ in the sense that every path of $\leftsuper{n}{\OPRMC{c}{\sigma}{f}}$ that reaches $\mathcal T$ is (up to renaming) also a path of $\OPRMC{c}{\sigma}{f}$.
In the other direction, every path $\pi$ of $\OPRMC{c}{\sigma}{f}$ that reaches $\mathcal{T}$ can be implemented with finite stack size.
Therefore, there exists an $n_0 \in \Nats$ such that for all $n \geq n_0$ the path $\pi$ is also a path of $\leftsuper{n}{\OPRMC{c}{\sigma}{f}}$.

Consider now that by \autoref{thm:fp-rec} and its proof we can conclude that
\begin{align*}
\sup_{n \in \Nats} ~ \ewp{c}{\wp{\Calln{\PName}{n}{\decl}}}(f) ~~=~~ \wpd{c}{\decl}(f)~.
\end{align*}
It is therefore only left to show that the missing link
\begin{align*}
\lambda \sigma \mydot \ExpRew{\leftsuper{n}{\OPRMC{c}{\sigma}{f}}}{\mathcal{T}} ~=~ \ewp{c}{\wp{\Calln{\PName}{n}{\decl}}}(f)
\end{align*}
holds for all $n \in \Nats$.
The proof of this equality proceeds by induction on $n$:

\paragraph{The base case $\boldsymbol{n = 0}$:}
We have to show that
\begin{align*}
\lambda \sigma \mydot \ExpRew{\leftsuper{0}{\OPRMC{c}{\sigma}{f}}}{\mathcal{T}} ~=~ \ewp{c}{\wp{\Calln{\PName}{0}{\decl}}}(f)
\end{align*}
holds.
Whenever the automaton $\OPRMC{c}{\sigma}{f}$ would perform the push action associated with a procedure call, the automaton $\leftsuper{0}{\OPRMC{c}{\sigma}{f}}$ immediately self--loops as \emph{no} push to the stack whatsoever is allowed in this restricted automaton.
Therefore, we can syntactically replace every call in $c$ by an $\Abort$ and still obtain the same behavior for the corresponding restricted automaton. Formally,
\begin{align*}
\ExpRew{\leftsuper{0}{\OPRMC{c}{\sigma}{f}}}{\mathcal{T}} ~=~ \ExpRew{\leftsuper{0}{\OPRMC{c \,\subst{\Call{\PName}}{\Abort}}{\sigma}{f}}}{\mathcal{T}}~.
\end{align*}
Now, since syntactically $\Calln{\PName}{0}{\decl} = \Abort$ we have
\begin{align*}
\ewp{c}{\wp{\Calln{\PName}{0}{\decl}}}(f) ~=~ \ewp{c}{\Abort}(f)
\end{align*}
and therefore, it is left to show that
\begin{align*}
\lambda \sigma \mydot \ExpRew{\leftsuper{0}{\OPRMC{c \,\subst{\Call{\PName}}{\Abort}}{\sigma}{f}}}{\mathcal{T}} ~=~ \ewp{c}{\Abort}(f)
\end{align*}
holds.
The proof of this equality proceeds by structural induction on $c$:
For the base cases we have:

\subparagraph{The effectless program $\Skip$:}

On the denotational side, we have
\begin{align*}
\ewp{\Skip}{\Abort}(f)(\sigma) ~=~ f(\sigma)~.
\end{align*}

On the operational side we have $\Skip\subst{\Call{\PName}}{\Abort} = \Skip$. 
Let $\Init(\Skip) = \ell$, $\StmtOfLabel{\ell} = \Skip$, and $\SuccOne{\ell} = \Term$.
The only path of $\leftsuper{0}{\OPRMC{\Skip}{\sigma}{f}}$ reaching $\mathcal{T}$ is 
\begin{align*}
	\rho = \big(\OpState{\ell}{\sigma},\, \gamma_0\big) \stackrel{1}{\longrightarrow} \big(\OpState{\Term}{\sigma},\, \gamma_0\big) \stackrel{1}{\longrightarrow} \big(\OpState{\Sink}{\sigma},\, \gamma_0\big)
\end{align*} and its reward is 
\begin{align*}
1 \cdot 1 \cdot \big(0 + 0 + f(\sigma)\big) ~=~ f(\sigma)~.
\end{align*}
As $\rho$ is the only path reaching $\mathcal{T}$, we have 
\begin{align*}
    \ExpRew{\leftsuper{0}{\OPRMC{\Skip}{\sigma}{f}}}{ \mathcal{T} } ~=~ f(\sigma) ~=~ \ewp{\Skip}{\Abort}(f)(\sigma)~.
\end{align*}

\subparagraph{The diverging program $\Abort$:}

On the denotational side, we have 
\begin{align*}
	\ewp{\Abort}{\Abort}(f)(\sigma) ~=~ \CteFun{0}(\sigma) ~=~ 0~.
\end{align*}

On the operational side we have $\Abort\subst{\Call{\PName}}{\Abort} = \Abort$. 
Let $\Init(\Abort) = \ell$, $\StmtOfLabel{\ell} = \Abort$, and $\SuccOne{\ell} \allowbreak= \Term$.
The paths of $\leftsuper{0}{\OPRMC{\Abort}{\sigma}{f}}$ are all of the form 
\begin{align*}
	\big(\OpState{\ell}{\sigma},\, \gamma_0\big) \stackrel{1}{\longrightarrow}\big(\OpState{\ell}{\sigma},\, \gamma_0\big) \stackrel{1}{\longrightarrow}\big(\OpState{\ell}{\sigma},\, \gamma_0\big) \stackrel{1}{\longrightarrow} \cdots
\end{align*}
and none of them ever reaches $\mathcal{T}$.
Thus the expected reward is an empty sum and we therefore have 
\begin{align*}
    \ExpRew{\leftsuper{0}{\OPRMC{\Abort}{\sigma}{f}}}{ \mathcal{T} } ~=~ 0 ~=~ \ewp{\Abort}{\Abort}(f)(\sigma)~.
\end{align*}

\subparagraph{The assignment $\Ass{x}{E}$:}

On the denotational side, we have 
\begin{align*}
	\ewp{\Ass{x}{E}}{\Abort}(f)(\sigma) ~=~ &f\subst{x}{E}(\sigma)\\
	 ~=~ &f\left(\sigma\big[x \mapsto \sigma(E)\big]\right)~.
\end{align*}

On the operational side we have $\Ass x E\subst{\Call{\PName}}{\Abort}\allowbreak = \Ass x E$. 
Let $\Init(\Ass x E) = \ell$, $\StmtOfLabel{\ell} = \Ass x E$, and $\SuccOne{\ell} = \Term$.
The only path of $\leftsuper{0}{\OPRMC{\Ass x E}{\sigma}{f}}$ reaching $\mathcal{T}$ is 
\begin{align*}
	\rho ~=~ 	&\big(\OpState{\ell}{\sigma},\, \gamma_0\big) \stackrel{1}{\longrightarrow} \big(\OpState{\Term}{\sigma\big[x \mapsto \sigma(E)\big]},\, \gamma_0\big)\\
				& ~ \stackrel{1}{\longrightarrow} \big(\OpState{\Sink}{\sigma\big[x \mapsto \sigma(E)\big]},\, \gamma_0\big)
\end{align*}
 and its reward is 
\begin{align*}
	1 \cdot 1 \cdot \Big(0 + 0 + f\left(\sigma\big[x \mapsto \sigma(E)\big]\right)\Big) = f\left(\sigma\big[x \mapsto \sigma(E)\big]\right)\;.
\end{align*}
As $\rho$ is the only path reaching $\mathcal{T}$, we have 
\begin{align*}
	\ExpRew{\leftsuper{0}{\OPRMC{\Ass x E}{\sigma}{f}}}{\mathcal{T}} ~=~ &f\left(\sigma\big[x \mapsto \sigma(E)\big]\right)\\
	 ~=~ &\ewp{\Ass x E}{\Abort}(f)(\sigma)~.
\end{align*}

\subparagraph{The call $\Call{\PName}$:}

On the denotational side, we have 
\begin{align*}
	\ewp{\Call{\PName}}{\Abort}(f)(\sigma) &~=~ \ewp{\Abort}{\Abort}(f)(\sigma)
\end{align*}

On the operational side we have $\Call{\PName}\subst{\Call{\PName}}{\Abort} = \Abort$. 
Therefore, we can fall back to the base case $\Abort$.

\subparagraph{The inductive hypothesis on $c_1$ and $c_2$:} We now assume that for arbitrary but fixed programs $c_i$, with $i \in \{1,\, 2\}$, holds
\begin{align*}
\lambda \sigma \mydot \ExpRew{\leftsuper{0}{\OPRMC{c_i \,\subst{\Call{\PName}}{\Abort}}{\sigma}{f}}}{\mathcal{T}} ~=~ \ewp{c_i}{\Abort}(f)~.
\end{align*}
We can then proceed with the inductive steps:

\subparagraph{The sequential composition $c_1;c_2$:}

On the denotational side, we have 
\begin{align*}
	\ewp{c_1;c_2}{\Abort}(f)(\sigma) ~=~ \ewp{c_1}{\Abort}\left(\ewp{c_2}{\Abort}(f)\right)(\sigma)~.
\end{align*}

Operationally, we have 
\begin{align*}
	(c_1;c_2)\subst{\Call{\PName}}{\Abort} = c_1\subst{\Call{\PName}}{\Abort};c_2\subst{\Call{\PName}}{\Abort}~.
\end{align*}
We furthermore observe that any path of the automaton 
\begin{align*}
	\leftsuper{0}{\OPRMC{c_1\subst{\Call{\PName}}{\Abort};c_2\subst{\Call{\PName}}{\Abort}}{\sigma}{f}}
\end{align*}
reaching $\mathcal{T}$ is of the form
\begin{align*}
	\rho ~=~ &\big(\OpState{\Init(c_1\subst{\Call{\PName}}{\Abort})}{\sigma},\, \gamma_0\big) \stackrel{a_1}{\longrightarrow} \cdots \\
	& \stackrel{a_k}{\longrightarrow} \big(\OpState{\Term}{\sigma'},\, \gamma_0\big) \\
	& \stackrel{1}{\longrightarrow} \big(\OpState{\Init(c_2\subst{\Call{\PName}}{\Abort})}{\sigma'},\, \gamma_0\big) \xrightarrow{a_{k+2}} \cdots\\
	& \stackrel{a_{k'}}{\longrightarrow} \big(\OpState{\Term}{\sigma''},\, \gamma_0\big)\\
	& \stackrel{1}{\longrightarrow} \big(\OpState{\Sink}{\sigma''},\, \gamma_0\big)
\end{align*}
and any such a path's reward is given by 
\begin{align*}
	&\left.\prod_{i=1}^{k} \middle(a_i\right) \cdot \left(0 + \cdots + 0\vphantom{\left.\prod_{i=k+2}^{k'} \middle(a_i\right)}\right.\\
	&\left.~ + \left.\prod_{i=k+2}^{k'} \middle(a_i\right) \cdot \big(0 + \cdots + 0 + f(\sigma'')\big)\right) \\
	& ~=~ \left.\prod_{i=1}^{k} \middle(a_i\right) \cdot \left.\prod_{i=k+2}^{k'} \middle(a_i\right) \cdot f(\sigma'')
\end{align*}
Next, we observe that for any such path $\rho$ a suffix of it, namely 
\begin{align*}
	&\big(\OpState{\Init(c_2\subst{\Call{\PName}}{\Abort})}{\sigma'},\, \gamma_0\big) \stackrel{a_{k+2}}{\longrightarrow} \cdots \\
	& \stackrel{a_{k'}}{\longrightarrow} \big(\OpState{\Term}{\sigma''},\, \gamma_0\big) \stackrel{1}{\longrightarrow} \big(\OpState{\Sink}{\sigma''},\, \gamma_0\big)~,
\end{align*}
is a path of $\leftsuper{0}{\OPRMC{c_2\subst{\Call{\PName}}{\Abort}}{\sigma'}{f}}$ reaching $\mathcal{T}$ with reward 
\begin{align*}
	&\left.\prod_{i=k+2}^{k'} \middle(a_i\right) \cdot \big(0 + \cdots + 0 + f(\sigma'')\big)\\
	&  ~=~ \left.\prod_{i=k+2}^{k'} \middle(a_i\right) \cdot  f(\sigma'')~.
\end{align*}
Moreover, we can think of the expected reward of 
\begin{align*}
	\leftsuper{0}{\OPRMC{c_2\subst{\Call{\PName}}{\Abort}}{\sigma'}{f}}
\end{align*}
as an expectation
\begin{align*}
    \lambda \sigma'\mydot \ExpRew{\OPRMC{c_2\,\subst{\Call{\PName}}{\Abort}}{\sigma'}{f}}{ \mathcal{T} }~,
\end{align*}
which by the inductive hypothesis on $c_2$ is equal to 
\begin{align*}
	\ewp{c_2}{\Abort}(f)~.
\end{align*} 
Therefore, $\leftsuper{0}{\OPRMC{c_1\subst{\Call{\PName}}{\Abort}}{\sigma}{\ewp{c_2\,\subst{\Call{\PName}}{\Abort}}{\Abort}(f)}}$ and $\leftsuper{0}{\OPRMC{c_1\subst{\Call{\PName}}{\Abort};c_2\subst{\Call{\PName}}{\Abort}}{\sigma}{f}}$ have the same expected reward, as in the former all paths reaching $\mathcal{T}$ have the form
\begin{align*}
	&\big(\OpState{\Init(c_1\subst{\Call{\PName}}{\Abort})}{\sigma},\, \gamma_0\big) \stackrel{a_1}{\longrightarrow} \cdots \\
	& \stackrel{a_k}{\longrightarrow} \big(\OpState{\Term}{\sigma'},\, \gamma_0\big) \stackrel{1}{\longrightarrow} \big(\OpState{\Sink}{\sigma'},\, \gamma_0\big)
\end{align*}
and reward 
\begin{align*}
	&\left.\prod_{i=1}^{k} \middle(a_i\right) \cdot \big(0 {+ \cdots +} 0 + \ewp{c_2\subst{\Call{\PName}}{\Abort}}{\Abort}(f)(\sigma')\big)\\
	&~=~ \ewp{c_2\subst{\Call{\PName}}{\Abort}}{\Abort}(f)(\sigma')~.
\end{align*}
Keeping that in mind and applying the inductive hypothesis to $c_1$ now yields the desired statement:
\begin{align*}
    &\ExpRew{\leftsuper{0}{\OPRMC{c_1\,\subst{\Call{\PName}}{\Abort};c_2\,\subst{\Call{\PName}}{\Abort}}{\sigma}{f}}}{ \mathcal{T} } 	\\
    &~=~ \ExpRew{\OPRMC{c_1\,\subst{\Call{\PName}}{\Abort}}{\sigma}{\wp{c_2\subst{\Call{\PName}}{\Abort}}(f)}}{ \mathcal{T} } \\
    &~=~\ewp{c_1}{\Abort}\left(\ewp{c_2}{\Abort}(f)\right)(\sigma) \tag{I.H.\ on $c_1$}	 \\
	&~=~ \ewp{c_1;c_2}{\Abort}(f)(\sigma)
\end{align*}

\subparagraph{The conditional choice $\Cond{G}{c_1}{c_2}$:}

We distinguish two cases:

In Case 1 we have $\sigma \models G$.
Then on the denotational side, we have 
\begin{align*}
	& \ewp{\Cond{G}{c_1}{c_2}}{\Abort}(f)(\sigma) \\
	& ~=~\big(\ToExp{G} \cdot \ewp{c_1}{\Abort}(f) + \ToExp{\neg G} \cdot \ewp{c_2}{\Abort}(f)\big)(\sigma) \\ 
	& ~=~\ewp{c_1}{\Abort}(f)(\sigma) \tag{$\ToExp{G}(\sigma) = 1$ and $\ToExp{\neg G}(\sigma) = 0$}
\end{align*}

On the operational side we have 
\begin{align*}
	&\big(\Cond{G}{c_1}{c_2}\big)\subst{\Call{\PName}}{\Abort}\\
	& ~=~ \Cond{G}{c_1\subst{\Call{\PName}}{\Abort}}{c_2\subst{\Call{\PName}}{\Abort}}~.
\end{align*}
Regarding the control flow, let the following hold:\\
 $\Init(\Cond{G}{c_1\subst{\Call{\PName}}{\Abort}}{c_2\subst{\Call{\PName}}{\Abort}})\allowbreak = \ell$,\\
 $\StmtOfLabel{\ell} = \Cond{G}{c_1\subst{\Call{\PName}}{\Abort}}{c_2\subst{\Call{\PName}}{\Abort}}$,\\
 $\SuccOne{\ell} = \Init(c_1\subst{\Call{\PName}}{\Abort})$, and finally\\
 $\SuccTwo{\ell} = \Init(c_2\subst{\Call{\PName}}{\Abort})$.
We observe that any path of $\leftsuper{0}{\OPRMC{\Cond{G}{c_1\subst{\Call{\PName}}{\Abort}}{c_2\subst{\Call{\PName}}{\Abort}}}{\sigma}{f}}$ finally reaching $\mathcal{T}$ is of the form
\begin{align*}
	\rho ~=~ &\big(\OpState{\ell}{\sigma},\, \gamma_0\big)\\
	 & \stackrel{1}{\longrightarrow} \big(\OpState{\Init(c_1\subst{\Call{\PName}}{\Abort})}{\sigma},\, \gamma_0\big) \stackrel{a_{2}}{\longrightarrow} \cdots \\
	& \stackrel{a_{k}}{\longrightarrow} \big(\OpState{\Term}{\sigma'},\, \gamma_0\big) \stackrel{1}{\longrightarrow} \big(\OpState{\Sink}{\sigma'},\, \gamma_0\big)
\end{align*}
and it's reward is given by
\begin{align*}
	&1\cdot\left.\prod_{i=2}^{k} \middle(a_i\right) \cdot \left(0 + 0 + \cdots + 0 + f(\sigma')\right)\\
	& ~=~ \left.\prod_{i=2}^{k} \middle(a_i\right) \cdot f(\sigma')~.
\end{align*}
Next, observe that removing from any such path $\rho$ the initial segment, i.e.\ removing $\big(\OpState{\ell}{\sigma},\, \gamma_0\big) \stackrel{1}{\longrightarrow}{}$, gives a path of the form
\begin{align*}
	&\big(\OpState{\Init(c_1\subst{\Call{\PName}}{\Abort})}{\sigma},\, \gamma_0\big) \stackrel{a_{2}}{\longrightarrow} \cdots\\
	& \stackrel{a_{k}}{\longrightarrow} \big(\OpState{\Term}{\sigma'},\, \gamma_0\big)  \stackrel{1}{\longrightarrow} \big(\OpState{\Sink}{\sigma'},\, \gamma_0\big)~,
\end{align*}
which is a path of $\leftsuper{0}{\OPRMC{c_1\subst{\Call{\PName}}{\Abort}}{\sigma}{f}}$ reaching $\mathcal{T}$ with reward 
\begin{align*}
	\left.\prod_{i=2}^{k} \middle(a_i\right) \cdot \big(0 + \cdots + 0 + f(\sigma')\big) ~=~ \left.\prod_{i=2}^{k} \middle(a_i\right) \cdot  f(\sigma')~.
\end{align*}
Notice that if we remove the initial segments from every path in $\Paths{\leftsuper{0}{\OPRMC{\Cond{G}{c_1\subst{\Call{\PName}}{\Abort}}{c_2\subst{\Call{\PName}}{\Abort}}}{\sigma}{f}}}$ we obtain exactly the set $\Paths{\leftsuper{0}{\OPRMC{c_1\subst{\Call{\PName}}{\Abort}}{\sigma}{f}}}$.
Thus
\begin{align*}
	\leftsuper{0}{\OPRMC{\Cond{G}{c_1\subst{\Call{\PName}}{\Abort}}{c_2\subst{\Call{\PName}}{\Abort}}}{\sigma}{f}}
\end{align*}  as well as $\leftsuper{0}{\OPRMC{c_1\subst{\Call{\PName}}{\Abort}}{\sigma}{f}}$ have the same expected reward.
This immediately yields the desired statement:
\begin{align*}
    &\ExpRew{\leftsuper{0}{\OPRMC{\big(\Cond{G}{c_1}{c_2}\big)\subst{\Call{\PName}}{\Abort}}{\sigma}{f}}}{ \mathcal{T} }\\
	& {=}\, \ExpRew{\leftsuper{0}{\OPRMC{\Cond{G}{c_1\subst{\Call{\PName}}{\Abort}}{c_2\subst{\Call{\PName}}{\Abort}}}{\sigma}{f}}}{ \mathcal{T} }\\
	& {=}\, \ExpRew{\leftsuper{0}{\OPRMC{c_1\subst{\Call{\PName}}{\Abort}}{\sigma}{f}}}{ \mathcal{T} }\\
	& {=}\, \ewp{c_1}{\Abort}(f)(\sigma)\tag{I.H.\ on $c_1$}\\
	& {=}\, \ewp{\Cond{G}{c_1}{c_2}}{\Abort}(f)(\sigma)
\end{align*}
The reasoning for Case 2, i.e. $\sigma \not\models G$, is completely analogous using the inductive hypothesis on $c_2$

\subparagraph{The probabilistic choice $\PChoice{c_1}{p}{c_2}$:} 

On the denotational side, we have 
\begin{align*}
	& \ewp{\PChoice{c_1}{p}{c_2}}{\Abort}(f)(\sigma) \\
	& ~=~\big(p \cdot \ewp{c_1}{\Abort}(f) + (1-p) \cdot \ewp{c_2}{\Abort}(f)\big)(\sigma)\\
	& ~=~p \cdot \ewp{c_1}{\Abort}(f)(\sigma) + (1-p) \cdot \ewp{c_2}{\Abort}(f)(\sigma)
\end{align*}

On the operational side we have
\begin{align*}
	&\big(\PChoice{c_1}{p}{c_2}\big)\subst{\Call{\PName}}{\Abort} \\
	&~=~ \PChoice{c_1\subst{\Call{\PName}}{\Abort}}{p}{c_2\subst{\Call{\PName}}{\Abort}}
\end{align*}
Let $\Init(\PChoice{c_1\subst{\Call{\PName}}{\Abort}}{p}{c_2\subst{\Call{\PName}}{\Abort}})\allowbreak = \ell$, $\StmtOfLabel{\ell} = \PChoice{c_1\subst{\Call{\PName}}{\Abort}}{p}{c_2\subst{\Call{\PName}}{\Abort}}$, let\\
$\SuccOne{\ell} = \Init(c_1\subst{\Call{\PName}}{\Abort})$, and let\\
$\SuccTwo{\ell} = \Init(c_2\subst{\Call{\PName}}{\Abort})$.
We observe that any path of $\leftsuper{0}{\OPRMC{\PChoice{c_1\subst{\Call{\PName}}{\Abort}}{p}{c_2\subst{\Call{\PName}}{\Abort}}}{\sigma}{f}}$ reaching $\mathcal{T}$ is either of the form
\begin{align*}
	\rho_1 ~=~ &\big(\OpState{\ell}{\sigma},\, \gamma_0\big) \\
	& \stackrel{p}{\longrightarrow} \big(\OpState{\Init(c_1\subst{\Call{\PName}}{\Abort})}{\sigma},\, \gamma_0\big) \stackrel{a_{2}}{\longrightarrow} \cdots \\
	& \stackrel{a_{k}}{\longrightarrow} \big(\OpState{\Term}{\sigma'},\, \gamma_0\big) \stackrel{1}{\longrightarrow} \big(\OpState{\Sink}{\sigma'},\, \gamma_0\big)
\end{align*}
and it's reward is given by
\begin{align*}
	&p \cdot \left(0 + \left.\prod_{i=2}^{k} \middle(a_i\right) \cdot \big(0 + \cdots + 0 + f(\sigma')\big)\right)\\
	& ~=~ p \cdot \left.\prod_{i=2}^{k} \middle(a_i\right) \cdot  f(\sigma')~,
\end{align*}
or it is of the form
\begin{align*}
	\rho_2 \,=\, &\big(\OpState{\ell}{\sigma},\, \gamma_0\big) \xrightarrow{1-p}\\
	& \big(\OpState{\Init(c_2\subst{\Call{\PName}}{\Abort})}{\sigma},\, \gamma_0\big) \stackrel{a_{2}'}{\longrightarrow} \cdots \\
	& \stackrel{a_{k'}'}{\longrightarrow} \big(\OpState{\Term}{\sigma''},\, \gamma_0\big) \stackrel{1}{\longrightarrow} \big(\OpState{\Sink}{\sigma''},\, \gamma_0\big)
\end{align*}
and it's reward is given by
\begin{align*}
	&(1-p) \cdot \left(0 + \left.\prod_{i=2}^{k'} \middle(a_i'\right) \cdot \big(0 + \cdots + 0 + f(\sigma'')\big)\right)\\
	& ~=~ (1-p) \cdot \left.\prod_{i=2}^{k'} \middle(a_i'\right) \cdot  f(\sigma'')~.
\end{align*}
Notice that there is a possibility to partition the set 
\begin{align*}
	\Paths{\leftsuper{0}{\OPRMC{\PChoice{c_1\subst{\Call{\PName}}{\Abort}}{p}{c_2\subst{\Call{\PName}}{\Abort}}}{\sigma}{f}}}
\end{align*} 
into two sets $P_p$ containing those paths starting with\\
$\big(\OpState{\ell}{\sigma},\, \gamma_0\big) \stackrel{p}{\longrightarrow} \big(\OpState{\Init(c_1\subst{\Call{\PName}}{\Abort})}{\sigma},\, \gamma_0\big)$, and a set $P_{1-p}$ containing those paths starting with $\big(\OpState{\ell}{\sigma},\, \gamma_0\big) \stackrel{1-p}{\longrightarrow} \big(\OpState{\Init(c_2\subst{\Call{\PName}}{\Abort})}{\sigma},\, \gamma_0\big)$.

Next, observe that removing from any path in $P_p$ the initial segment, i.e.\ removing $\big(\OpState{\ell}{\sigma},\, \gamma_0\big) \stackrel{p}{\longrightarrow}{}$, gives exactly the set $\Paths{\leftsuper{0}{\OPRMC{c_1\subst{\Call{\PName}}{\Abort}}{\sigma}{f}}}$.
The paths of $\leftsuper{0}{\OPRMC{c_1\subst{\Call{\PName}}{\Abort}}{\sigma}{f}}$ reaching $\mathcal{T}$ are of the form
\begin{align*}
	&\big(\OpState{\Init(c_1\subst{\Call{\PName}}{\Abort})}{\sigma},\, \gamma_0\big) \stackrel{a_{2}}{\longrightarrow} \cdots \\
	&\stackrel{a_{k}}{\longrightarrow} \big(\OpState{\Term}{\sigma'},\, \gamma_0\big) \stackrel{1}{\longrightarrow} \big(\OpState{\Sink}{\sigma'},\, \gamma_0\big)~,
\end{align*}
and have reward 
\begin{align*}
	&\left.\prod_{i=2}^{k} \middle(a_i\right) \cdot \big(0 + \cdots + 0 + f(\sigma')\big) ~=~ \left.\prod_{i=2}^{k} \middle(a_i\right) \cdot  f(\sigma').
\end{align*}

Dually, removing from any path in $P_{1-p}$ the initial segment, i.e.\ removing $\big(\OpState{\ell}{\sigma},\, \gamma_0\big) \xrightarrow{1-p}{}$, gives exactly the set 
\begin{align*}
\Paths{\leftsuper{0}{\OPRMC{c_2\,\subst{\Call{\PName}}{\Abort}}{\sigma}{f}}}~.
\end{align*}
The paths of $\OPRMC{c_2\subst{\Call{\PName}}{\Abort}}{\sigma}{f}$ reaching $\mathcal{T}$ are of the form
\begin{align*}
	&\big(\OpState{\Init(c_2\subst{\Call{\PName}}{\Abort})}{\sigma},\, \gamma_0\big) \stackrel{a_{2}'}{\longrightarrow} \cdots\\
	& \stackrel{a_{k'}'}{\longrightarrow} \big(\OpState{\Term}{\sigma''},\, \gamma_0\big) \stackrel{1}{\longrightarrow} \big(\OpState{\Sink}{\sigma''},\, \gamma_0\big)~,
\end{align*}
and have reward 
\begin{align*}
	&\left.\prod_{i=2}^{k'} \middle(a_i'\right) \cdot \big(0 + \cdots + 0 + f(\sigma')\big) ~=~ \left.\prod_{i=2}^{k'} \middle(a_i'\right) \cdot  f(\sigma'').
\end{align*}

Since $P_p$ and $P_{1-p}$ was a partition of the path set 
\begin{align*}
\Paths{\leftsuper{0}{\OPRMC{\PChoice{c_1\subst{\Call{\PName}}{\Abort}}{p}{c_2\subst{\Call{\PName}}{\Abort}}}{\sigma}{f}}}~,
\end{align*} we can conclude:
\begin{align*}
	&\ExpRew{\leftsuper{0}{\OPRMC{\PChoice{c_1\subst{\Call{\PName}}{\Abort}}{p}{c_2\subst{\Call{\PName}}{\Abort}}}{\sigma}{f}}}{\mathcal{T}}\\
	& ~=~ p \cdot \ExpRew{\leftsuper{0}{\OPRMC{c_1\subst{\Call{\PName}}{\Abort}}{\sigma}{f}}}{\mathcal{T}}\\
	&\qquad\qquad + (1-p) \cdot \ExpRew{\leftsuper{0}{\OPRMC{c_2\subst{\Call{\PName}}{\Abort}}{\sigma}{f}}}{\mathcal{T}}\\
	& ~=~ p \cdot \ewp{c_1}{\Abort}(f)(\sigma) + (1-p) \cdot \ewp{c_1}{\Abort}(f)(\sigma)\tag{I.H. on $c_1$ and $c_2$}\\
	& ~=~ \ewp{\PChoice{c_1}{p}{c_2}}{\Abort}(f)(\sigma)
\end{align*}

This ends the proof for the base case of the induction on $n$ and we can now state the inductive hypothesis:

\paragraph{Inductive hypothesis on $\boldsymbol n$:}
We assume that for an arbitrary but fixed $n \in \Nats$ holds
\begin{align*}
\lambda \sigma \mydot \ExpRew{\leftsuper{n}{\OPRMC{c}{\sigma}{f}}}{\mathcal{T}} ~=~ \ewp{c}{\wp{\Calln{\PName}{n}{\decl}}}(f)
\end{align*}
 \emph{for all programs} $c$.
 We can then proceed with the inductive step:

\paragraph{Inductive step $\boldsymbol{n \rightarrow n + 1}$:}
We now have to show that 
\begin{align*}
\lambda \sigma \mydot \ExpRew{\leftsuper{n+1}{\OPRMC{c}{\sigma}{f}}}{\mathcal{T}} ~=~ \ewp{c}{\wp{\Calln{\PName}{n+1}{\decl}}}(f)
\end{align*}
holds assuming the inductive hypothesis on $n$.
The proof of this equality proceeds quite analogously, again by structural induction on $c$:

\subparagraph{The base cases $\Skip$, $\Abort$, $\Ass x E$:}
The proofs for these base cases are completely analogous to the proofs conducted in the base case $n=0$.

\subparagraph{The procedure call $\Call{\PName}$:}
The procedure call is technically a base case in the structural induction on $c$ as it is an atomic statement.
It does, however, require using the inductive hypothesis on $n$.
The proof goes as follows:
By an argument on the transition relation $\Delta$ of $\leftsuper{n+1}{\OPRMC{\Call{\PName}}{\sigma}{f}}$ we see that 
\begin{align*}
	\ExpRew{\leftsuper{n+1}{\OPRMC{\Call{\PName}}{\sigma}{f}}}{\mathcal{T}} ~=~ \ExpRew{\leftsuper{n}{\OPRMC{\decl(\PName)}{\sigma}{f}}}{\mathcal{T}}~.
\end{align*}
To the right hand side, we can apply the inductive hypothesis on $n$ and then obtain the desired result:
\begin{align*}
	&\lambda \sigma \mydot \ExpRew{\leftsuper{n+1}{\OPRMC{\Call{\PName}}{\sigma}{f}}}{\mathcal{T}} \\
	&~=~  \lambda \sigma \mydot \ExpRew{\leftsuper{n}{\OPRMC{\decl(\PName)}{\sigma}{f}}}{\mathcal{T}}\\
	&~=~ \ewp{\decl(\PName)}{\Calln{\PName}{n}{\decl}}(f) \tag{I.H.\ on $n$}\\
	&~=~ \ewp{\Call{\PName}}{\Calln{\PName}{n+1}{\decl}}(f) 
\end{align*}

\subparagraph{Inductive hypothesis and all inductive steps:}
The inductive hypothesis and the proofs for the inductive steps are completely analogous to the inductive hypothesis and the proofs conducted in the base case $n=0$.
Exemplarily, we shall sketch the proof for the sequential composition: 
By a lengthy argument and application of the inductive hypothesis on $c_2$ (completely analog to the base case for $n=0$) one arrives at
\begin{align*}
&\ExpRew{\leftsuper{n+1}{\OPRMC{c_1;c_2}{\sigma}{f}}}{\mathcal{T}} \\
&~=~ \ExpRew{\leftsuper{n+1}{\OPRMC{c_1}{\sigma}{\ewp{c_2}{\wp{\Calln{n+1}{\PName}{\decl}}}}}}{\mathcal{T}}~.
\end{align*}
Applying the inductive hypothesis on $c_1$ then yields the desired result:
\belowdisplayskip=-1\baselineskip
\begin{align*}
&\lambda \sigma \mydot \ExpRew{\leftsuper{n+1}{\OPRMC{c_1}{\sigma}{\ewp{c_2}{\wp{\Calln{n+1}{\PName}{\decl}}}}}}{\mathcal{T}}\\
&~=~ \ewp{c_1}{\wp{\Calln{n+1}{\PName}{\decl}}}\left( \ewp{c_2}{\wp{\Calln{n+1}{\PName}{\decl}}}(f) \right)\\
&~=~ \ewp{c_1; c_2}{\wp{\Calln{n+1}{\PName}{\decl}}}(f)
\end{align*}

\subsection{Case Study}
\label{app:casestudy}
The omitted details for proving the second partial correctness property are provided in \autoref{fig:binarysearch:partial:not-exists}.
\begin{figure}[t]
\begin{align*}
    &\hphantom{\boldsymbol{\colon}}\;\; \text{\footnotesize $\textcolor{gray}{\frac{[\mathit{left} {<} \mathit{right}]}{\mathit{right} {-} \mathit{left} {+} 1} \sum_{i = \mathit{left}}^{\mathit{right}}\left(\!\!\!\!\begin{array}{l}\big[a[\mathit{i}] {<} \mathit{val}\big] {\cdot} g[\mathit{left}/\textsf{min}(i+1, \mathit{right})]\\+ \big[a[\mathit{i}] {>} \mathit{val}\big] {\cdot} g[\mathit{right}/\textsf{max}(i-1, \mathit{left})]\end{array}\!\!\!\!\right)}$}\\[-2pt]
    &\hphantom{\boldsymbol{\colon}}\;\; \text{\footnotesize $\textcolor{gray}{~ + [\mathit{left} = \mathit{right}] \cdot \big[a[\mathit{left}] \neq \mathit{val}\big]}$}\\[-2pt]
\text{\scriptsize 1} &\boldsymbol{\colon}\;\; \boldsymbol{\Ass{\mathit{mid}}{\textbf{\textsf{uniform}}(\mathit{left},\, \mathit{right})};}\\[-2pt]
    &\hphantom{\boldsymbol{\colon}}\;\; \text{\footnotesize $\textcolor{gray}{[\mathit{left} < \mathit{right}]\cdot\Big(\big[a[\mathit{mid}] < \mathit{val}\big] \cdot g[\mathit{left}/\cdots]}$}\\[-2pt]
 &\hphantom{\boldsymbol{\colon}}\;\; \text{\footnotesize $\textcolor{gray}{\qquad + \big[a[\mathit{mid}] > \mathit{val}\big] \cdot g[\mathit{right}/\cdots]}$}\\[-2pt]
    &\hphantom{\boldsymbol{\colon}}\;\; \text{\footnotesize $\textcolor{gray}{~ + [\mathit{left} \geq \mathit{right}] \cdot f}$}\\[-2pt]
\text{\scriptsize 2} &\boldsymbol{\colon}\;\; \boldsymbol{\textbf{\textsf{if}} ~ (\mathit{left} < \mathit{right})\{}\\[-2pt]
    &\hphantom{\boldsymbol{\colon}}\;\; \qquad\text{\footnotesize $\textcolor{gray}{\big[a[\mathit{mid}] {<} \mathit{val}\big] {\cdot} g[\mathit{left}/\cdots] + \big[a[\mathit{mid}] {>} \mathit{val}\big] {\cdot} g[\mathit{right}/\cdots]}$}\\[-2pt]
\text{\scriptsize 3} &\boldsymbol{\colon}\;\; \qquad\boldsymbol{\textbf{\textsf{if}} ~ (a[\mathit{mid}] < \mathit{val})\{}\\[-2pt]
    &\hphantom{\boldsymbol{\colon}}\;\; \qquad\qquad \text{\footnotesize $\textcolor{gray}{g[\mathit{left}/\textsf{min}(\mathit{mid} + 1,\, \mathit{right})]}$}\\[-2pt]
\text{\scriptsize 4} &\boldsymbol{\colon}\;\; \qquad\qquad\boldsymbol{\Ass{\mathit{left}}{\textbf{\textsf{min}}(\mathit{mid} + 1,\, \mathit{right})};}\\[-2pt]
    &\hphantom{\boldsymbol{\colon}}\;\; \qquad\qquad \text{\footnotesize $\textcolor{gray}{g}$}\\[-2pt]
\text{\scriptsize 5} &\boldsymbol{\colon}\;\; \qquad\qquad\boldsymbol{\Call B}\\[-2pt]
    &\hphantom{\boldsymbol{\colon}}\;\; \qquad\qquad \text{\footnotesize $\textcolor{gray}{f}$}\\[-2pt]
\text{\scriptsize 6} &\boldsymbol{\colon}\;\; \qquad\boldsymbol{\} ~ \textbf{\textsf{else}} ~ \{}\\[-2pt]
    &\hphantom{\boldsymbol{\colon}}\;\; \qquad\qquad\text{\footnotesize $\textcolor{gray}{\big[a[\mathit{mid}] > \mathit{val}\big] \cdot g[\mathit{right}/\cdots] + \big[a[\mathit{mid}] < \mathit{val}\big]}$}\\[-2pt]
\text{\scriptsize 7} &\boldsymbol{\colon}\;\; \qquad\qquad\boldsymbol{\textbf{\textsf{if}} ~ (a[\mathit{mid}] > \mathit{val})\{}\\[-2pt]
    &\hphantom{\boldsymbol{\colon}}\;\; \qquad\qquad\qquad \text{\footnotesize $\textcolor{gray}{g[\mathit{right}/\textsf{max}(\mathit{mid} - 1,\, \mathit{left})]}$}\\[-2pt]
\text{\scriptsize 8} &\boldsymbol{\colon}\;\; \qquad\qquad\qquad\boldsymbol{\Ass{\mathit{right}}{\textbf{\textsf{max}}(\mathit{mid} - 1,\, \mathit{left})};}\\[-2pt]
    &\hphantom{\boldsymbol{\colon}}\;\; \qquad\qquad\qquad \text{\footnotesize $\textcolor{gray}{g}$}\\[-2pt]
\text{\scriptsize 9} &\boldsymbol{\colon}\;\; \qquad\qquad\qquad\boldsymbol{\Call B}\\[-2pt]
    &\hphantom{\boldsymbol{\colon}}\;\; \qquad\qquad\qquad \text{\footnotesize $\textcolor{gray}{f}$}\\[-2pt]
\text{\scriptsize 10} &\boldsymbol{\colon}\;\; \qquad\qquad\boldsymbol{\} ~ \textbf{\textsf{else}} ~ \{}~ \text{\footnotesize $\textcolor{gray}{f}$} ~ \boldsymbol{\Skip}~ \text{\footnotesize $\textcolor{gray}{f}$} ~\boldsymbol{\}} ~ \text{\footnotesize $\textcolor{gray}{f}$} \\[-2pt]
\text{\scriptsize 11} &\boldsymbol{\colon}\;\; \qquad\boldsymbol{\}} ~ \text{\footnotesize $\textcolor{gray}{f}$} \\[-2pt]
\text{\scriptsize 12} &\boldsymbol{\colon}\;\; \boldsymbol{\} ~ \textbf{\textsf{else}} ~ \{}~ \text{\footnotesize $\textcolor{gray}{f}$} ~ \boldsymbol{\Skip}~ \text{\footnotesize $\textcolor{gray}{f}$} ~\boldsymbol{\}}~ \text{\footnotesize $\textcolor{gray}{f}$} 
\end{align*}
\caption{Proof that $\Call B$ finds an index at which the value at this position is unequal to $\mathit{val}$ when started in a sorted array $a[\mathit{left}\,..\,\mathit{right}]$ in which the value $\mathit{val}$ does not exist. We write $\text{\footnotesize \textcolor{gray}{$j$}} ~ \boldsymbol{C} ~ \text{\footnotesize \textcolor{gray}{$h$}}$ for $i \preceq \wp{C}(h)$. Recall that $g = [\mathit{left} \leq \mathit{right}] \cdot \big[\textsf{sorted}(\mathit{left},\, \mathit{right})\big] \cdot \big[\forall x \in [\mathit{left},\, \mathit{right}]\colon a[x] \neq \mathit{val}\big]$ and $f = \big[a[\mathit{mid}] \neq \mathit{val}\big]$, and that we assume $g \preceq \wlp{\Call B}(f)$.}
\label{fig:binarysearch:partial:not-exists}
\end{figure}
%
%


\end{document}